\newtheorem{theorem}{Theorem}
\newtheorem{lemma}[theorem]{Lemma}
\theoremstyle{definition}
\newtheorem{definition}[theorem]{Definition}
\DeclareMathOperator*{\E}{E}
\newcommand{\R}{\mathbb R}
\newcommand{\eps}{\varepsilon}
\newcommand{\abs}[1]{\left\lvert {#1} \right\rvert}
\newcommand{\norm}[2]{\left\lVert {#1} \right\rVert_{{#2}}}
\newcommand{\pnorm}[2]{\left\lVert {#1} \right\rVert_{{#2}}}
\newcommand{\ep}[2][]{\E_{{#1}}\!\left[ {#2} \right]}
\newcommand{\epcond}[3][]{\E_{{#1}}\!\left[ {#2} \, \middle| \, {#3} \right]}
\newcommand{\prb}[2][]{\Pr_{#1}\!\left[ #2 \right]}
\newcommand{\indicator}[1]{\left[ {#1} \right]}
\newcommand{\set}[1]{\left\{ {#1} \right\}}
\newcommand{\setbuilder}[2]{\left\{ {#1} \, \middle| \, {#2} \right\}}
\newcommand{\floor}[1]{\left\lfloor {#1} \right\rfloor}
\newcommand{\xor}{\oplus}
\newcommand{\bigxor}{\bigoplus}
\DeclarePairedDelimiterX{\infdivx}[2]{(}{)}{{#1} \,\delimsize\|\, {#2}}
\providecommand*{\cupdot}{%
  \mathbin{%
    \mathpalette\@cupdot{}%
  }%
}
\newcommand*{\@cupdot}[2]{%
  \ooalign{%
    $\m@th#1\cup$\cr
    \hidewidth$\m@th#1\cdot$\hidewidth
  }%
}
\providecommand*{\bigcupdot}{%
  \mathop{%
    \mathpalette\@bigcupdot{}%
  }%
}
\newcommand*{\@bigcupdot}[2]{%
  \ooalign{%
    $\m@th#1\bigcup$\cr
    \hidewidth$\m@th#1\cdot$\hidewidth
  }%
}
\newcommand{\bigmodeproduct}[1]{
  \mathop{
    \mathchoice{\vcenter{\hbox{{\huge $\m@th\mkern-2mu\times\mkern-2mu$}$_{#1}$}}}          
               {\vcenter{\hbox{{\LARGE $\m@th\mkern-2mu\times\mkern-2mu$}$_{#1}$}}}         
               {\vcenter{\hbox{{$\m@th\mkern-2mu\times\mkern-2mu$}$_{#1}$}}}                
               {\vcenter{\hbox{{\footnotesize $\m@th\mkern-2mu\times\mkern-2mu$}$_{#1}$}}}  
  }\displaylimits
}
\newcommand{\restateable}[4][theorem]{
  \ifthenelse{\equal{#1}{theorem}}{
    \begin{theorem}\label{thm:#2}
      {#3}
    \end{theorem}

    \convertchar[e]{#2}{-}{ }
    \capitalizewords[e]{\thestring}
    \noblanks[e]{\thestring}
    \retokenize[q]{\thestring}

    \expandafter\newcommand\csname restateThm\thestring\endcsname{
      {
        \def\thetheorem{\ref{thm:#2}}
        \begin{theorem}
          {#4}
        \end{theorem}
        \addtocounter{theorem}{-1}
      }
    }%
  }{\ifthenelse{\equal{#1}{lemma}}{
    \begin{lemma}\label{lem:#2}
      {#3}
    \end{lemma}

    \convertchar[e]{#2}{-}{ }
    \capitalizewords[e]{\thestring}
    \noblanks[e]{\thestring}
    \retokenize[q]{\thestring}

    \expandafter\newcommand\csname restateLem\thestring\endcsname{
      {
        \def\thelemma{\ref{lem:#2}}
        \begin{lemma}
          {#4}
        \end{lemma}
        \addtocounter{lemma}{-1}
      }
    }%
  }{}}
}
\title{
A Sparse Johnson-Lindenstrauss Transform using Fast Hashing
}
\author{
        Jakob Bæk Tejs Houen\footnote{Research supported by Investigator Grant 16582, Basic Algorithms Research Copenhagen (BARC), from the VILLUM Foundation.}\ \\ \small  BARC, University of Copenhagen \\ \small \texttt{jakob@tejs.dk}
    \and
        Mikkel Thorup$^*$ \\ \small BARC, University of Copenhagen \\ \small \texttt{mikkel2thorup@gmail.com}
}
\date{}
\begin{document}

\setcounter{page}{0}
\maketitle

\begin{abstract}
    The \emph{Sparse Johnson-Lindenstrauss Transform} of Kane and Nelson (SODA 2012) provides a linear dimensionality-reducing map $A \in \R^{m \times u}$ in $\ell_2$ that preserves distances up to distortion of $1 + \eps$ with probability $1 - \delta$, where $m = O(\eps^{-2} \log 1/\delta)$ and each column of $A$ has $O(\eps m)$ non-zero entries.
    The previous analyses of the Sparse Johnson-Lindenstrauss Transform all assumed access to a $\Omega(\log 1/\delta)$-wise independent hash function.
    The main contribution of this paper is a more general analysis of the Sparse Johnson-Lindenstrauss Transform with less assumptions on the hash function.
    We also show that the \emph{Mixed Tabulation hash function} of Dahlgaard, Knudsen, Rotenberg, and Thorup (FOCS 2015) satisfies the conditions of our analysis, thus giving us the first analysis of a Sparse Johnson-Lindenstrauss Transform that works with a practical hash function.
\end{abstract}

\thispagestyle{empty}

\newpage

\section{Introduction}

Dimensionality reduction is an often applied technique to obtain a speedup when working with high dimensional data.
The basic idea is to map a set of points $X \subseteq \R^u$ to a lower dimension while approximately preserving the geometry.
The Johnson-Lindenstrauss lemma~\cite{JL1984} is a foundational result in that regard.

\begin{lemma}[\cite{JL1984}]
    For any $0 < \eps < 1$, integers $n, u$, and $X \subseteq \R^u$ with $\abs{X} = n$, there exists a map $f \colon X  \to \R^m$ with $m = O(\eps^{-2} \log n)$ such that
    \[
        \forall w, w' \in X, \; \abs{\norm{f(w) - f(w')}{2} - \norm{w - w'}{2}} \le \eps \norm{w - w'}{2}  .
    \]
\end{lemma}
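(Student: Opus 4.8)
The plan is to establish the stronger \emph{distributional} Johnson--Lindenstrauss property and then finish with a union bound over pairs. Concretely, I would take $A \in \R^{m \times u}$ to be a random matrix whose entries are i.i.d.\ $N(0, 1/m)$ and set $f(w) = Aw$; since $f$ is linear, $f(w) - f(w') = A(w - w')$, so it is enough to prove that for every fixed $x \in \R^u$
\[
    \prb{\abs{\norm{Ax}{2} - \norm{x}{2}} > \eps \norm{x}{2}} \le \frac{1}{n^2} .
\]
Given this, a union bound over the at most $\binom{n}{2} < n^2$ vectors of the form $x = w - w'$ with $w, w' \in X$ shows that with positive probability the distortion bound holds for all pairs at once, which proves existence of the desired map.

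To prove the per-vector bound, normalise so that $\norm{x}{2} = 1$. Then the coordinates of $Ax$ are i.i.d.\ $N(0, 1/m)$, so $Z := m \norm{Ax}{2}^2$ is a sum of $m$ independent squared standard Gaussians, i.e.\ a chi-squared variable with $m$ degrees of freedom and $\ep{Z} = m$. The heart of the argument is the concentration estimate
\[
    \prb{\abs{Z - m} > \eps m} \le 2 \exp\paren{- c \, \eps^2 m}
\]
valid for $0 < \eps < 1$ and some absolute constant $c > 0$; this is the standard Chernoff / Laplace-transform computation using $\ep{e^{t Y^2}} = (1 - 2t)^{-1/2}$ for a standard Gaussian $Y$ and $t < 1/2$, optimising $t$ separately for the upper and lower tails. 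Finally, since $t \ge 0$ and $0 < \eps < 1$ give $\abs{t^2 - 1} = \abs{t - 1}(t + 1) \ge \abs{t - 1}$, the event $\set{\abs{\norm{Ax}{2} - 1} > \eps}$ is contained in $\set{\abs{\norm{Ax}{2}^2 - 1} > \eps} = \set{\abs{Z - m} > \eps m}$, so the tail bound above also bounds the probability we want.

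Putting the pieces together, choosing $m = \ceil{C \eps^{-2} \log n}$ with $C$ a large enough constant (depending only on $c$) makes $2 \exp(- c \eps^2 m) \le n^{-2}$, and the union bound then completes the proof. I expect the only mildly delicate step to be the chi-squared tail estimate, and even that is routine; in a write-up one could instead simply cite a standard sub-exponential concentration inequality (e.g.\ Laurent--Massart) for $Z$. Everything else — linearity, rescaling, and the union bound — is immediate. I would remark that this existential argument is exactly the classical one, and that the rest of the paper is concerned with \emph{derandomising} and \emph{sparsifying} the map $A$ while keeping the distributional guarantee intact.
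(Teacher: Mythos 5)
Your proposal is correct and follows essentially the same route the paper itself indicates for this cited result: prove the distributional Johnson--Lindenstrauss bound \eqref{eq:distributional-jl} for a random linear map with $m = O(\eps^{-2}\log 1/\delta)$, set $\delta < 1/\binom{n}{2}$, and union bound over the pairs $w - w'$. Your choice of i.i.d.\ Gaussian entries rather than the paper's random projection or Rademacher matrix is an inessential variation, and the chi-squared concentration step you invoke is the standard one.
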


It has been shown in \cite{Alon17,Larsen17} that the target dimension $m$ is optimal for nearly the entire range of $n, u, \eps$. More precisely, for any $n, u, \eps$ there exists a set of points $X \subseteq \R^u$ with $|X| = n$ such that for any map $f \colon X \to \R^m$ where the Euclidean norm is distorted by at most $(1 \pm \eps)$ must have $m = \Omega(\min\set{u, n, \eps^{-2} \log(\eps^2 n)})$.

All known proofs of the Johnson-Lindenstrauss lemma constructs a linear map $f$.
The original proof of Johnson and Lindenstrauss~\cite{JL1984} chose $f(x) = \Pi x$ where $\Pi \in \R^{m \times u}$ is an appropriately scaled orthogonal projection into a random $m$-dimensional subspace.
Another simple construction is to set $f(x) = \frac{1}{\sqrt{m}} A x$ where $A \in \R^{m \times u}$ and each entry is an independent Rademacher variable.\footnote{A Rademacher variables, $X$, is a random variable that is chosen uniformly in $\pm 1$, i.e., $\prb{X = 1} = \prb{X = -1} = \tfrac{1}{2}$.}
In both cases, it can be shown that as long as $m = \Omega(\eps^{-2} \log 1/\delta)$ then
\begin{align}\label{eq:distributional-jl}
    \forall w \in \R^u, \; \prb{\abs{\norm{f(w)}{2}^2 - \norm{w}{2}^2} \ge \eps \norm{w}{2}^2} \le \delta .
\end{align}
The Johnson-Lindenstrauss lemma follows by setting $\delta < 1/\binom{n}{2}$ and taking  $w = z - z'$ for all pairs $z, z' \in X$ together with a union bound.
\eqref{eq:distributional-jl} is also known as the distributional Johnson-Lindenstrauss lemma and it has been shown that the target dimension $m$ is tight, more precisely, $m$ must be at least $\Omega(\min\set{u, \eps^{-2}\log 1/\delta})$~\cite{Jayram13,Kane11}.

\paragraph*{Sparse Johnson-Lindenstrauss Transform.}
One way to speed up the embedding time is replacing the dense $A$ of the above construction by a sparse matrix.
The first progress in that regard came by Achlioptas in~\cite{Achlioptas01} who showed that $A$ can be chosen with i.i.d. entries where $A_{i j} = 0$ with probability $2/3$ and otherwise $A_{i j}$ is chosen uniformly in $\pm \sqrt{\frac{3}{m}}$.
He showed that this construction can achieve the same $m$ as the best analyses of the Johnson-Lindenstrauss lemma.
Hence this achieves essentially a 3x speedup, but the asymptotic embedding time is still $O(m \norm{x}{0})$ where $\norm{x}{0}$ is number of non-zeros of $x$.

Motivated by improving the asymptotic embedding time, Kane and Nelson in~\cite{Kane14}, following the work in~\cite{Dasgupta10,Kane10,Braverman10}, introduced the Sparse Johnson-Lindenstrauss Transform which maps down to essentially optimal dimension $m = O(\eps^{-2} \log n)$ and only has $s = O(\eps^{-1} \log n)$ non-zeros entries per column.
This speeds up the embedding time to $O(\eps^{-1} \log n \norm{x}{0}) = O(\eps m \norm{x}{0})$ thus improving the embedding time by a factor of $\eps^{-1}$.
It nearly matches a sparsity lower bound by Nelson and Nguyen~\cite{Nelson13} who showed that any sparse matrix needs at least $s = \Omega(\eps^{-1} \log( n) / \log (1/\eps))$ non-zeros per column.

\paragraph*{Using Hashing.}
When the input dimension, $u$, is large it is not feasible to store the matrix $A$ explicitly.
Instead, we use a hash function to calculate the non-zero entries of $A$.
Unfortunately, the previous analyses of the Sparse Johnson-Lindenstrauss Transform~\cite{Kane14,Cohen18} assume access to a $\Omega(\log 1/\delta)$-wise independent hash function which is inefficient.
This motivates the natural question:
\begin{center}
    \emph{
        What are the sufficient properties we need of the hash function for a Sparse Johnson-Lindenstrauss Transform to work?
    }
\end{center}
The goal of this work is to make progress on this question.
In particular, we provide a new analysis of a Sparse Johnson-Lindenstrauss Transform with fewer assumptions on the hash function.
This improved analysis allows us to conclude that there exists a Sparse Johnson-Lindenstrauss Transform that uses Mixed Tabulation hashing which is efficient.

\paragraph*{Mixed Tabulation Hashing.}
Before introducing Mixed Tabulation hashing, we will first discuss \emph{Simple Tabulation hashing} which was introduced by Zobrist~\cite{Zobrist70}.
Simple Tabulation hashing takes an integer parameter $c > 1$, and we view a key $x \in [u] = \set{0, \ldots, u - 1}$ as a vector of $c$ characters, $x_0, \ldots, x_{c - 1} \in \Sigma = [u^{1/c}]$.
For each character, we initialize a fully random table $T_i \colon \Sigma \to [2^r]$ and the hash value of $x$ is then calculated as
\[
    h(x) = T_0[x_0] \xor \ldots \xor T_{c - 1}[x_{c - 1}] ,
\]
where $\xor$ is the bitwise XOR-operation.
We say that $h$ is a Simple Tabulation hash function with $c$ characters.

We can now define \emph{Mixed Tabulation hashing} which is a variant of Simple Tabulation hashing that was introduced in~\cite{Dahlgaard15}.
As with Simple Tabulation hashing, Mixed Tabulation hashing takes $c > 1$ as a parameter, and it takes a further integer parameter $d \ge 1$.
Again, we view a key $x \in [u]$ as vector of $c$ characters, $x_0, \ldots, x_{c - 1} \in \Sigma = [u^{1/c}]$.
We then let $h_1 \colon \Sigma^c \to [2^r]$, $h_2 \colon \Sigma^c \to \Sigma^d$, and $h_3 \colon \Sigma^d \to [2^r]$ be independent Simple Tabulation hashing.
Mixed Tabulation hashing is then defined as follows
\[
    h(x) = h_1(x) \xor h_3(h_2(x)) .
\]
We say that $h$ a mixed tabulation hash function with $c$ characters and $d$ derived characters.
We call $h_2(x) \in \Sigma^d$ the \emph{derived} characters.
Mixed Tabulation hashing can be efficiently implemented by storing $h_1$ and $h_2$ as a single table with entries in $[2^r]\times \Sigma^d$, so the whole hash function can be computed with just $c+d$ lookups.

\paragraph*{Our Contributions.}
Our main contribution is a new analysis of a Sparse Johnson-Lindenstrauss Transform that does not rely on the high independence of the hash function.
Instead we show that it suffices that the hash function supports a decoupling-decomposition combined with strong concentration bounds. 

We show that Mixed Tabulation hashing satisfies these conditions.
This gives the first instance of a practical hash function that can support a Sparse Johnson-Lindenstrauss Transform.

\subsection{Sparse Johnson-Lindenstrauss Transform}\label{sec:construction}
As mentioned earlier, the Sparse Johnson-Lindenstrauss Transform was introduced by Kane and Nelson~\cite{Kane14} and they provided two different constructions with the same sparsity.
Later a simpler analysis was given in~\cite{Cohen18} which also generalized the result to a more general class of constructions.
In this paper, we will only focus on one of the constructions which is described below.

Before we discuss the construction of the Sparse Johnson-Lindenstrauss Transform, we will first consider the related CountSketch which was introduced in~\cite{Charikar03} and was analyzed for dimensionality reduction in~\cite{Thorup12}.
In CountSketch, we construct the matrix $A$ as follows: We pick a pairwise independent hash function, $h \colon [u] \to [m]$, and a 4-wise independent sign function $\sigma \colon [u] \to \set{-1, 1}$.
For each $x \in [u]$, we set $A_{h(x), x} = \sigma(x)$ and the rest of the $x$'th column to $0$.
Clearly, this construction has exactly 1 non-zero entry per column.
It was shown in~\cite{Thorup12} that if $m = \Omega(\eps^{-2} \delta^{-1})$ then it satisfies the distributional Johnson-Lindenstrauss lemma, \cref{eq:distributional-jl}.
The result follows by bounding the second moment of $\norm{A x}{2}^2 - \norm{x}{2}^2$ for any $x \in \R^d$ and then apply Chebyshev's inequality.

The bad dependence in the target dimension, $m$, on the failure probability, $\delta$, is because we only use the second moment.
So one might hope that you can improve the dependence by looking at higher moments instead.
Unfortunately, it is not possible to improve the dependence for general $x \in \R^d$, and it is only possible to improve the dependence if $\norm{x}{\infty}^2/\norm{x}{2}^2$ is small.
Precisely, how small $\norm{x}{\infty}^2/\norm{x}{2}^2$ has to be, has been shown in~\cite{Freksen18}.
So to improve the dependence on $\delta$, we need to increase the number of non-zero entries per column.

We are now ready to describe the construction of the Sparse Johnson-Lindenstrauss Transform.
The construction is to concatenate $s$ CountSketch matrices and scale the resulting matrix by $\tfrac{1}{\sqrt{s}}$.
This clearly gives a construction that has $s$ non-zero entries per column and as it has been shown in~\cite{Kane14,Cohen18} if $s = \Omega(\eps^{-1} \log(1/\delta))$ then we can obtain the optimal target dimension $m = O(\eps^{-2} \log(1/\delta))$.
More formally, we construct the matrix $A$ as follows:
\begin{enumerate}
    \item We pick a hash function, $h \colon [s] \times [u] \to [m/s]$ and a sign function $\sigma \colon [s] \times [u] \to \set{-1, 1}$.
    \item For each $x \in [u]$, we set $A_{i \cdot m/s + h(i, x), x} = \tfrac{\sigma(i, x)}{\sqrt{s}}$ for every $i \in [s]$ and the rest of the $x$'th column to $0$.
\end{enumerate}
In the previous analyses~\cite{Kane14,Cohen18}, it was shown that if $h$ and $\sigma$ are $\Omega(\log 1/\delta)$-wise independent then the construction works.
Unfortunately, it is not practical to use a $\Omega(\log 1/\delta)$-wise independent hash function so the goal of this work is to obtain an analysis of a Sparse Johnson-Lindenstrauss Transform with fewer assumptions about the hash function.
In particular, we relax the assumptions of the hash function, $h$, and the sign function, $\sigma$, to just satisfying a decoupling-decomposition and a strong concentration property.
The formal theorem is stated in \Cref{sec:proof_overview}.

We also show that Mixed Tabulation satisfies these properties and thus that the Sparse Johnson-Lindenstrauss Transform can be implemented using Mixed Tabulation.
Let us describe more formally, what we mean by saying that Mixed Tabulation can implement the Sparse Johnson-Lindenstrauss Transform.
We let $h_1 \colon \Sigma^c = [u] \to [m/s]$, $h_2 \colon \Sigma^c \to \Sigma^d$, and $h_3 \colon \Sigma^d \to [m/s]$ be the independent Simple Tabulation hash functions that implement the Mixed Tabulation hash function, $h_1(x) \xor h_3(h_2(x))$.
We then extend it to the domain $[s] \times [u]$ as follows:
\begin{enumerate}
    \item Let $h'_2 \colon [s] \times \Sigma^c \to \Sigma^d$ be defined by $h'_2(i, x) = h_2(x) \xor \underbrace{(i, \ldots, i)}_{d\text{ times}}$, i.e., each derived character gets xor'ed by $i$.
    \item We then define $h \colon [s] \times [u] \to [m/s]$ and $\sigma \colon [s] \times [u] \to \set{-1, 1}$ by $h(i, x) = h_1(x) \xor h_3(h'_2(i, x))$ and $\sigma(i, x) = \sigma_1(x) \cdot \sigma_3(h'_2(i, x))$, where $h_1$ and $h_3$ are the Simple Tabulation hash functions described above, and $\sigma_1 \colon \Sigma^c \to \set{-1, 1}$ and $\sigma_3 \colon \Sigma^d \to \set{-1, 1}$ are independent Simple Tabulation functions.
\end{enumerate}


\subsection{Hashing Speed}
When we use tabulation schemes, it is often as a fast alternative to $\Omega(\log n)$-independent hashing.
Typically, we implement a $q$-independent hash function using a degree $q-1$ polynomial in $O(q)$ time, and Siegel~\cite{siegel04hash} has proved that this is best possible unless we use large space.
More precisely, for some key domain $[u]$, if we want to do $t < q$ memory accesses, then we need space at least $u^{1/t}$.
Thus, if we want higher than constant independence but still constant evaluation times, then we do need space $u^{\Omega(1)}$.
In our application, we have to compute many hash values simultaneously, so an alternative strategy would be to evaluate the polynomial using multi-point evaluation.
This would reduce the time per hash value to $O(\log^2 q)$ but this is still super constant time.

With tabulation hashing, we use tables of size $O(|\Sigma|)$ where $|\Sigma|=u^{1/c}$ and $c=O(1)$.
The table lookups are fast if the tables fit in cache, which is easily the case for 8-bit characters.
In connection with each lookup, we do a small number of very fast AC$^0$ operations: a cast, a bit-wise xor, and a shift.
This is incomparable to polynomial in the sense of fast cache versus multiplications, but the experiments from~\cite[Table 1]{Aamand20} found Simple Tabulation hashing to be faster than evaluation a 2-wise independent polynomial hashing.

Tabulation schemes are most easily compared by the number of lookups.
Storing $h_1$ and $h_2$ in the same table, Mixed Tabulation hashing uses $c+d$ lookups.
With $d=c$, the experiments from~\cite{Aamand20} found Mixed Tabulation hashing to be slightly more than twice as slow as Simple Tabulation hashing, and the experiments from~\cite{Dahlgaard17} found Mixed Tabulation hashing to be about as fast as 3-wise independent polynomial hashing.
This motivates our claim that Mixed Tabulation hashing is practical.

In theory, we could also use a highly independent hash function that uses large space, but we don't know of any efficient construction.
Siegel states about his construction, it is ``far too slow for any practical application''~\cite{siegel04hash}, and while Thorup~\cite{Tho13:simple-simple} has presented a simpler construction than Siegel's, it is still not efficient. 
The experiments in~\cite{Aamand20} found it to be more than an order magnitude slower than Mixed Tabulation hashing.

\section{Related Work}

\paragraph*{Even Sparser Johnson-Lindenstrauss Transforms.}
As touched upon earlier, there is a lower bound by Nelson and Nguyen~\cite{Nelson13} that rules out significant improvements, but never the less there has been research into sparser embedding.
In the extreme, Feature Hashing of~\cite{Weinberger09} considers the case of $s = 1$.
The lower bound excludes Feature Hashing from working for all vectors, but in~\cite{Freksen18} they gave tight bounds for which vectors it works in terms of the measure $\norm{w}{\infty}^2/\norm{w}{2}^2$.
This was later generalized in~\cite{Jagadeesan19} to a complete understanding between the tradeoff between $s$ and the measure $\norm{w}{\infty}^2/\norm{w}{2}^2$.
In this paper, we will only focus on the case $s = \Theta(\eps^{-1} \log 1/\delta)$ and $m = \Theta(\eps^{-2} \log 1/\delta)$

\paragraph*{Fast Johnson-Lindenstrauss Transform.}
Another direction to speed-up the evaluation of Johnson-Lindenstrauss transforms is to exploit dense matrices with fast matrix-vector multiplication.
This was first done by Ailon and Chazelle~\cite{Ailon09} who introduced the Fast Johnson-Lindenstrauss Transform.
Their original construction was recently~\cite{Fandina22B} shown to give an embedding time $O(u \log u + m (\log 1/\delta + \eps \log^2(1/\delta)/\log(1/\eps)) )$.

This has generated a lot follow-up work that has tried to improve the running to a clean $O(u \log u)$.
Some of the work sacrifice the optimal target dimension, $m = O(\eps^{-2} \log 1 / \delta)$, in order to speed-up the construction, and are satisfied with sub-optimal $m = O(\eps^{-2} \log n \log^4 u)$~\cite{Krahmer11}, $m = O(\eps^{-2} \log^3 n)$~\cite{Do09}, $m = O(\eps^{-1} \log^{3/2} n \log^{3/2} u + \eps^{-2} \log n \log^4 u)$~\cite{Krahmer11}, $m = O(\eps^{-2} \log^2 n)$~\cite{Hinrichs11,Vybiral11,Freksen20}, and $m = O(\eps^{-2} \log n \log^2(\log n) \log^3 u)$~\cite{Jain22}.
Another line of progress is to assume that the target dimension, $m$, is substantially smaller then the starting dimension, $u$.
Under the assumption that $m = o(u^{1/2})$ the work in~\cite{Ailon08,Bamberger21} achieves embedding time $O(u \log m)$.
The only construction that for some regimes improves on the original Fast Johnson-Lindenstrauss Transform is the recent analysis~\cite{Jain22} of the Kac Johnson-Lindenstrauss Transform, which uses the Kac random walk~\cite{Kac56}.
They show that it can achieve an embedding time of $O(u \log u + \min\!\set{u \log n, m \log n \log^2(\log n) \log^3 u})$.

\paragraph*{Previous Work on Tabulation Hashing.}
The work by Patrascu and Thorup~\cite{Patrascu12} initiated the study of tabulation based hashing that goes further than what 3-wise independence of constructions would suggest.
A long line of papers have shown tabulation based hashing to work for min-wise hashing~\cite{Patrascu13,Dahlgaard14}, hashing for k-statistics~\cite{Dahlgaard15}, and the number of non-empty-bins~\cite{Aamand19}.
Furthermore, multiple papers have been concerned with showing strong concentration results for tabulation based hashing~\cite{Patrascu12,Patrascu13,Aamand20,Houen22}.
Tabulation based hashing has also been studied experimentally where they have been shown to exhibit great performance~\cite{Dahlgaard17,Aamand20}.

\subsection{Preliminaries}

In this section, we will introduce the notation which will be used throughout the paper.
First we introduce $p$-norms.
\begin{definition}[$p$-norm]
    Let $p \ge 1$ and $X$ be a random variable with $\ep{\abs{X}^p} < \infty$.
    We then define the $p$-norm of $X$ by $\pnorm{X}{p} = \ep{\abs{X}^p}^{1/p}$.
\end{definition}

Throughout the paper, we will repeatedly work with value functions $v \colon U \times [m] \to \R$.
We will allow ourself to sometime view them as vectors, and in particular, we will write
\begin{align*}
    \norm{v}{2} &= \sqrt{\sum_{x \in U} \sum_{j \in [m/s]} v(x, j)^2} , \\
    \norm{v}{\infty} &= \max_{x \in U, j \in [m/s]} \abs{v(x, j)} .
\end{align*}

We will also use the $\Psi_p$-function introduced in~\cite{Houen22}.
\begin{definition}
    For $p \ge 2$ we define the function $\Psi_p \colon \R_+ \times \R_{+} \to \R_{+}$ as follows,
    \begin{align*}
        \Psi_p(M, \sigma^2) = \begin{cases}
            \left(\frac{\sigma^2}{p M^2}\right)^{1/p} M &\text{if $p < \log \frac{p M^2}{\sigma^2}$} \\
            \tfrac{1}{2}\sqrt{p}\sigma &\text{if $p < e^{2} \frac{\sigma^2}{M^2}$} \\
            \frac{p}{e \log \frac{p M^2}{\sigma^2}} M &\text{if $\max\!\set{\log \frac{p M^2}{\sigma^2}, e^{2} \frac{\sigma^2}{M^2}} \le p$}
        \end{cases}
        \; .
    \end{align*}
\end{definition}
It was shown in~\cite{Houen22} that $\Psi_p(1, \lambda)$ is within a constant factor of the $p$-norm of a Poisson distributed random variable with parameter $\lambda$.
They also showed that $\Psi_p(M, \sigma^2)$ can be used to upper bound expressions involving a fully random hash function $h \colon U \to [m]$.
Let $v \colon U \times [m] \to \R$ be a value function then they showed that
\begin{align*}
    \pnorm{\sum_{x \in U} v(x, h(x))} \le C \Psi_p(\norm{v}{\infty}, \norm{v}{2}^2/m) \; ,
\end{align*}
where $C$ is a universal constant.

\section{Overview of the New Analysis}\label{sec:proof_overview}
Our main technical contribution is a new analysis of the Sparse Johnson-Lindenstrauss Transform that relaxes the assumptions on the hash function, $h$.
We show that if $h$ satisfies a decoupling decomposition property and a strong concentration property then we obtain the same bounds for the Sparse Johnson-Lindenstrauss Transform.
Both of these properties are satisfied by $h$ if $h$ is $\Omega(\log 1/\delta)$-wise independent so our assumptions are weaker than those of the previous analyses.

In this section, we will give an informal overview of new approach.
The technical details and the formal statement of the result will be in \Cref{sec:technical}.

In order to describe our approach, we look at the random variable
\begin{align}\label{eq:general-expression}
    Z 
        = \norm{Aw}{2}^2 - 1
        = \frac{1}{s} \sum_{i \in [s]} \sum_{x \neq y \in [u]} \sigma(i, x) \sigma(i, y) \indicator{h(i, x) = h(i, y)} w_x w_y
    .
\end{align}
Here $w \in \R^u$ is a unit vector.
With this notation the goal becomes to bound $\prb{\abs{Z} \ge \eps}$.

The first step in our analysis is that we want to decouple \cref{eq:general-expression}.
Decoupling was also used in one of the proofs in~\cite{Cohen18}, but since we want to prove the result for more general hash functions, we cannot directly use the standard decoupling inequalities.
We will instead assume that our hash function allows a \emph{decoupling-decomposition}.
This will formally be defined in \Cref{sec:technical} and we will for now assume that our hash function allows for the standard decoupling inequality.
If we apply Markov's inequality and a standard decoupling inequality for fully random hashing we obtain the expression.
\begin{align}\begin{split}\label{eq:decoupling-independence}
    \prb{\abs{Z} \ge \eps}
        &\le \eps^{-p} \ep{\abs{Z}^p}
        \\&\le \left(\eps^{-1} \frac{4}{s} \right)^p \ep{\abs{\sum_{i \in [s]} \sum_{x, y \in [u]} \sigma(i, x) \sigma'(i, y) \indicator{h(i, x) = h'(i, y)} w_x w_y}^p}
\end{split}\end{align}
where $(h', \sigma')$ are independent copies of $(h, \sigma)$ and $p \ge 2$.
The power of decoupling stems from the fact that it breaks up some of the dependencies and allows for a simpler analysis.

The goal is now to analyse $\pnorm{\sum_{i \in [s]} \sum_{x, y \in [u]} \sigma(i, x) \sigma'(i, y) \indicator{h(i, x) = h'(i, y)} w_x w_y}{p}$.
This is done by first fixing $(h', \sigma')$ and bounding $\pnorm{\sum_{i \in [s], j \in [m/s]} \sum_{x \in [u]} \sigma(i, x) \indicator{h(i, x) = j} w_x a_{i j}}{p}$ using the randomness of $(h, \sigma)$ where $a_{i j} = \sum_{y \in [u]} \sigma'(i, y) \indicator{h'(i, y) = j} w_y$.
In order to do this, we will assume that the pair $(h, \sigma)$ is \emph{strongly concentrated}.
Again the formal definition of this is postponed to \Cref{sec:technical}, but informally, we say that the pair is strongly concentrated if it has concentration results similar to those of fully random hashing.

We now take the view that $\abs{a_{i j}}$ is the load of the bin $(i, j) \in [s] \times [m/s]$.
The idea is then to split $[s] \times [m/s]$ into heavy and light bins and handle each separately.
We choose a parameter $k$ and let $I$ be the heaviest $k$ bins.
Using the triangle inequality, we then get that
\begin{align*}
    &\pnorm{\sum_{i \in [s], j \in [m/s]} \sum_{x \in [u]} \sigma(i, x) \indicator{h(i, x) = j} w_x a_{i j}}{p}
        \le \pnorm{\sum_{(i, j) \in I} \sum_{x \in [u]} \sigma(i, x) \indicator{h(i, x) = j} w_x a_{i j}}{p}
        \\&\qquad\qquad\qquad+ \pnorm{\sum_{(i, j) \in [s] \times [m/s] \setminus I} \sum_{x \in [u]} \sigma(i, x) \indicator{h(i, x) = j} w_x a_{i j}}{p}
    \; .
\end{align*}

We show that the contribution from the light bins is as if the collisions are independent.
This should be somewhat intuitive since if we only have few collisions in each bin then the collisions behave as if they were independent.
In contrast, we show that the contribution from the heavy bins is dominated by the heaviest bin.
This turns out to be exactly what we need to finish the analysis.

\section{Technical Results}\label{sec:technical}

In this section, we will expand on the description from \Cref{sec:proof_overview} and formalize the ideas.

\paragraph*{Decoupling.}
Ideally, we would like to use the standard decoupling inequality, \cref{eq:decoupling-independence}.
Unfortunately, we cannot expect more general hash functions to support such a clean decoupling.
We therefore introduce the notion of a decoupling-decomposition.

\begin{definition}[Decoupling-decomposition]
    Let $p \ge 2$, $L \ge 1$, and $0 \le \gamma \le 1$.
    We say that a collection of possibly randomized sets, $(U_\alpha)$, is a $(p, L, \gamma)$-\emph{decoupling-decomposition} for a property $P$ of a pair $(h, \sigma)$, if there exist hash functions $h_\alpha \colon [s] \times U_\alpha \to [m/s]$ and sign functions $s \colon [s] \times U_\alpha \to \set{-1, 1}$ for all $\alpha$ such that
    \begin{align}\begin{split}\label{eq:decoupling-decomposition}
        &\prb{\abs{Z} \ge \eps}
        \\&\qquad\le \left(\eps^{-1} \sum_{\alpha} \frac{L}{s} \pnorm{\sum_{i \in [s]}  \sum_{x, y \in U_\alpha} \sigma_\alpha(i, x) \sigma_\alpha'(i, y) \indicator{h_\alpha(i, x) = h_\alpha'(i, y)} w_x w_y}{p} \right)^p + \gamma
    \end{split}\end{align}
    where $(h_\alpha, \sigma_\alpha)$ and $(h'_\alpha, \sigma'_\alpha)$ has the same distribution, and $(h_\alpha, \sigma_\alpha)$ satisfies the property $P$ when conditioned on $(h'_\alpha, \sigma'_\alpha)$ and $U_\alpha$.

\end{definition}

The reader should compare \cref{eq:decoupling-independence} for fully random hashing with \cref{eq:decoupling-decomposition}.
There are 3 main differences between the expressions.
\begin{enumerate}
    \item The first thing to notice is that, in the decoupling-decomposition we sum over different sets $(U_\alpha)$, where this is not needed for fully random hashing.
        We allow the decoupling-decomposition to use a different decoupling on each of the sets $U_\alpha$.
        This is very powerful since general hash functions are not necessarily uniform over the input domain. 
    \item For the decoupling-decomposition, we allow an additive error probability $\gamma$.
        This is useful if the hash function allows for decoupling most of the time except when some unprobable event is happening.
    \item The last difference is that a much larger loss-factor is allowed by the decoupling-decomposition than \cref{eq:decoupling-independence}.
        In the case of fully random hashing, we only lose a factor of $4$ but for more general hash functions this loss might be bigger.
\end{enumerate}

Finally, we note that \cref{eq:decoupling-independence} implies if $(h, \sigma)$ is $2p$-wise independent for an integer $p \ge 2$ then $[u]$ is a decoupling-decomposition of $(h, \sigma)$ for any property $P$ that is satisfied by $(h, \sigma)$.

\paragraph*{Strong Concentration.}
The second property we need is that the hash function is strongly concentrated.

\begin{definition}[Strong concentration]
    Let $h \colon [s] \times U \to [m/s]$ be a hash function and $\sigma \colon [s] \times U \to \set{-1, 1}$ be a sign function.
    We say that the pair $(h, \sigma)$ is $(p, L)$-\emph{strongly-concentrated} if
    \begin{enumerate}
        \item For all value functions, $v \colon [s] \times [m/s] \to \R$, and all vectors, $w \in \R^U$,
        \begin{align}
            \label{eq:sjl-sum-poisson}
            \pnorm{\sum_{i \in [s]} \sum_{x \in U} \sigma(i, x) v(i, h(i, x)) w_x }{p} 
                &\le \Psi_{p}\left( L \norm{v}{\infty}\norm{w}{\infty} , L \frac{s}{m} \norm{v}{2}^2 \norm{w}{2}^2 \right) \; , \\
            \label{eq:sjl-sum-gauss}
            \pnorm{\sum_{i \in [s]} \sum_{x \in U} \sigma(i, x) v(i, h(i, x)) w_x }{p}
                &\le \sqrt{L \frac{p}{\log(m/s)}  \norm{v}{2}^2 \norm{w}{2}^2 } \; .
        \end{align}
        \item For all vectors, $w \in \R^U$,
        \begin{align}
        \label{eq:sjl-square}
            \pnorm{\sum_{i \in [s]} \sum_{j \in [m/s]} \left( \sum_{x \in U} \sigma(i, x) \indicator{h(i, x) = j} w_x \right)^2}{p/2}
                &\le L \max\!\set{s \norm{w}{2}^2, \frac{p}{\log m/s} \norm{w}{2}^2} .
        \end{align}
        \item If $p \le \log m$,
        \begin{align}\label{eq:sjl-max}
            \pnorm{\max_{i \in [s], j \in [m/s]} \abs{\sum_{x \in U} \sigma(i, x) \indicator{h(i, x) = j} w_x }}{p}
                \le e \sqrt{ L \frac{\log m}{\log m/s} } \norm{w}{2} .
        \end{align}
    \end{enumerate}
\end{definition}
We need essentially 3 different properties of our hash function to say that it is strongly concentrated.
\begin{enumerate}
    \item The first property is a concentration result on the random variable $$\sum_{i \in [s]} \sum_{x \in U} \sigma(i, x) v(i, h(i, x)) w_x.$$
        Here we need two different concentration results: The first concentration result, \cref{eq:sjl-sum-poisson}, roughly corresponds to a $p$-norm version of what you would obtain by applying Bennett's inequality to a fully random hash function, while the second concentration result, \cref{eq:sjl-sum-poisson}, corresponds to the best hypercontractive result you can obtain for weighted sums of independent Bernoulli-Rademacher variables with parameter $s/m$.\footnote{A Bernoulli-Rademacher variable with parameter $\alpha$ is random variable, $X \in \set{-1, 0, 1}$, with $\prb{X = 1} = \prb{X = -1} = \alpha/2$ and $\prb{X = 0} = 1 - \alpha$.}
    \item The second property bounds the sum of squares $$W = \sum_{i \in [s]} \sum_{j \in [m/s]} \left( \sum_{x \in U} \sigma(i, x) \indicator{h(i, x) = j} w_x \right)^2.$$
        The condition, \cref{eq:sjl-square}, bounds $\pnorm{W}{p/2}$ by the maximum of two cases.
        The first case corresponds to $\ep{W}$, and the second case is motivated by applying \cref{eq:sjl-sum-gauss} to
        \begin{align*}
            \sup_{\substack{z \in \R^{[s] \times [m/s]},\\ \norm{z}{2} = 1}} \pnorm{\sum_{i \in [s]} \sum_{j \in [m]} \sum_{x \in U} \sigma(i, x) z_{i, h(i, x)} w_x}{p}^2 .
        \end{align*}
        While this at first glance might seem odd, it is roughly the best you can do, since one can show that
        \begin{align*}
            \max\!\set{\ep{W}, \sup_{\substack{z \in \R^{[s] \times [m/s]},\\ \norm{z}{2} = 1}} \pnorm{\sum_{i \in [s]} \sum_{j \in [m]} \sum_{x \in U} \sigma(i, x) z_{i, h(i, x)} w_x}{p}^2}
                \le \pnorm{W}{p/2} .
        \end{align*}
    \item The final property is a bound on the largest coordinate, $\max_{i \in [s], j \in [m/s]} \abs{\sum_{x \in U} \sigma(i, x) \indicator{h(i, x) = j} w_x }$.
        The bound is a natural consequence of \cref{eq:sjl-sum-gauss} for fully random hashing.
        Namely, for fully random hashing we get that
        \begin{align*}
            &\pnorm{\max_{i \in [s], j \in [m/s]} \abs{\sum_{x \in U} \sigma(i, x) \indicator{h(i, x) = j} w_x }}{p}
                \\&\qquad\qquad\le \pnorm{\max_{i \in [s], j \in [m/s]} \abs{\sum_{x \in U} \sigma(i, x) \indicator{h(i, x) = j} w_x }}{\log m}
                \\&\qquad\qquad\le e \max_{i \in [s], j \in [m/s]}\pnorm{ \abs{\sum_{x \in U} \sigma(i, x) \indicator{h(i, x) = j} w_x }}{\log m}
                \\&\qquad\qquad\le e \sqrt{ L \frac{\log m}{\log m/s} } \norm{w}{2} .
        \end{align*}
        This derivation is not true for general hash function, but the hash function can still satisfy \cref{eq:sjl-max}.
\end{enumerate}

The results of~\cite{Houen22} show that if the hash function $h \colon [s] \times U \to [m/s]$ and the sign function $\sigma \colon [s] \times U \to [m/s]$ is $p$-wise independent for an integer $p \ge 2$ then the pair $(h, \sigma)$ is $(p, K)$-strongly-concentrated where $K$ is a universal constant.

\paragraph*{The Main Result.}
We are now ready to state our main result which is a new analysis of a Sparse Johnson-Lindenstrauss Transform that only assumes that the hash function has a decoupling-decomposition for the strong concentration property.

\begin{theorem}\label{thm:main}
    Let $h \colon [s] \times [u] \to [m/s]$ be a hash function and $\sigma \colon [s] \times [u] \to \set{-1, 1}$ be a sign function.
    Furthermore, let $0 < \eps < 1$ and $0 < \delta < 1$ be given, and define $p = \log 1/\delta$.

    Assume that there exists constants $L_1$, $L_2$, $L_3$, and $0 \le \gamma < 1$, that only depends on $(h, \sigma)$ and $p$, such that
    \begin{enumerate}
        \item There exists a $(p, L_1, \gamma)$-decoupling-decomposition, $(U_\alpha)$, for the $(p, L_2)$-strong-concentration property of $(h, \sigma)$
        \item For all vectors $w \in \R^u$, $\sum_\alpha \sum_{x \in U_\alpha} w_x^2 \le L_3 \norm{w}{2}^2$.
        \item $m \ge \left(16 e^7 L_1^2 L_2^3 L_3^2 \right) \cdot \eps^{-2} \log(1/\delta)$.
        \item $s \ge \left(64 e^3 L_1 L_2^{3/2} L_3 \right) \cdot \eps^{-1} \log(1/\delta)$.
    \end{enumerate}

    Then the following is true
    \begin{align*}
        \prb{\abs{Z} \ge \eps} \le \delta + \gamma .
    \end{align*}
\end{theorem}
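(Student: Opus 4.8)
By scale invariance of the statement we may assume $\norm w2 = 1$, so that $Z$ is exactly the random variable of \cref{eq:general-expression}. The plan is to carry out the informal program of \cref{sec:proof_overview} with the choice $p = \log 1/\delta$: start from the $(p,L_1,\gamma)$-decoupling-decomposition $(U_\alpha)$ promised by hypothesis~1, bound each decoupled $p$-norm block on the right-hand side of \cref{eq:decoupling-decomposition}, sum the blocks using hypothesis~2, and check that hypotheses~3 and~4 force $\eps^{-1}\sum_\alpha \frac{L_1}{s}\pnorm{\cdots}{p}\le 1/e$, so that the $p$-th power is at most $e^{-p}=\delta$ and the whole probability is at most $\delta+\gamma$.

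Fix a block $\alpha$ and condition on $(h'_\alpha,\sigma'_\alpha)$ and $U_\alpha$. Write $w^\alpha$ for the restriction of $w$ to $U_\alpha$ and, for $(i,j)\in[s]\times[m/s]$, set $a_{ij} = \sum_{y\in U_\alpha}\sigma'_\alpha(i,y)\indicator{h'_\alpha(i,y)=j}w_y$, a function of the primed copy only. Expanding $\indicator{h_\alpha(i,x)=h'_\alpha(i,y)}$ over the common value $j$ turns the quadratic form inside \cref{eq:decoupling-decomposition} into the linear value-function sum $\sum_{i\in[s]}\sum_{x\in U_\alpha}\sigma_\alpha(i,x)\,a_{i,h_\alpha(i,x)}\,w_x$, i.e. exactly the object controlled by the $(p,L_2)$-strong-concentration of $(h_\alpha,\sigma_\alpha)$ with value function $v(i,j)=a_{ij}$. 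Two statistics of the primed copy drive the whole estimate: the total load $W_\alpha = \sum_{i,j}a_{ij}^2 = \norm v2^2$, whose $p/2$-norm is bounded by \cref{eq:sjl-square} as $L_2\max\{s,\,p/\log(m/s)\}\norm{w^\alpha}{2}^2$ (since $(h'_\alpha,\sigma'_\alpha)$ conditioned on $U_\alpha$ is also $(p,L_2)$-strongly concentrated), and the largest load $\max_{i,j}|a_{ij}|$, whose $p$-norm is bounded by \cref{eq:sjl-max} when $p\le\log m$ (and, when $p>\log m$, by the same expression with $\log m$ replaced by $p$, obtained from \cref{eq:sjl-sum-gauss} and a union bound over the $m$ bins).

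Next, split the bins into heavy and light. Choose a threshold $k$ of order $p$, let $I$ be the set of the $k$ bins with largest $|a_{ij}|$, and split the linear form into its heavy part over $I$ and its light part over the complement, bounding the $p$-norm by the sum of the two. For the light part, apply \cref{eq:sjl-sum-poisson} and \cref{eq:sjl-sum-gauss} with $v$ restricted to $[s]\times[m/s]\setminus I$: there $\norm v\infty$ is the $(k{+}1)$-st largest $|a_{ij}|$, hence at most $\sqrt{W_\alpha/(k{+}1)}$, and $\norm v2^2\le W_\alpha$, so substituting the $W_\alpha$- and $\max$-bounds into the three regimes of $\Psi_p$ and using monotonicity of $\Psi_p$ to pass the outer $p$-norm over the primed copy through, the light contribution shrinks as $k$ grows — this is the ``collisions behave as if independent'' part. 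For the heavy part, Cauchy--Schwarz bounds it by $\bigl(\sum_{(i,j)\in I}a_{ij}^2\bigr)^{1/2}\bigl(\sum_{(i,j)\in I}\bigl(\sum_{x}\sigma_\alpha(i,x)\indicator{h_\alpha(i,x)=j}w_x\bigr)^2\bigr)^{1/2}\le W_\alpha^{1/2}\,\sqrt k\,\max_{i,j}\bigl|\sum_x\sigma_\alpha(i,x)\indicator{h_\alpha(i,x)=j}w_x\bigr|$, whose $p$-norm is handled by \cref{eq:sjl-max} together with the $W_\alpha$-bound; the factor $\sqrt k$ makes this grow with $k$, so ``the heavy bins are dominated by the heaviest bin'' only while $k$ stays polylogarithmic. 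Balancing the two by an appropriate $k\asymp p$, one gets a bound on the $\alpha$-block that is at most a constant times $\tfrac{\eps s}{L_1 L_3}\norm{w^\alpha}{2}^2$ once hypotheses~3 and~4 hold.

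Finally, summing over $\alpha$ and invoking hypothesis~2, $\sum_\alpha\norm{w^\alpha}{2}^2\le L_3\norm w2^2 = L_3$, gives $\eps^{-1}\sum_\alpha\frac{L_1}{s}\pnorm{\cdots}{p}\le 1/e$, and \cref{eq:decoupling-decomposition} yields $\prb{|Z|\ge\eps}\le e^{-p}+\gamma = \delta+\gamma$. I expect the main obstacle to be the heavy/light trade-off against the three-way case split defining $\Psi_p$: one must verify, regime by regime, that with $k\asymp p$ and with $m$, $s$ at the stated thresholds the $1/\sqrt k$ gain in the light part and the $\sqrt k\cdot\sqrt{\log m/\log(m/s)}$ loss in the heavy part (the latter from \cref{eq:sjl-max}) both land below the $\eps$-level simultaneously — this is exactly where the powers $L_1^2L_2^3L_3^2$ and $L_1L_2^{3/2}L_3$ and the explicit constants $16e^7$, $64e^3$ get pinned down. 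The remaining steps are routine bookkeeping of constants.
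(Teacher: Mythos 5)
Your overall architecture is the same as the paper's: decouple via the $(p,L_1,\gamma)$-decomposition, bound each block through a heavy/light bin split using the strong-concentration inequalities, sum the blocks with hypothesis~2, and verify that hypotheses~3--4 push $\eps^{-1}\sum_\alpha\frac{L_1}{s}\pnorm{\cdot}{p}$ below $1/e$ so that the $p$-th power gives $\delta$. The gap lies in the two concrete estimates you use inside a block; both are too lossy by polynomial factors in $1/\eps$, and no choice of the threshold $k$ reconciles them. For the light bins you bound the $(k{+}1)$-st largest load deterministically by $\sqrt{W_\alpha/(k+1)}$; since \cref{eq:sjl-square} only gives $\pnorm{W_\alpha}{p/2}\le L_2\,s\,\norm{w^\alpha}{2}^2$ in the main regime, the first argument of $\Psi_p$ becomes of order $L_2^{3/2}\sqrt{s/k}\,\norm{w^\alpha}{2}^2$, which with $k\asymp p$ and $s\asymp\eps^{-1}p$ exceeds the needed $O(L_2^{3/2})\norm{w^\alpha}{2}^2$ by $\Theta(\eps^{-1/2})$; moreover the $\tfrac12\sqrt{p}\sigma$ branch of $\Psi_p$ requires $p<e^2\sigma^2/M^2$, which here amounts to $m<e^2 s$, so in the regime of interest $\Psi_p$ scales (up to a logarithmic factor) like $pM$, and after multiplying by $\eps^{-1}L_1L_3/s$ the light term stays $\Theta(\eps^{-1/2})$ rather than dropping below a constant; fixing this with your bound would force $k=\Omega(s)$ (up to constants and logs). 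For the heavy bins, your Cauchy--Schwarz bound $\sqrt{W_\alpha}\cdot\sqrt{k}\,\max_{i,j}|\cdot|$ costs about $eL_2\sqrt{sk\log m/\log(m/s)}\,\norm{w^\alpha}{2}^2$, and the requirement $\eps^{-1}(L_1L_3/s)\cdot(\text{this})\le O(1)$ forces $k\lesssim \eps^2 s\asymp\eps p$. So the light part needs $k\gtrsim s$ while the heavy part tolerates only $k\lesssim\eps p$: the window is empty, and "routine bookkeeping of constants" cannot close a gap of size $\mathrm{poly}(1/\eps)$.

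The paper's proof of \Cref{lem:main-technical} avoids both losses, and this is exactly the content your plan is missing. For the heavy part it never multiplies a total-mass bound by a max: it fixes the copy defining the heavy set $I$ with $k=\floor{p/\log(m/p)}$, applies \cref{eq:sjl-sum-gauss} to the other copy to get one factor $\sqrt{Lp/\log(m/s)}\,\norm{w}{2}$, and then bounds $\pnorm{\sqrt{\sum_{(i,j)\in I}v_h(i,j)^2}}{p}$ itself by $36e^3\sqrt{Lp/\log(m/s)}\,\norm{w}{2}$ via a union bound over the $\binom{ms}{k}$ candidate heavy sets (affordable because $\binom{ms}{k}^{1/p}\le e^3$ for this $k$) combined with a $1/4$-net and \cref{eq:sjl-sum-gauss} again; the product is the quadratically better $36e^3L\frac{p}{\log(m/s)}\norm{w}{2}^2$. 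For the light part it bounds the $p$-norm of the $(k{+}1)$-st largest load not by $\sqrt{W/(k+1)}$ but by \cref{eq:sjl-max} when $p<\log m$, and when $p\ge\log m$ by averaging over the top $k{+}1$ bins together with the same subset-union/net/\cref{eq:sjl-sum-gauss} device, giving $O\bigl(\sqrt{Lp/((k{+}1)\log(m/s))}\bigr)\norm{w}{2}=O\bigl(\sqrt{L\log(m/p)/\log(m/s)}\bigr)\norm{w}{2}$, i.e.\ $O(\sqrt{L})\norm{w}{2}$ in the main case split; only with these two sharper inputs (plus the easy case $p/\log(m/s)\ge s$ handled by Cauchy--Schwarz and \cref{eq:sjl-square}) do the stated thresholds on $m$ and $s$ yield $\eps^{-1}\sum_\alpha\frac{L_1}{s}\pnorm{\cdot}{p}\le 1/e$ and hence $\prb{\abs{Z}\ge\eps}\le\delta+\gamma$.
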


As discussed earlier, a fully random hash function satisfies all the property needed of the theorem and thus gives a new analysis of the Sparse Johnson-Lindenstrauss Transform for fully random hashing.
We will also later show that Mixed Tabulation satisfies the assumption of the theorem hence giving the first analysis of a Sparse Johnson-Lindenstrauss Transform with a practical hash function that works.

The main difficulty in the analysis of \Cref{thm:main} is contained in the following technical lemma.
The idea in the proof of \Cref{thm:main} is to use the decoupling-decomposition and apply the following lemma to each part.

\begin{lemma}\label{lem:main-technical}
    Let $h, \overline{h} \colon [s] \times U \to [m/s]$ be hash functions and $\sigma, \overline{\sigma} \colon [s] \times U \to \set{-1, 1}$ be sign functions.
    Let $p \ge 2$ and assume that there exists a constant $L$ such that $(h, \sigma)$ is $(p, L)$-strongly concentrated when conditioning on $(\overline{h}, \overline{\sigma})$, and similarly, $(\overline{h}, \overline{\sigma})$ is $(p, L)$-strongly concentrated when conditioning on $({h}, {\sigma})$.
    Then for all vectors $w \in \R^U$,
    \begin{align*}
        &\pnorm{\sum_{i \in [s]} \sum_{x, y \in U} \sigma(i, x) \overline{\sigma}(i, y) \indicator{h(i, x) = \overline{h}(i, y)} w_x w_y}{p}
            \\&\qquad\le \Psi_p\left( 32 e^3 L^{3/2} \norm{w}{2}^2, 32 e^6 L^{3} \frac{s^2}{m} \norm{w}{2}^4 \right) 
                + 36 e^3 L \frac{p}{\log m/s} \norm{w}{2}^2 .
    \end{align*}
\end{lemma}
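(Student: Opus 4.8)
The plan is to follow the heavy/light-bin decomposition sketched in Section~\ref{sec:proof_overview}. Fix the "inner" pair $(\overline h,\overline\sigma)$ and set $a_{ij}=\sum_{y\in U}\overline\sigma(i,y)\indicator{\overline h(i,y)=j}w_y$, so that the quantity to bound is $\pnorm{\sum_{i\in[s]}\sum_{x\in U}\sigma(i,x)a_{i,h(i,x)}w_x}{p}$. View $\abs{a_{ij}}$ as the load of bin $(i,j)\in[s]\times[m/s]$, choose a threshold parameter $k$ (to be optimized, I expect $k\asymp p/\log(m/s)$), let $I$ be the set of the $k$ heaviest bins, and split by the triangle inequality into a heavy part indexed by $I$ and a light part indexed by the complement. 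The three clauses of the strong-concentration definition are tailored exactly to these two pieces: \eqref{eq:sjl-sum-poisson} and \eqref{eq:sjl-sum-gauss} control sums against a value function $v$, \eqref{eq:sjl-square} controls $\norm{a}{2}^2=W$, and \eqref{eq:sjl-max} controls $\norm{a}{\infty}=\max_{ij}\abs{a_{ij}}$.

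For the \emph{light} part, take $v(i,j)=a_{ij}\indicator{(i,j)\notin I}$ and apply \eqref{eq:sjl-sum-poisson} conditionally on $(\overline h,\overline\sigma)$. This gives a bound in terms of $\Psi_p\big(L\norm{v}{\infty}\norm{w}{\infty},\,L\tfrac{s}{m}\norm{v}{2}^2\norm{w}{2}^2\big)$. Since $\norm{v}{\infty}\le$ the $(k+1)$-st largest load, a Markov/averaging argument bounds this in terms of $\norm{a}{2}^2/k$; then take $p$-norms in the outer randomness $(\overline h,\overline\sigma)$ using \eqref{eq:sjl-square}, which produces the $\max\{s\norm{w}{2}^2,\,\tfrac{p}{\log m/s}\norm{w}{2}^2\}$ factor. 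Tracking this through $\Psi_p$ (using its scaling and the sub-additivity-type estimates from~\cite{Houen22}) should yield the $\Psi_p\big(32e^3L^{3/2}\norm{w}{2}^2,\,32e^6L^3\tfrac{s^2}{m}\norm{w}{2}^4\big)$ term, with the $\norm{w}{\infty}$ having been absorbed because for the light bins the effective max-load is governed by the $\ell_2$ mass, not by $\norm{w}{\infty}$.

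For the \emph{heavy} part, the claim is that the contribution is dominated by the single heaviest bin. The sum over $I$ has at most $k$ bins; within bin $(i,j)$ the inner sum $\sum_x\sigma(i,x)\indicator{h(i,x)=j}w_x a_{ij}$ has $p$-norm (conditionally) at most something like $\sqrt{L\,p/\log(m/s)}\,\norm{w}{2}\cdot\abs{a_{ij}}$ by \eqref{eq:sjl-sum-gauss} applied with a one-bin value function; summing the $k\asymp p/\log(m/s)$ terms and using $\abs{a_{ij}}\le\norm{a}{\infty}$ then $\pnorm{\norm{a}{\infty}}{p}\le e\sqrt{L\log m/\log(m/s)}\,\norm{w}{2}$ from \eqref{eq:sjl-max} gives a bound of order $L\tfrac{p}{\log m/s}\norm{w}{2}^2$, which is the second term in the lemma (the constant $36e^3$ coming from combining the $L$-factors and the $e$'s). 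I would handle the regime $p>\log m$ separately, where \eqref{eq:sjl-max} is unavailable, by replacing the max bound with a direct $\ell_p$-over-bins estimate or by noting the $\Psi_p$ term already dominates.

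The main obstacle I expect is the bookkeeping in the light-bin case: one must pass from a \emph{conditional} $\Psi_p$ bound (a random quantity depending on $\norm{v}{\infty}$ and $\norm{v}{2}$, which are themselves random through $(\overline h,\overline\sigma)$) to an \emph{unconditional} $\Psi_p$ bound, and $\Psi_p$ is a piecewise, non-convex function of its arguments. The right tool is the monotonicity and quasi-homogeneity properties of $\Psi_p$ established in~\cite{Houen22}, together with the fact that $\pnorm{\Psi_p(X,Y)}{p}$-type expressions can be bounded by $\Psi_p$ of the $p$-norms of $X$ and $Y$ up to constants; choosing $k$ to balance the heavy and light contributions (so that $\norm{a}{2}^2/k$ and $k\cdot(\text{per-bin norm})^2$ match) is what fixes $k\asymp p/\log(m/s)$ and pins down the constants. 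Everything else is triangle inequality plus the three strong-concentration inequalities applied verbatim.
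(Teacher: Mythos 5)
Your high-level plan (fix one of the two pairs, view $\abs{a_{ij}}$ as bin loads, split off the $k$ heaviest bins, and use the three strong-concentration clauses) is exactly the architecture of the paper's proof, but both of your quantitative steps are too lossy to yield the stated bound. For the light bins you bound the $(k+1)$-st largest load by $\sqrt{\norm{a}{2}^2/k}$ and then invoke \eqref{eq:sjl-square}. With $k\asymp p/\log(m/s)$ this gives $\pnorm{\norm{v}{\infty}}{p}\lesssim\sqrt{L\,s\log(m/s)/p}\,\norm{w}{2}$ in the main regime $s\ge p/\log(m/s)$, so the first argument of $\Psi_p$ becomes of order $L^{3/2}\sqrt{s\log(m/s)/p}\,\norm{w}{2}^2$ rather than $O(L^{3/2}\norm{w}{2}^2)$; since $s$ can be far larger than $p$ (in the intended application $s\approx\eps^{-1}p$), this excess cannot be absorbed into either term of the lemma. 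The paper instead bounds $\pnorm{v_h(\pi(k+1))}{p}$ directly: for $p\ge\log m$ it dominates the $(k+1)$-st largest value by the average of the top $k+1$, enumerates the size-$(k+1)$ subset and the signs (affordable only because the threshold is $k=\floor{p/\log(m/p)}$, which makes the binomial cost an $e^{O(1)}$ factor), and applies \eqref{eq:sjl-sum-gauss} to the resulting signed sum, gaining the crucial $1/\sqrt{k+1}$; for $p<\log m$ it uses \eqref{eq:sjl-max}. Note also that with your choice $k\asymp p/\log(m/s)$ the enumeration cost $\binom{ms}{k}^{1/p}$ need not be constant when $s$ is close to $m$, so the threshold itself matters.

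For the heavy bins, bounding each bin separately by \eqref{eq:sjl-sum-gauss} and then summing $k$ terms of size $\norm{a}{\infty}$ gives roughly $k\sqrt{Lp/\log(m/s)}\cdot e\sqrt{L\log(m)/\log(m/s)}\,\norm{w}{2}^2$, i.e.\ $L\frac{p}{\log(m/s)}\cdot\frac{\sqrt{p\log m}}{\log(m/s)}\norm{w}{2}^2$ up to constants, which exceeds the claimed $36e^3L\frac{p}{\log m/s}\norm{w}{2}^2$ by the factor $\sqrt{p\log m}/\log(m/s)$; this factor is unbounded, e.g.\ when $p\approx\log m$ and $m/s$ is bounded, so the heavy-part accounting as you describe it does not close. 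The paper treats the heavy bins jointly: a single application of \eqref{eq:sjl-sum-gauss} with the heavy-restricted value function gives $\sqrt{Lp/\log(m/s)}\,\norm{w}{2}\,\norm{v'_h}{2}$, and then $\pnorm{\norm{v'_h}{2}}{p}$ is controlled by a union over size-$k$ subsets (Stirling, cost $e^3$), a $1/4$-net to linearize the Euclidean norm, and a second application of \eqref{eq:sjl-sum-gauss} --- no use of \eqref{eq:sjl-max} and no factor of $k$ --- yielding $36e^3\sqrt{Lp/\log(m/s)}\,\norm{w}{2}$ and hence the second term of the lemma. Two further points you would need: the degenerate case $p/\log(m/s)\ge s$ is dispatched up front via Cauchy--Schwarz and \eqref{eq:sjl-square}, and the passage from the conditional $\Psi_p$ bound (with random arguments) to an unconditional one is done with \Cref{lem:fn-moment}; your appeal to quasi-homogeneity of $\Psi_p$ is the right instinct, but that lemma is the precise tool.
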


The lemma shows that the expression has two different regimes.
The first regime, $\Psi_p\left( 32 e^3 L^{3/2} \norm{w}{2}^2, 32 e^6 L^{3} \frac{s^2}{m} \norm{w}{2}^4 \right)$, is essentially what we would expect if each of the collisions, $\indicator{h(i, x) = \overline{h}(i, y)}$, are independent of each other.
The other regime, $36 e^3 L \frac{p}{\log m/s} \norm{w}{2}^2$, is essentially what you expect the largest coordinate to contribute.

Our analysis is inspired by these two regimes and tries to exploit them explicitly.
We start by fixing $(h, \sigma)$ and divide the coordinates into heavy and light coordinates.
We then show that contribution of the light coordinates is $\Psi_p\left( 32 e^3 L^{3/2} \norm{w}{2}^2, 32 e^6 L^{3} \frac{s^2}{m} \norm{w}{2}^4 \right)$ which matches the intuition that if we only have few collisions on each coordinate then the collisions behave as if they were independent.
Similarly, we show that the contribution of the heavy coordinates is dominated by the heaviest coordinate, namely, the contribution is $36 e^3 L \frac{p}{\log m/s} \norm{w}{2}^2$.

\subsection{Mixed Tabulation Hashing}\label{sec:mixed-results}

Our main result for Mixed Tabulation hashing is the following.
\begin{theorem}\label{thm:mixed-main}
    Let $h \colon [s] \times [u] \to [m/s]$ and $\sigma \colon [s] \times [u] \to \set{-1, 1}$ be Mixed Tabulation functions as described in \Cref{sec:construction}.
    Furthermore, let $0 < \eps < 1$ and $0 < \delta < 1$ be given, and define $p = \log 1/\delta$.

    If $m \ge \gamma_p^{3c} \eps^{-2} \log(1/\delta)$ and $s \ge \gamma_p^{3/2 c} \eps^{-1} \log(1/\delta)$
    where $\gamma_p = K c \max\!\set{1, \frac{p}{\log \abs{\Sigma}}}$ for a universal constant $K$.

    Then the following is true
    \begin{align*}
        \prb{\abs{Z} \ge \eps} \le \delta + \eps 3^c \abs{\Sigma}^{-d} .
    \end{align*}
\end{theorem}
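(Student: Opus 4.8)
The plan is to deduce Theorem~\ref{thm:mixed-main} from the general Theorem~\ref{thm:main} by checking that the Mixed Tabulation pair $(h,\sigma)$ of \Cref{sec:construction} satisfies its four hypotheses with absolute constants $L_1,L_3=O(1)$, with $L_2=O(\gamma_p^{\,c-1})$, and with additive error $\gamma=\eps\,3^c\abs{\Sigma}^{-d}$. Granting such constants, conditions~3 and~4 of \Cref{thm:main} become $m\ge O(1)\cdot L_2^{3}\cdot\eps^{-2}\log(1/\delta)$ and $s\ge O(1)\cdot L_2^{3/2}\cdot\eps^{-1}\log(1/\delta)$, i.e.\ $m\ge O(\gamma_p^{\,3c-3})\eps^{-2}\log(1/\delta)$ and $s\ge O(\gamma_p^{\,3(c-1)/2})\eps^{-1}\log(1/\delta)$; since $\gamma_p=Kc\max\{1,p/\log\abs{\Sigma}\}$ carries a sufficiently large universal constant $K$ and $c\ge 2$, the factor $\gamma_p^{3}$ of slack between the exponents $3c$ and $3c-3$ (resp.\ $3c/2$ and $3(c-1)/2$) absorbs all the absolute constants, so the hypotheses $m\ge\gamma_p^{3c}\eps^{-2}\log(1/\delta)$ and $s\ge\gamma_p^{3c/2}\eps^{-1}\log(1/\delta)$ of \Cref{thm:mixed-main} imply those of \Cref{thm:main}. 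The conclusion $\prb{\abs{Z}\ge\eps}\le\delta+\gamma$ then follows verbatim (and if $\eps\,3^c\abs{\Sigma}^{-d}\ge 1$ the statement is vacuous, so we may assume $\gamma<1$). The two real tasks are therefore to certify the strong-concentration constant and to produce the decoupling-decomposition.

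For the strong-concentration part I would invoke the $p$-th moment machinery of~\cite{Houen22} for tabulation-based hashing. In the fully random case the $\Psi_p$-type bounds \eqref{eq:sjl-sum-poisson}, \eqref{eq:sjl-sum-gauss}, \eqref{eq:sjl-square}, \eqref{eq:sjl-max} hold with an absolute constant; for a Simple Tabulation hash/sign pair on $\Sigma^{c}$ the same bounds hold with the constant replaced by $O(\gamma_p^{\,c-1})$, reflecting the standard recursion in which one conditions on all but one character table and treats the last as fully random, each of the $c-1$ conditioning steps costing a $\gamma_p$ factor. The additional derived characters of Mixed Tabulation do not inflate this exponent: on the good event of the decomposition below, the contribution of $h_3\circ h_2'$ behaves like fresh independent randomness rather than further tabulation, so only the $c$ genuine characters of $h_1$ are paid for, and the conditioned restrictions appearing in \Cref{lem:main-technical} are $(p,O(\gamma_p^{\,c-1}))$-strongly concentrated.

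The decoupling-decomposition is where the Mixed Tabulation structure enters. Writing $h(i,x)=h_1(x)\xor h_3(h_2'(i,x))$ with $h_2'(i,x)=h_2(x)\xor(i,\dots,i)$, one defines, by a Dahlgaard--Knudsen--Rotenberg--Thorup--style analysis of the derived characters adapted to the domain $[s]\times[u]$, a ``good'' event $\mathcal E$ on the randomness of $h_2$ and $h_3$ under which the keys relevant to $\operatorname{supp}(w)$ can be organised into a (randomised, $h_2$-dependent) partition $(U_\alpha)$ of $[u]$ such that, conditionally, $h$ restricted to each $[s]\times U_\alpha$ behaves like a Simple Tabulation hash with $c$ characters on an independent uniform offset; in particular it supports the standard decoupling inequality \eqref{eq:decoupling-independence} with loss $L_1=O(1)$ and the strong-concentration package with $L_2=O(\gamma_p^{\,c-1})$, and since $(U_\alpha)$ partitions $[u]$ we get $L_3=1$, verifying condition~2. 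The failure probability of $\mathcal E$ is controlled by a union bound over the minimal ``dependent'' configurations of keys under the derived-character hash: there are $O(3^c)$ essentially different combinatorial shapes, each failing to peel with probability at most $\abs{\Sigma}^{-d}$, and propagating the value vector $w$ through this count together with condition~4 on $s$ (collisions contribute with weight $1/s$) produces the additive error $\gamma=\eps\,3^c\abs{\Sigma}^{-d}$ rather than the cruder $3^c\abs{\Sigma}^{-d}$. Feeding this decomposition into \Cref{thm:main} completes the proof.

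The main obstacle I anticipate is precisely this last step: establishing the structural lemma with the sharp $\eps\,3^c\abs{\Sigma}^{-d}$ failure bound and, more delicately, showing that on $\mathcal E$ the conditioned restrictions satisfy the \emph{full} strong-concentration package \eqref{eq:sjl-sum-poisson}--\eqref{eq:sjl-max}, not merely one moment inequality. This requires marrying the combinatorial peeling/independence analysis of Mixed Tabulation to a conditional version of the $\Psi_p$ estimates of~\cite{Houen22}, and carefully tracking how the weights $w$ interact with the bad event so that the \emph{weighted} failure probability — and hence $\gamma$ — carries the extra $\eps$ factor that keeps it below $1$ throughout the parameter range of interest.
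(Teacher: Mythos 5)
Your top-level architecture coincides with the paper's: verify that Mixed Tabulation supplies a decoupling-decomposition and the strong-concentration property, and feed the resulting constants into \Cref{thm:main}, letting the slack in $K$ absorb all $2^{O(c)}$ and $e^{O(1)}$ factors. The strong-concentration half is also essentially as in the paper (the paper proves it with constant $\gamma_p^{c}$ rather than your $O(\gamma_p^{c-1})$, via \Cref{lem:mixed-property-1,lem:mixed-property-2,lem:mixed-property-3}, which reduce to the simple-tabulation moment bounds of~\cite{Houen22}; the discrepancy in the exponent is harmless). The genuine gap is in the decoupling-decomposition, which is the paper's main new content and which you only posit. Your sketch hinges on a DKRT-style ``good event'' $\mathcal E$ on $(h_2,h_3)$ under which the keys can be partitioned into sets $U_\alpha$ on which $h$ behaves like Simple Tabulation with an independent offset, giving $L_1=O(1)$, $L_3=1$, and failure probability $\eps\,3^c\abs{\Sigma}^{-d}$. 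No such conditional-full-randomness statement is established (conditioning on a structural event of $h_2$ does not leave the restricted hash values exactly distributed as a tabulation scheme, so the whole package \eqref{eq:sjl-sum-poisson}--\eqref{eq:sjl-max} would have to be re-derived under that conditioning), and your explanation of where the extra factor $\eps$ in $\gamma$ comes from (``propagating $w$ through the count of configurations, collisions weighted by $1/s$'') is not an argument.

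For contrast, the paper's route is different and more concrete: it splits $Z$ according to whether $h_2(x)=h_2(y)$ or not. The full derived-key collision term is killed directly by a second-moment computation and Markov's inequality, and this -- not a weighted union bound over bad configurations -- is the sole source of the additive error, of the form $4\eps^{-2}3^c\tfrac{s}{m}\abs{\Sigma}^{-d}$, where the collision probability factor $s/m$ supplies the powers of $\eps$ that bring it down to $\eps\,3^c\abs{\Sigma}^{-d}$. The term with $h_2(x)\neq h_2(y)$ is decoupled deterministically, with no good event: one sums over the intersection pattern $\pi=x\cap y$ of input position characters and over the prefix $\tau$ recording the first derived character where $h_2(x)$ and $h_2(y)$ differ, and applies a generalized decoupling inequality (\Cref{lem:decoupling-gen}, proved by induction on the number of characters, extending \Cref{lem:decoupling-1d}) once per character. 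This yields overlapping sets $U_{\pi,\tau}$ with multiplicity $d2^c$ (so $L_3=d2^c$, not $1$) and loss $L_1=4^{c+2}$ (not $O(1)$), after which \Cref{thm:main} applies as you describe. So while your reduction to \Cref{thm:main} and the final bookkeeping are fine, the decomposition lemma you would need (\Cref{lem:mixed-decoupling} and the lemma following it in \Cref{sec:mixed-decoupling}) is left unproved in your proposal, and the mechanism you propose for it would require substantial additional work that the paper avoids by its explicit $(\pi,\tau)$-decomposition.
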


The result follows by proving that Mixed Tabulation hashing has a $(p, 4^{c + 2}, 4 \eps^{-2} 3^c \frac{s}{m} \abs{\Sigma}^{-d})$-decoupling-decomposition and that Mixed Tabulation has the strong concentration property.
The main new part is in showing the decoupling-decomposition while the analysis of the strong concentration property is modification of the analysis in~\cite{Houen22}.

\section{Analysis of the Sparse Johnson-Lindenstrauss Transform}

Lets us start by showing how \Cref{lem:main-technical} implies our main result, \Cref{thm:main}.
\begin{proof}[Proof of~\Cref{thm:main}]
    We start by using \cref{eq:decoupling-decomposition} of the decoupling decomposition to get that
    \begin{align*}
        &\prb{\abs{Z} \ge \eps}
        \\&\qquad\qquad\le \left(\eps^{-1} \sum_{\alpha} \frac{L_1}{s} \pnorm{\sum_{i \in [s]}  \sum_{x, y \in U_\alpha} \sigma_\alpha(i, x) \sigma_\alpha'(i, y) \indicator{h_\alpha(i, x) = h_\alpha'(i, y)} v_x v_y}{p} \right)^p + \gamma
    \end{align*}
    Now we fix $\alpha$ and apply \Cref{lem:main-technical} while fixing $U_\alpha$
    \begin{align*}
        &\pnorm{\sum_{i \in [s]}  \sum_{x, y \in U_\alpha} \sigma_\alpha(i, x) \sigma_\alpha'(i, y) \indicator{h_\alpha(i, x) = h_\alpha'(i, y)} v_x v_y}{p}
            \\&\qquad\qquad\le \Psi_p\left(32 e^3 L_2^{3/2}, 32 e^6 L^3 \frac{s^2}{m} \right)\sum_{x \in U_\alpha} w_x^2
            + 36 e^3 L_2 \frac{p}{\log m/s} \sum_{x \in U_\alpha} w_x^2
    \end{align*}
    Using this we get that
    \begin{align*}
        &\sum_{\alpha} \frac{L_1}{s} \pnorm{\sum_{i \in [s]}  \sum_{x, y \in U_\alpha} \sigma_\alpha(i, x) \sigma_\alpha'(i, y) \indicator{h_\alpha(i, x) = h_\alpha'(i, y)} v_x v_y}{p}
            \\&\qquad\qquad\le \sum_{\alpha} \frac{L_1}{s} \left(\Psi_p\left(32 e^3 L_2^{3/2}, 32 e^6 L^3 \frac{s^2}{m} \right)\sum_{x \in U_\alpha} w_x^2
            + 36 e^3 L_2 \frac{p}{\log m/s} \sum_{x \in U_\alpha} w_x^2 \right)
    \end{align*}
    We now use that $\sum_\alpha \sum_{x \in U_\alpha} w_x^2 \le L_3 \norm{w}{2}^2$ to get that
    \begin{align*}
        &\sum_{\alpha} \frac{L_1}{s} \left(\Psi_p\left(32 e^3 L_2^{3/2}, 32 e^6 L^3 \frac{s^2}{m} \right)\sum_{x \in U_\alpha} w_x^2
            + 36 e^3 L_2 \frac{p}{\log m/s} \sum_{x \in U_\alpha} w_x^2 \right)
            \\&\qquad\qquad\le \frac{L_3 L_1}{s} \left(\Psi_p\left(32 e^3 L_2^{3/2}, 32 e^6 L^3 \frac{s^2}{m} \right)
            + 36 e^3 L_2 \frac{p}{\log m/s} \right) \norm{w}{2}^2
    \end{align*}
    It can now be checked that if $m$ and $s$ satisfies the stated assumptions then
    \begin{align*}
        \frac{L_3 L_1}{s} \left(\Psi_p\left(32 e^3 L_2^{3/2}, 32 e^6 L^3 \frac{s^2}{m} \right)
            + 36 e^3 L_2 \frac{p}{\log m/s} \right) \norm{w}{2}^2
            \le e^{-1} \eps
    \end{align*}

    Combining all the facts, we get that
    \begin{align*}
        \prb{\abs{Z} \ge \eps}
        \le \left(\eps^{-1} (e^{-1} \eps)\ \right)^p + \gamma
        = \delta + \gamma .
    \end{align*}
    This finishes the proof.
\end{proof}

The rest of the section is concerned with proving our main technical lemma, \Cref{lem:main-technical}.
First we need the following two lemmas from~\cite{Houen22}.
\begin{lemma}\label{lem:fn-moment}
    Let $f \colon \R_{\ge 0}^n \to \R_{\ge 0}$ be a non-negative function which is monotonically increasing in every argument, and assume that there exists positive reals $(\alpha_i)_{i \in [n]}$ and $(t_i)_{i \in [n]}$ such that for all $\lambda \ge 0$,
    \begin{align*}
        f(\lambda^{\alpha_0} t_0, \ldots, \lambda^{\alpha_{n - 1}} t_{n - 1})
            \le \lambda f(t_0, \ldots, t_{n - 1})
        \; .
    \end{align*}
    Let $(X_i)_{i \in [n]}$ be non-negative random variables.
    Then for all $p \ge 1$ we have that
    \begin{align*}
        \pnorm{f(X_0, \ldots, X_{n - 1})}{p}
            \le n^{1/p} \max_{i \in [n]} \left( \frac{\pnorm{X_i}{p/\alpha_i}}{t_i} \right)^{1/\alpha_i} f(t_0, \ldots, t_{n - 1})
        \; .
    \end{align*}
\end{lemma}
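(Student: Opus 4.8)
The plan is to \emph{homogenize}: the scaling hypothesis says that along the curve $\lambda \mapsto (\lambda^{\alpha_0} t_0, \ldots, \lambda^{\alpha_{n-1}} t_{n-1})$ the function $f$ is dominated by the linear map $\lambda \mapsto \lambda\, f(t_0, \ldots, t_{n-1})$, so the natural move is to measure the random point $(X_0, \ldots, X_{n-1})$ by the smallest such curve that dominates it coordinatewise. Concretely, I would define the nonnegative random scalar $\lambda = \max_{i \in [n]} (X_i/t_i)^{1/\alpha_i}$. Then $X_i \le \lambda^{\alpha_i} t_i$ for every $i$, so by monotonicity of $f$ in each coordinate followed by the scaling hypothesis,
\[
    f(X_0, \ldots, X_{n-1}) \le f(\lambda^{\alpha_0} t_0, \ldots, \lambda^{\alpha_{n-1}} t_{n-1}) \le \lambda\, f(t_0, \ldots, t_{n-1}) .
\]
Taking $p$-norms on both sides gives $\pnorm{f(X_0, \ldots, X_{n-1})}{p} \le f(t_0, \ldots, t_{n-1})\, \pnorm{\lambda}{p}$, so the whole lemma reduces to bounding $\pnorm{\lambda}{p}$.

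For $\pnorm{\lambda}{p}$, first I would pass the maximum through a sum at the level of $p$-th powers: $\lambda^p = \max_{i \in [n]} (X_i/t_i)^{p/\alpha_i} \le \sum_{i \in [n]} (X_i/t_i)^{p/\alpha_i}$. Taking expectations and using that $\ep{(X_i/t_i)^{p/\alpha_i}} = (\pnorm{X_i}{p/\alpha_i}/t_i)^{p/\alpha_i}$ by the very definition of the $(p/\alpha_i)$-norm, we obtain $\ep{\lambda^p} \le \sum_{i \in [n]} (\pnorm{X_i}{p/\alpha_i}/t_i)^{p/\alpha_i} \le n \max_{i \in [n]} (\pnorm{X_i}{p/\alpha_i}/t_i)^{p/\alpha_i}$. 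Taking $p$-th roots yields $\pnorm{\lambda}{p} \le n^{1/p} \max_{i \in [n]} (\pnorm{X_i}{p/\alpha_i}/t_i)^{1/\alpha_i}$, and substituting this into the previous display gives exactly the claimed bound.

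There is no genuinely hard step here — this is the standard normalization trick — but a few points merit care in the write-up. One is that the scaling hypothesis is stated for real arguments, so one must note it is legitimate to invoke it with $\lambda$ a random scalar (it holds realization by realization), and that degenerate cases cause no trouble: if $X_i = 0$ the term $(X_i/t_i)^{1/\alpha_i}$ simply vanishes, and if $f(t_0, \ldots, t_{n-1}) = 0$ the bound is trivial, since then nonnegativity of $f$ together with the hypothesis forces $f \equiv 0$ along the whole curve and hence $f(X_0,\ldots,X_{n-1}) = 0$ by monotonicity. The other is the slack from $\max \le \sum$, which is precisely what produces the $n^{1/p}$ factor; since in the applications $p$ is of order $\log(1/\delta)$, this factor is essentially a constant as long as $n = \poly(1/\delta)$, so no sharper estimate is needed.
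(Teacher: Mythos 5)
Your proof is correct. Note that the paper itself does not prove this lemma --- it is imported from the cited reference \cite{Houen22} without proof --- and your homogenization argument (dominating $(X_0,\ldots,X_{n-1})$ by the curve point with $\lambda = \max_i (X_i/t_i)^{1/\alpha_i}$, then bounding $\pnorm{\lambda}{p}$ via $\max \le \sum$ at the level of $p$-th powers, which is exactly where the $n^{1/p}$ comes from) is the standard route and matches how such statements are established there; your handling of the degenerate case $f(t_0,\ldots,t_{n-1})=0$ and of the purely definitional identity $\ep{(X_i/t_i)^{p/\alpha_i}} = (\pnorm{X_i}{p/\alpha_i}/t_i)^{p/\alpha_i}$ is also fine.
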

\begin{lemma}\label{lem:psi-properties}
    Let $p \ge 2$, $M > 0$, and $\sigma^2 > 0$ then
    \begin{align*}
        \frac{1}{2}\sqrt{p}\sigma \le \Psi_p(M, \sigma^2) \le \max\!\set{\frac{1}{2}\sqrt{p}\sigma, \frac{1}{2e}p M} .
    \end{align*}
\end{lemma}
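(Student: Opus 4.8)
\textbf{Proof plan for Lemma \ref{lem:psi-properties}.}
The plan is to prove the two-sided bound directly from the three-case definition of $\Psi_p$, checking each case separately and using that in each case we know which of the defining inequalities on $p$ holds. Throughout write $t = \frac{p M^2}{\sigma^2}$, so the three cases are governed by $p < \log t$, $p < e^2/t^{-1} = e^2 \sigma^2/M^2$ (equivalently $t < e^2 p$), and $\max\{\log t, e^2\sigma^2/M^2\} \le p$. I would first record the elementary observation that $\frac12\sqrt{p}\,\sigma$ is exactly the value of $\Psi_p$ in the middle case, so there the claimed lower bound holds with equality and the upper bound is immediate; the work is in the other two cases.

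For the \emph{lower bound} $\frac12\sqrt{p}\,\sigma \le \Psi_p(M,\sigma^2)$: in the first case $\Psi_p = (\sigma^2/(pM^2))^{1/p} M = t^{-1/p} M$, and we are in the regime $p < \log t$, i.e. $t > e^p$, i.e. $t^{-1/p} > e^{-1}\cdot$(something)—more precisely I would compare $t^{-1/p}M$ with $\frac12\sqrt p\,\sigma = \frac12\sqrt p\, M/\sqrt t$; the ratio is $\frac{2}{\sqrt p}\, t^{1/2 - 1/p}$, and since $t > e^p \ge e^2$ and $p\ge 2$ one checks $t^{1/2-1/p} \ge t^{1/2 - 1/2}\cdots$ — the clean way is: $t^{1/2-1/p} = t^{(p-2)/(2p)} \ge (e^p)^{(p-2)/(2p)} = e^{(p-2)/2}$, and then $\frac{2}{\sqrt p} e^{(p-2)/2} \ge 1$ for all $p \ge 2$ (it equals $2/\sqrt2 = \sqrt2 > 1$ at $p=2$ and is increasing), giving the bound. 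In the third case $\Psi_p = \frac{p}{e\log t} M$ with $\log t \le p$ and $t \ge e^2 \sigma^2/M^2$; I would lower-bound $\frac{p}{e\log t}M$ against $\frac12\sqrt p\, M/\sqrt t$ by noting $\frac{p}{e\log t}\sqrt t \ge \frac12\sqrt p$ is equivalent to $\frac{\sqrt t}{\log t} \ge \frac{e}{2\sqrt p}$, which follows since $\sqrt t/\log t \ge$ its minimum value $\ge 1/2$ over $t\ge 1$ while $e/(2\sqrt p) \le e/(2\sqrt2) < 1$ — actually more care is needed because $\log t$ could be small; but in the third case we also have $t\ge e^2\sigma^2/M^2$ and, combined with $p\ge\log t$, a short case analysis on whether $\log t \ge 1$ closes it.

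For the \emph{upper bound} $\Psi_p(M,\sigma^2)\le \max\{\frac12\sqrt p\,\sigma, \frac1{2e}pM\}$: in the middle case this is trivial. In the third case $\Psi_p = \frac{p}{e\log t}M \le \frac{p}{e}M \le \frac1{2e}pM$ once $\log t \ge 1$; if $\log t < 1$ then $t < e$, but the third-case hypothesis forces $t \ge e^2\sigma^2/M^2$, giving $\sigma^2/M^2 < 1/e$, and then one shows $\frac{p}{e\log t}M \le \frac12\sqrt p\,\sigma$ directly, or simply notes the first case would then apply instead — I would phrase this as: when $\log t < 1 \le 2 \le p$ we are actually in the first case $p < \log t$ is false, hmm, so I must instead argue $\frac{p}{e\log t}M \le \frac1{2e}pM$ fails only if $\log t < 1/2$, and handle $\log t \in (0,1/2)$ by the explicit inequality against $\frac12\sqrt p\,\sigma$. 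In the first case $\Psi_p = t^{-1/p}M$ and $p<\log t$ so $t > e^p$, hence $t^{-1/p} < e^{-1}$, giving $\Psi_p < e^{-1}M \le \frac1{2e}pM$ since $p\ge2$.

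\textbf{Main obstacle.} The only delicate point is the boundary behavior when $\log t = \log\frac{pM^2}{\sigma^2}$ is close to zero or even negative (i.e. $\sigma^2 > pM^2$): then the third-case formula $\frac{p}{e\log t}M$ is not obviously bounded, and one must use the case hypotheses themselves ($\log t \le p$ together with $e^2\sigma^2/M^2 \le p$, so $t \ge e^2\sigma^2/M^2$, i.e. $\sigma^2/M^2 \le t/e^2$, forcing $\log t \ge \log(e^2\sigma^2/M^2)$ which is $\ge 2 - \log(M^2/\sigma^2)$) to get a usable lower bound on $\log t$. Pinning down exactly which combinations of $(p, M, \sigma)$ fall in which case, and verifying the claimed constants $\frac12$ and $\frac1{2e}$ survive at those boundaries, is the one place requiring genuine care; everything else is routine monotonicity of $x\mapsto x^{-1/p}$, $x\mapsto \sqrt x/\log x$, and the like.
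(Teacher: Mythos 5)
The paper never proves this lemma itself — it is quoted from~\cite{Houen22} — so your proposal has to stand on its own, and as written it does not. Your overall plan (case analysis on the three branches, with $t = pM^2/\sigma^2$) is the natural one, and the middle case and the case-1 upper bound are fine, but two things go wrong. First, an algebra slip that propagates: from $t = pM^2/\sigma^2$ you get $\sigma = M\sqrt{p/t}$, so $\tfrac12\sqrt{p}\,\sigma = \tfrac{pM}{2\sqrt{t}}$, not $\tfrac12\sqrt{p}\,M/\sqrt{t}$. In case 1 the inequality you actually need is therefore $t^{(p-2)/(2p)} \ge p/2$, not $\tfrac{2}{\sqrt{p}}t^{1/2-1/p} \ge 1$; it still follows from $t > e^p$ because $e^{(p-2)/2} \ge p/2$ for all $p \ge 2$, so that case is repairable, but as written you verify the wrong inequality.

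Second, and this is the genuine gap, case 3 — exactly what you flag as the ``main obstacle'' — is left unresolved, and the partial arguments you sketch for it are incorrect. The missing observation is that the case-3 hypothesis $e^2\sigma^2/M^2 \le p$ is \emph{equivalent} to $t \ge e^2$, i.e.\ $\log\frac{pM^2}{\sigma^2} \ge 2$; there is no delicate regime where $\log t$ is small or negative. This gives the upper bound immediately, $\frac{p}{e\log t}M \le \frac{1}{2e}pM$; by contrast your chain ``$\frac{p}{e\log t}M \le \frac{p}{e}M \le \frac{1}{2e}pM$ once $\log t \ge 1$'' is false in its last step, and the thresholds $\log t \ge 1$ or $\ge 1/2$ you discuss are not the relevant ones (the relevant threshold is $\log t \ge 2$, and it is guaranteed by the case hypothesis). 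For the lower bound, with the corrected algebra you need $\frac{\sqrt{t}}{\log t} \ge \frac{e}{2}$, which holds precisely because the minimum of $\sqrt{t}/\log t$ over $t > 1$ is $e/2$, attained at $t = e^2$; the constant $\tfrac12$ is tight there (e.g.\ $\sigma^2/M^2 = 2$, $p = 2e^2$ gives $t = e^2$ and equality), so the loose comparison ``minimum $\ge 1/2$'' versus ``$e/(2\sqrt{p}) < 1$'' cannot close the argument, and it is in any case aimed at your mis-substituted target. Finally, your fallback deduction ``the third-case hypothesis forces $t \ge e^2\sigma^2/M^2$'' is not valid in general — the same example $\sigma^2/M^2 = 2$, $p = 2e^2$ has $t = e^2 < 2e^2$ — so the rescue route you indicate does not work; the correct route is the equivalence $e^2\sigma^2/M^2 \le p \iff t \ge e^2$, after which both bounds in case 3 are one-line computations.
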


We are now ready to prove \Cref{lem:main-technical}.
\begin{proof}[Proof of \Cref{lem:main-technical}]
    We start by defining $v_h, v_{\bar{h}} \colon [s] \times [m/s] \to \R$ by,
    \begin{align*}
        v_h(i, j) &= \sum_{x \in U} \sigma(i, x) w_x \indicator{h(i, x) = j} \; , \\
        v_{\bar{h}}(i, j) &= \sum_{y \in U} \overline{\sigma}(i, y) w_y \indicator{\overline{h}(i, y) = j} \; .
    \end{align*}
    We then want to prove that
    \begin{align*}
        &\pnorm{\sum_{i \in [s], j \in [m/s]} v_h(i, j) v_{\bar{h}}(i, j) }{p}
            \\&\qquad\qquad\le \Psi_p\left( 32 e^3 L^{3/2} \norm{w}{2}^2, 32 e^6 L^{3} \norm{w}{2}^4 \right) + 4 e^3 L \frac{p}{\log m/s} \norm{w}{2}^2
        \; .
    \end{align*}

    First we consider the case where $\frac{p}{\log m/s} \norm{w}{2}^2 \ge  s \norm{w}{2}^2$.
    By Cauchy-Schwartz and \cref{eq:sjl-square} we get that
    \begin{align*}
        \pnorm{\sum_{i \in [s], j \in [m/s]} v_h(i, j) v_{\bar{h}}(i, j) }{p}
            \le \pnorm{\sum_{i \in [s], j \in [m/s]} v_h(i, j)^2}{p}
            \le  L \frac{p}{\log m/s} \norm{w}{2}^2
        \; .
    \end{align*}

    We now focus on the case where $\frac{p}{\log m/s} \norm{w}{2}^2 < s \norm{w}{2}^2$.
    We define $\pi \colon [m] \to [s] \times [m / s]$ to be a bijection which satisfies that
    \begin{align*}
        \abs{v_h(\pi(0))} \ge \abs{v_h(\pi(1))} \ge \ldots \ge \abs{v_h(\pi(m - 1))} \; .
    \end{align*}
    We note that $\pi$ is a random function but we can define $\pi$ such that it only depends on the randomness of $h$ and $\sigma$.
    We define $k = \floor{p/\log(m/p)}$, $I = \setbuilder{\pi(i)}{i \in [k]}$, and the random functions $v'_h, v''_h \colon [s] \times [m/s] \to \R$ by
    \begin{align*}
        v'_h(i, j) &= v_h(i, j) \indicator{(i, j) \in I} , \\
        v''_h(i, j) &= v_h(i, j) \indicator{(i, j) \not\in I} . 
    \end{align*}
    Again we note that $v'_h$ and $v''_h$ only depends on the randomness of $h$ and $\sigma$.
    We can then write our expression as
    \begin{align*}
        &\pnorm{\sum_{i \in [s], j \in [m/s]} v_h(i, j) v_{\bar{h}}(i, j) }{p}
            = \pnorm{\sum_{i \in [s]} \sum_{y \in U} \overline{\sigma}(i, y) v_h(i, \overline{h}(i, y)) w_y }{p}
            \\&\qquad\qquad\le \pnorm{\sum_{i \in [s]} \sum_{y \in U} \overline{\sigma}(i, y) v'_h(i, \overline{h}(i, y)) w_y }{p}
            + \pnorm{\sum_{i \in [s]} \sum_{y \in U} \overline{\sigma}(i, y) v''_h(i, \overline{h}(i, y)) w_y }{p}
        .
    \end{align*}
    We will bound each of the term separately.
    We start by bounding $\pnorm{\sum_{i \in [s]} \sum_{y \in U} \overline{\sigma}(i, y) v'_h(i, \overline{h}(i, y)) w_y }{p}$.
    We fix $h$ and $\sigma$ and use \cref{eq:sjl-sum-gauss} to get that
    \begin{align*}
        \pnorm{\sum_{i \in [s]} \sum_{y \in U} \overline{\sigma}(i, y) v'_h(i, \overline{h}(i, y)) w_y }{p}
            &\le \pnorm{\sqrt{ L \frac{p}{\log m/s}  \norm{w}{2}^2 \norm{v'_h}{2}^2 }}{p}
            \\&= \sqrt{ L\frac{p}{\log m/s} } \norm{w}{2} \pnorm{\sqrt{ \sum_{(i, j) \in I} v'_h(i, j)^2 }}{p}
        .
    \end{align*}
    We note that $\sum_{(i, j) \in I} v'_h(i, j)^2 = \max_{J \subseteq [s] \times [m/s], |J| = k} \sum_{(i, j) \in J} v_h(i, j)^2$.
    We then get that
    \begin{align*}
        \pnorm{\sqrt{ \sum_{(i, j) \in I} v'_h(i, j)^2 }}{p}
            &= \pnorm{\sqrt{ \max_{J \subseteq [s] \times [m/s], |J| = k} \sum_{(i, j) \in J} v_h(i, j)^2 }}{p}
            \\&\le \left( \sum_{J \subseteq [s] \times [m/s], |J| = k} \pnorm{\sqrt{\sum_{(i, j) \in J} v_h(i, j)^2}}{p}^p \right)^{1/p}
            \\&\le \binom{ms}{k}^{1/p} \max_{J \subseteq [s] \times [m/s], |J| = k} \pnorm{\sqrt{\sum_{(i, j) \in J} v_h(i, j)^2}}{p}
    \end{align*}
    We use Sterling's bound and get that $\binom{ms}{k}^{1/p}
    \le \left( \frac{e ms}{k} \right)^{k/p}
    \le \left( \frac{e ms \log(ms/p)}{p} \right)^{1/\log(ms/p)}
    \le e^3$.
    So we get that
    \begin{align*}
        \pnorm{\sqrt{ \sum_{(i, j) \in I} v'_h(i, j)^2 }}{p}
            \le e^3 \max_{J \subseteq [s] \times [m/s], |J| = k} \pnorm{\sqrt{\sum_{(i, j) \in J} v_h(i, j)^2}}{p}
    \end{align*}
    A standard volumetric argument gives that there exists a $1/4$-net, $Z \subseteq \R^J$, with $\abs{Z} \le 9^k$, such that
    \begin{align*}
        \pnorm{\sqrt{\sum_{(i, j) \in J} v_h(i, j)^2}}{p}
            &= \pnorm{ \sup_{z \in \R^{J}, \norm{z}{2} = 1} \sum_{(i, j) \in J} z_{i, j} v_h(i, j)}{p}
            \\&\le \pnorm{ \sup_{z \in Z} \sum_{(i, j) \in J} z_{i, j} v_h(i, j) }{p}
            \\&\qquad+ \pnorm{\sup_{z \in \R^{J}, \norm{z}{2} = 1} \sum_{(i, j) \in J} (z_{i, j} - z'_{i, j}) v_h(i, j) }{p}
            \\&\le \pnorm{ \sup_{z \in Z} \sum_{(i, j) \in J} z_{i, j} v_h(i, j) }{p}
            \\&\qquad+ \sup_{z \in \R^{J}, \norm{z}{2} = 1} \norm{z - z'}{2} \pnorm{\sqrt{\sum_{(i, j) \in J} v_h(i, j)^2}}{p}
    \end{align*}
    where $z' \in Z$ is the closest element to $z$, and as such $\norm{z - z'}{2} \le 1/4$.
    Since there are at most $9^k$ elements in $Z$ then $\pnorm{ \sup_{z \in Z} \sum_{(i, j) \in J} z_{i, j} v_h(i, j) }{p} \le 9 \sup_{z \in Z} \pnorm{ \sum_{(i, j) \in J} z_{i, j} v_h(i, j) }{p}$, where we used that $k \le p$.
    Collecting the fact we get that
    \begin{align*}
        \pnorm{\sqrt{\sum_{(i, j) \in J} v_h(i, j)^2}}{p}
            \le 36 \sup_{z \in Z} \pnorm{ \sum_{(i, j) \in J} z_{i, j} v_h(i, j) }{p}
    \end{align*}
    Using this we get that
    \begin{align*}
        &e^3 \max_{J \subseteq [s] \times [m/s], |J| = k} \pnorm{\sqrt{\sum_{(i, j) \in J} v_h(i, j)^2}}{p}
            \\&\qquad\qquad\le 36 e^3 \max_{J \subseteq [s] \times [m/s], |J| = k} \max_{z \in Z} \pnorm{\sum_{(i, j) \in J} z_{i, j} v_h(i, j)}{p}
            \\&\qquad\qquad= 36 e^3 \max_{J \subseteq [s] \times [m/s], |J| = k} \max_{\substack{z \in \R^{s \times m/s},\\ \norm{z}{2} = 1}} \pnorm{\sum_{i \in [s]} \sum_{x \in U} \sigma(i, x) z_{i, h(i, x)} \indicator{(i, h(i, x)) \in J} w_x}{p}
    \end{align*}
    We can then use \cref{eq:sjl-sum-gauss} to get that
    \begin{align*}
        &36 e^3 \max_{J \subseteq [s] \times [m/s], |J| = k} \max_{\substack{z \in \R^{s \times m/s},\\ \norm{z}{2} = 1}} \pnorm{\sum_{i \in [s]} \sum_{x \in U} \sigma(i, x) z_{i, h(i, x)} \indicator{(i, h(i, x)) \in J} w_x}{p}
            \\&\qquad\qquad\qquad\le 36 e^3  \max_{J \subseteq [s] \times [m/s], |J| = k} \max_{\substack{z \in \R^{s \times m/s},\\ \norm{z}{2} = 1}} \sqrt{L \frac{p}{\log m/s}} \norm{w}{2} \norm{z}{2}
            \\&\qquad\qquad\qquad= 36 e^3 \sqrt{L \frac{p}{\log m/s}} \norm{w}{2}
    \end{align*}
    Combining the facts, we get that
    $
        \pnorm{\sum_{i \in [s]} \sum_{y \in U} \overline{\sigma}(i, y) v'_h(i, \overline{h}(i, y)) w_y }{p}
            \le 36 e^3 L \frac{p}{\log m/s} \norm{w}{2}^2
    $.

    We will now bound $\pnorm{\sum_{i \in [s]} \sum_{y \in U} \overline{\sigma}(i, y) v''_h(i, \overline{h}(i, y)) w_y }{p}$.
    We fix $h$ and $\eps$ and use \cref{eq:sjl-sum-poisson} to get that
    \begin{align*}
        \pnorm{\sum_{i \in [s]} \sum_{y \in U} \overline{\sigma}(i, y) v''_h(i, \overline{h}(i, y)) w_y }{p}
            &\le \pnorm{\Psi_{p}\left( L \norm{w}{\infty} \norm{v'_h}{\infty}, L \frac{s}{m} \norm{w}{2}^2 \norm{v'_h}{2}^2 \right)}{p}
            \\&\le \pnorm{\Psi_{p}\left( L \norm{w}{\infty} \abs{v'_h(\pi(k + 1))}, L \frac{s}{m} \norm{w}{2}^2 \norm{v_h}{2}^2 \right)}{p}
        \; .
    \end{align*}
    Now we use \Cref{lem:fn-moment} to get that,
    \begin{align*}
        &\pnorm{\Psi_{p}\left( L \norm{w}{\infty} \abs{v'_h(\pi(k + 1))}, L \frac{s}{m} \norm{w}{2}^2 \norm{v_h}{2}^2 \right)}{p}
            \\&\qquad\qquad\le \sqrt{2} \Psi_{p}\left( L \norm{w}{\infty} \pnorm{\abs{v'_h(\pi(k + 1))}}{p}, L \frac{s}{m} \norm{w}{2}^2 \pnorm{\norm{v_h}{2}^2}{p/2} \right)
        .
    \end{align*}
    Since we assume that $\frac{p}{\log m} \norm{w}{2}^2 < s \norm{w}{2}^2$ then \cref{eq:sjl-square} give us that $\pnorm{\norm{v_h}{2}^2}{p/2} \le L s \norm{w}{2}^2$.
    
    We will now bound $\pnorm{v'_h(\pi(k + 1))}{p}$.
    For this, we will distinguish between two cases: Either $p \ge \log m$ or $p < \log m$.
    Let us first case where $p \ge \log m$.
    We will use that $\abs{v'_h(\pi(k + 1))} \le \frac{\sum_{i \in [k + 1]} \abs{v'_h(\pi(i))} }{k + 1}$.
    We then get that
    \begin{align*}
        &\pnorm{v'_h(\pi(k + 1))}{p}
            \\&\le \pnorm{\frac{\sum_{i \in [k + 1]} \abs{v'_h(\pi(i))} }{k + 1}}{p}
            \\&\le \left( \binom{m}{k + 1} 2^{k + 1} \max_{\substack{J \subseteq [s] \times [m/s], \\ |J| = k + 1}} \max_{(\sigma_{i, j})_{(i, j) \in J} \in \set{-1, 1}^J} \left( \frac{\pnorm{\sum_{(i, j) \in J} \sigma_{i, j} v_{h}(i, j)}{p}}{k + 1} \right)^p \right)^{1/p}
            \\&\le \max_{\substack{J \subseteq [s] \times [m/s], \\ |J| = k + 1}} \max_{(s_{i, j})_{(i, j) \in J} \in \set{-1, 1}^J} 2 \binom{m}{k + 1}^{1/p} \frac{\pnorm{\sum_{(i, j) \in J} \sigma_{i, j} v_{h}(i, j)}{p}}{k + 1}
    \end{align*}
    We note that $\pnorm{\sum_{(i, j) \in J} \sigma_{i, j} v_{h}(i, j)}{p} = \pnorm{\sum_{x \in U} \sum_{(i, j) \in J}   \sigma(i, x)  s_{i, j} \indicator{h(i, x) = j} w_x }{p}$.
    Since we have that $p \ge \log m$ then $k \ge 1$ which implies that $k + 1 \le 2\frac{p}{\log(m/p)}$.
    We then get that $\binom{m}{k + 1}^{1/p}
    \le \left( \frac{e m}{2 p/\log(m/p)}   \right)^{2/\log(m/p)}
    \le 2 e^3$.
    We now use \cref{eq:sjl-sum-gauss} to get that,
    \begin{align*}
        \pnorm{\sum_{x \in U} \sum_{(i, j) \in J}  \sigma(i, x)  s_{i, j} \indicator{h(i, x) = j} w_x}{p}
            &= \pnorm{\sum_{i \in [s]} \sum_{x \in U} \sigma(i, x) \indicator{(i, h(i, x)) \in J} s_{i, h(i, x)} w_x}{p}
            \\&\le \sqrt{L \frac{p}{\log m/s}} \sqrt{\abs{J}} \norm{w}{2}
            \\&= \sqrt{L \frac{p}{\log m/s}} \sqrt{k + 1} \norm{w}{2}
    \end{align*}
    Combining this we get that
    $
        \pnorm{v'_h(\pi(k + 1))}{p}
            \le 4 e^3 \sqrt{L \frac{p}{\log m/s}} \frac{\norm{w}{2}}{\sqrt{k + 1}}
    $.
    We then obtain that,
    \begin{align*}
        &\pnorm{\sum_{i \in [s]} \sum_{y \in U} \overline{\sigma}(i, y) v''_h(i, \overline{h}(i, y)) w_y }{p}
            \\&\qquad\qquad\le \sqrt{2} \Psi_p\left( 4 e^3 L \sqrt{L \frac{p}{(k + 1) \log m/s}} \norm{w}{\infty} \norm{w}{2}, L^2 \frac{s^2}{m} \norm{w}{2}^4 \right)
            \\&\qquad\qquad\le \sqrt{2} \Psi_p\left( 4 e^3 L \sqrt{L \frac{\log m/p}{ \log m/s}} \norm{w}{2}^2, L^2 \frac{s^2}{m} \norm{w}{2}^4 \right)
    \end{align*}
    If $\log m/p \le 4 \log m/s$ then we get that,
    \begin{align*}
        \pnorm{\sum_{i \in [s]} \sum_{y \in U} \overline{\sigma}(i, y) v''_h(i, \overline{h}(i, y)) w_y }{p}
            &\le \sqrt{2} \Psi_p\left( 16 e^3 L^{3/2} \norm{w}{2}^2, L^2 \frac{s^2}{m} \norm{w}{2}^4 \right)
            \\&\le \Psi_p\left( 32 e^3 L^{3/2} \norm{w}{2}^2, 2 L^2 \frac{s^2}{m} \norm{w}{2}^4 \right)
    \end{align*}
    If $\log m/p > 4 \log m/s$ then $m/p > (m/s)^4$ which implies that $\frac{p m}{s^2} \le \sqrt{p}{m}$.
    Using this we get that $
        \frac{p 16 e^6 L^3 \frac{\log m/p}{\log m/s} \norm{w}{2}^4}{L^2 \frac{s^2}{m} \norm{w}{2}^4}
            \le 16 e^6 L \sqrt{p}{m}\log m/p
            \le 16 e^6 L
    $.
    Where we have used that $\sqrt{s} \log 1/x \le 1$.
    Now we use \Cref{lem:psi-properties} to get that
    \begin{align*}
        \pnorm{\sum_{i \in [s]} \sum_{y \in U} \overline{\sigma}(i, y) v''_h(i, \overline{h}(i, y)) w_y }{p}
            &\le \sqrt{2} \Psi_p\left( 32 e^3 L^{3/2} \norm{w}{2}^2, L^2 \frac{s^2}{m} \norm{w}{2}^4 \right)
            \\&\le \sqrt{2} \sqrt{p L^3 16 e^6 \frac{s^2}{m} \norm{w}{2}^4}
            \\&\le \Psi_p\left(8 e^3 L^{3/2} \norm{w}{2}, 32 e^6 L^3 \frac{s^2}{m} \norm{w}{2}^4\right)
    \end{align*}

    Now let us consider the case where $p < \log m$.
    By \cref{eq:sjl-max}, we get that
    \begin{align*}
        \pnorm{v'_h(\pi(k + 1))}{p}
            \le \pnorm{\max_{i \in [s], j \in [m/s]} \abs{\sum_{x \in U} \sigma(i, x) \indicator{h(i, x) = j} w_x}}{p}
            \le e \sqrt{ L \frac{\log m}{\log m/s}} \norm{w}{2}
    \end{align*}
    We then obtain that,
    \begin{align*}
        \pnorm{\sum_{i \in [s]} \sum_{y \in U} \overline{\sigma}(i, y) v''_h(i, \overline{h}(i, y)) w_y }{p}
            &\le \sqrt{2} \Psi_p\left( e L \sqrt{L \frac{\log m}{\log m/s}} \norm{w}{\infty} \norm{w}{2}, L \frac{s^2}{m} \norm{w}{2}^4 \right)
            \\&\le \sqrt{2} \Psi_p\left( e L \sqrt{L \frac{\log m}{\log m/s}} \norm{w}{2}^2, L \frac{s^2}{m} \norm{w}{2}^4 \right)
    \end{align*}
    If $s \le m^{3/4}$ then we get that $\log m \le 4 \log m/s$ and
    \begin{align*}
        \sqrt{2} \Psi_p\left( e L \sqrt{L \frac{\log m}{\log m/s}} \norm{w}{2}^2, L \frac{s^2}{m} \norm{w}{2}^4 \right)
            \le \sqrt{2} \Psi_p\left( 4 e L^{3/2} \norm{w}{2}^2, L \frac{s^2}{m} \norm{w}{2}^4 \right)
    \end{align*}
    as wanted.
    If $s \ge m^{3/4}$ then we get that
    $
        \frac{p e L^{3} \log m}{L^2 s^2/m}
            \le \frac{e L m \log^2 m}{s^2}
            \le \frac{e L \log^2 m}{m^{1/2}}
            \le 16/e L
    $,
    where we have used that $\sqrt{x} \log^2 1/x \le 16/e^2$.
    Again we use \Cref{lem:psi-properties} to get that
    \begin{align*}
        \sqrt{2} \Psi_p\left( 4 e L^{3/2} \norm{w}{2}^2, L \frac{s^2}{m} \norm{w}{2}^4 \right)
            &\le \sqrt{p 32/e L^3 \frac{s^2}{m}} \norm{w}{2}^2
            \\&\le \Psi_p\left( 8 L^{3/2} \norm{w}{2}^2, 32 L^{3} \frac{s^2}{m} \norm{w}{2}^4 \right) .
    \end{align*}

    Combining everything we get that
    \begin{align*}
        \pnorm{\sum_{i \in [s]} \sum_{y \in U} \overline{\sigma}(i, y) v''_h(i, \overline{h}(i, y)) w_y }{p}
            &\le \Psi_p\left(32 e^3 L^{3/2} \norm{w}{2}, 32 e^6 L^3 \frac{s^2}{m} \norm{w}{2}^4\right) , \\
        \pnorm{\sum_{i \in [s]} \sum_{y \in U} \overline{\sigma}(i, y) v'_h(i, \overline{h}(i, y)) w_y }{p}
            &\le 4 e^3 L \frac{p}{\log m/s} \norm{w}{2}^2 .
    \end{align*}
    Now we conclude that 
    \begin{align*}
        &\pnorm{\sum_{i \in [s], j \in [m/s]} v_h(i, j) v_{\bar{h}}(i, j) }{p}
            \\&\qquad\qquad\le \pnorm{\sum_{i \in [s]} \sum_{y \in U} \overline{\sigma}(i, y) v'_h(i, \overline{h}(i, y)) w_y }{p}
            + \pnorm{\sum_{i \in [s]} \sum_{y \in U} \overline{\sigma}(i, y) v''_h(i, \overline{h}(i, y)) w_y }{p}
            \\&\qquad\qquad\le \Psi_p\left( 32 e^3 L^{3/2} \norm{w}{2}^2, 32 e^6 L^{3} \norm{w}{2}^4 \right) + 4 e^3 L \frac{p}{\log m/s} \norm{w}{2}^2 .
    \end{align*}
    Thus finishing the proof.
\end{proof}

\section{Analysis of Mixed Tabulation Hashing}\label{sec:mixed-analysis}

The goal of this section is to prove our main result for Mixed Tabulation hashing.
The main new results is in \Cref{sec:mixed-decoupling} where we show that Mixed Tabulation has a decoupling-decomposition.
In \Cref{sec:mixed-concentration}, we show that Mixed Tabulation also has the strong concentration property.
The proofs in \Cref{sec:mixed-concentration} are modifications of those found in~\cite{Houen22} for Mixed Tabulation hashing.

\subsection{Notation and Previous Results for Tabulation Hashing}
We will need to reason about the individual characters of a key, $x \in \Sigma^c$, and for that, we need some notation.

\begin{definition}[Position characters]
    Let $\Sigma$ be an alphabet and $c > 0$ a positive integer.
    We call an element $(i, y) \in [c] \times \Sigma$ a position character of $\Sigma^c$.
\end{definition}

We will view a key $x = (y_0, \ldots, y_{c - 1}) \in \Sigma^c$ as a set of $c$ position characters, $\set{(0, y_0), \ldots (c - 1, y_{c - 1})} \subseteq [c] \times \Sigma$.
We define the sets $P_{partial}$ and $P_{prefix}$ which contains partial keys.
\begin{align*}
    P_{partial} &= \setbuilder{\set{(i, \alpha_i)}_{i \in I}}{\emptyset \neq I \subseteq [c], \forall i \in I : \alpha_i \in \Sigma} \\
    P_{prefix} &= \setbuilder{\set{(i, \alpha_i)}_{i \in I}}{\emptyset \neq I \subseteq [c],\text{ $I$ is an interval containing $0$, and } \forall i \in I : \alpha_i \in \Sigma}
\end{align*}
For a partial key, $\pi = \set{(i_0, \alpha_0), \ldots, (i_{k - 1}, \alpha_{k - 1})} \in P_{partial}$, we define $I_{\pi} = \set{i_0, \ldots, i_{k - 1}}$ to be the set of used positions.
We will also write $\abs{\pi} = \abs{I_{\pi}}$.

For a simple tabulation hashing function, $h \colon \Sigma^c \to R$, we will let $h_I$ for $I \subseteq [c]$ to be the hash function that only looks at the positions in $I$, i.e., $h_I(x) = \bigxor_{i \in I} T_i[x_i]$.
Similarly, for $h_2 \colon \Sigma^c \to \Sigma^d$ we will define $h^{I}$ for $I \subseteq [d]$ to be the partial key restricted to the positions $I$.

For $\pi \in P_{partial}$ and $\tau \in P_{prefix}$, we define the random set $U_{\pi, \tau}$ as follows
\begin{align*}
    U_{\pi, \tau} = \setbuilder{x \in \Sigma^c}{\pi \subseteq x, \tau \subseteq h_2(x)}
\end{align*}

We also need the following lemmas from~\cite{Houen22}.

\begin{lemma}\label{lem:fully-random-concentration}
    Let $h \colon U \to [m]$ be a uniformly random function, let $v \colon U \times [m] \to \R$ be a fixed value function, and
    assume that $\sum_{j \in [m]} v(x, j) = 0$ for all keys $x \in U$.
    Define the random variable $X_v = \sum_{x \in U} v(x, h(x))$.
    Then for all $p \ge 2$,
    \begin{align*}
        \pnorm{X_v}{p}
            \le L \Psi_p\left(M_v, \sigma_v^2 \right)
        \; ,
    \end{align*}
    where $L \le 16 e$ is a universal constant.
\end{lemma}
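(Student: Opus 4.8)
\textbf{The plan} is to recognize $X_v$ as a sum of independent, mean-zero, bounded random variables and then to use that the $p$-th moment of such a sum is governed by $\Psi_p$ evaluated at its two natural scale parameters: the per-summand bound and the total variance. \textbf{First}, write $X_v = \sum_{x \in U} Y_x$ with $Y_x = v(x, h(x))$. Since $h$ is uniformly random, the values $(h(x))_{x \in U}$ are independent and uniform on $[m]$, so the $Y_x$ are independent; the hypothesis $\sum_{j \in [m]} v(x,j) = 0$ gives $\ep{Y_x} = \tfrac1m\sum_j v(x,j) = 0$; we have $\abs{Y_x} \le \norm{v}{\infty} =: M_v$ almost surely; and $\sum_x \var{Y_x} = \tfrac1m\sum_{x,j} v(x,j)^2 =: \sigma_v^2$, which by independence is also $\var{X_v}$. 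So $M_v$ and $\sigma_v^2$ in the statement are precisely the almost-sure bound on a single summand and the total variance.

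\textbf{Next}, I would apply Bennett's inequality to these summands: via the standard bound $\ep{e^{\lambda Y_x}} \le \exp\!\big(\var{Y_x}\,(e^{\abs{\lambda} M_v} - 1 - \abs{\lambda} M_v)/M_v^2\big)$, multiplying over $x$, and optimizing the Chernoff exponent over $\lambda$, one gets, for every $t \ge 0$,
\[
    \prb{\abs{X_v} \ge t} \;\le\; 2\exp\!\left(-\frac{\sigma_v^2}{M_v^2}\,\phi\!\left(\frac{t M_v}{\sigma_v^2}\right)\right),
    \qquad \phi(u) := (1+u)\ln(1+u) - u ,
\]
which, up to the factor $2$, is exactly the Chernoff tail of the centered Poisson variable $M_v\big(\mathrm{Pois}(\sigma_v^2/M_v^2) - \sigma_v^2/M_v^2\big)$.

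\textbf{Then}, pass from the tail to the $p$-th moment by integration: $\pnorm{X_v}{p}^p = \int_0^\infty p t^{p-1}\prb{\abs{X_v} \ge t}\,dt$, and the substitution $t = (\sigma_v^2/M_v)\,u$ turns the bound above into $2\,M_v^p\,p\,(\sigma_v^2/M_v^2)^p \int_0^\infty u^{p-1} e^{-(\sigma_v^2/M_v^2)\phi(u)}\,du$, i.e.\ --- up to the factor $2$ --- into $M_v^p\,\ep{\abs{\mathrm{Pois}(\sigma_v^2/M_v^2) - \sigma_v^2/M_v^2}^p}$. By~\cite{Houen22} this last quantity is within an absolute constant factor of $\Psi_p(M_v, \sigma_v^2)^p$ --- which is exactly what the three-case function $\Psi_p$ was built to capture, using the scaling $\Psi_p(cM, c^2\sigma^2) = c\,\Psi_p(M,\sigma^2)$, so that $\Psi_p(M_v,\sigma_v^2) = M_v\,\Psi_p(1, \sigma_v^2/M_v^2)$ --- and carrying the factor $2$ from the Bennett step through this estimate yields $\pnorm{X_v}{p} \le 16e\,\Psi_p(M_v, \sigma_v^2)$. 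If one prefers a self-contained argument, the integral $\int_0^\infty u^{p-1} e^{-\lambda\phi(u)}\,du$ can be estimated directly by Laplace's method, splitting the range of $u$ at the saddle point and matching the pieces to the three branches of $\Psi_p$ (the near-deterministic regime $\sigma_v^2 \lesssim p M_v^2 e^{-p}$, the sub-Gaussian regime $\sigma_v^2 \gtrsim p M_v^2$, consistent with the lower bound $\tfrac12\sqrt p\,\sigma_v \le \Psi_p(M_v,\sigma_v^2)$ of \Cref{lem:psi-properties}, and the intermediate Poisson regime).

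\textbf{The main obstacle} I anticipate is purely quantitative: the structural part --- independence, Bennett, integrate --- is routine, but obtaining the clean constant $16e$ forces care both in the tail-to-moment conversion and in the case analysis of $\Psi_p$, whose three branches have genuinely different extremal configurations, with the near-deterministic small-variance branch and the edge behavior near $p = 2$ needing separate attention.
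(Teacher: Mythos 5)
There is a genuine gap, and it is structural rather than a matter of chasing constants: the route ``Bennett tail bound, then integrate the tail'' cannot produce the first (near-deterministic) branch of $\Psi_p$. Integrating $2\exp\bigl(-\tfrac{\sigma_v^2}{M_v^2}\phi(tM_v/\sigma_v^2)\bigr)$ against $pt^{p-1}$ always yields a quantity of order $\max\bigl\{\sqrt{p}\,\sigma_v,\ \tfrac{p}{\log(pM_v^2/\sigma_v^2)}M_v\bigr\}$, because for $t=\theta M_v$ with $0<\theta<1$ the Bennett bound is roughly $\lambda^{\theta}$ (where $\lambda=\sigma_v^2/M_v^2$), which vastly overstates the true tail when $\lambda$ is tiny. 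Concretely, take a single key $x_0$ with $v(x_0,j_0)=M$ and $v(x_0,j)=-M/(m-1)$ otherwise, and $m=e^{p^2}$, so $\lambda\approx e^{-p^2}$ and the first branch applies: $\Psi_p(M_v,\sigma_v^2)=(\lambda/p)^{1/p}M\approx e^{-p}M$, which indeed matches the true moment $\|X_v\|_p\approx\lambda^{1/p}M$. But $\int_0^\infty pt^{p-1}e^{-\lambda\phi(tM/\sigma^2)}\,dt$ is dominated by $t\approx \tfrac{p}{\log(1/\lambda)}M\approx M/p$ and is of order $(M/(ep))^p$, so your argument only proves $\|X_v\|_p\lesssim M/p$, which exceeds $16e\,\Psi_p(M_v,\sigma_v^2)$ by a factor growing like $e^{p}/p$. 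The intermediate identification of the integrated Chernoff bound with $M_v^p\,\mathrm{E}\abs{\mathrm{Pois}(\lambda)-\lambda}^p$ is also false in the direction you need: the Bennett/Chernoff tail is not tight for a Poisson-type variable below one unit ($M_v$) above its mean, and for small $\lambda$ the Poisson $p$-norm $\approx\lambda^{1/p}$ comes from the single atom at $1$ --- information an exponential tail estimate has already discarded --- so the integral upper-bounds the Poisson moment but is not comparable to it. For the same reason your self-contained backup (Laplace's method on the same integral) cannot rescue the first branch: the integral itself is genuinely too large there.

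Your reduction is fine up to that point: the summands $Y_x=v(x,h(x))$ are indeed independent, mean zero, bounded by $M_v$, with total variance $\sigma_v^2$. What has to change is the ``tail bound then integrate'' step: to capture the factor $(\sigma_v^2/(pM_v^2))^{1/p}$ one must work directly with moments of the sum of independent mean-zero variables --- e.g.\ a term-by-term comparison of $\mathrm{E}[X_v^p]$ with centered Poisson/Bernoulli-sum moments, or a Latała-type characterization of $\|\sum_x Y_x\|_p$ --- which is the kind of argument used in~\cite{Houen22}, from which this lemma is quoted (the present paper contains no proof of it, so there is no in-paper proof to compare against). Only the sub-Gaussian and the $\tfrac{p}{e\log(pM_v^2/\sigma_v^2)}M_v$ branches of $\Psi_p$ are reachable by Bennett-plus-integration.
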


\begin{lemma}\label{lem:fully-random-gauss}
    Let $h \colon U \to [m]$ be a uniformly random function, let $\eps \colon U \to \set{-1, 1}$ be a uniformly random sign function, and let $v : U \times [m] \to \R$ be a fixed value function.
    Then for all $p \ge 2$,
    \begin{align*}
        \pnorm{\sum_{x \in U} \eps(s) v(x, h(x))}{p}
            \le L \sqrt{\frac{p}{\log\!\left( m \right)}} \norm{v}{2}
        \; ,
    \end{align*}
    where $L \le e$ is a universal constant.
\end{lemma}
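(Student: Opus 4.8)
The plan is to treat $X := \sum_{x\in U}\eps(x)\,v(x,h(x))$ as a sum of independent, symmetric, mean-zero random variables $\xi_x := \eps(x)\,v(x,h(x))$ — independence and symmetry coming from the independent signs $\eps(x)$ — and to bound its $p$-th moment by a direct expansion. Since $\eps(x)$ is symmetric, $\ep{\xi_x^{2k-1}}=0$ and $\ep{\xi_x^{2k}}=\tfrac1m\sum_{j\in[m]}v(x,j)^{2k}$ for every integer $k\ge1$; in particular $\sum_x\ep{\xi_x^2}=\tfrac1m\norm v2^2$, and $|\xi_x|\le\norm v\infty$ almost surely. It suffices to prove the claim for an even integer $p=2q$ (rounding $p$ up to the next even integer costs only a constant factor), and then, using that a monomial in the expansion of $X^{2q}$ has nonzero expectation only if every key occurs an even number of times,
\begin{align*}
    \ep{X^{2q}}=\sum_{r=1}^{q}\frac{1}{r!}\sum_{\substack{m_1,\dots,m_r\ge1\\ m_1+\dots+m_r=q}}\binom{2q}{2m_1,\dots,2m_r}\sum_{\substack{x_1,\dots,x_r\in U\\\text{distinct}}}\ \prod_{i=1}^{r}\ep{\xi_{x_i}^{2m_i}}.
\end{align*}

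The estimate of this sum splits into the two ``shapes'' of the target bound. When $v$ is spread out (i.e.\ $\norm v\infty^2$ is small compared with $\norm v2^2/m$) or $p$ is small, the sum is dominated by the term $r=q$ (all keys distinct, all $m_i=1$), which is at most $\tfrac{(2q)!}{2^q q!}\bigl(\sum_x\ep{\xi_x^2}\bigr)^q=(2q-1)!!\,(\norm v2^2/m)^q$, of order $(2q/e)^q(\norm v2^2/m)^q$; taking $2q$-th roots this gives $\pnorm{X}{p}\lesssim\sqrt{p/m}\,\norm v2\le\sqrt{p/\log m}\,\norm v2$ since $m\ge\log m$. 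When $v$ is spiky and $p$ is large, the dominant contribution comes from small $r$ (the extreme being $r=1$, contributing $\tfrac1m\norm v{2q}^{2q}$), and the combinatorial factors are absorbed by the $m^{-r}$ coming from $\prod_i\ep{\xi_{x_i}^{2m_i}}$; this yields a bound of order $m^{-1/p}\norm vp\le\norm v2$, which is at most $\sqrt{p/\log m}\,\norm v2$ because $m^{-1/p}\le\sqrt{p/\log m}$ holds for every $p\ge2$ (squaring, this is $\tfrac1p\log m\le m^{2/p}$, i.e.\ $t\le e^{2t}$ with $t=\tfrac1p\log m$).

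The technical heart — and the step I expect to be the main obstacle — is controlling the intermediate $r$ and interpolating cleanly between these two regimes to land on the single bound $\sqrt{p/\log m}\,\norm v2$ with a clean constant. Replacing $\sum_x\ep{\xi_x^{2m}}=\tfrac1m\norm v{2m}^{2m}$ by either $\tfrac1m\norm v\infty^{2m-2}\norm v2^2$ or $\tfrac1m\norm v2^{2m}$ is too lossy (it discards a factor that is a power of $\norm v2/\norm v\infty$ in one direction and a factor $\sqrt{p}$ in the other); one must instead carry the mixed $\ell_{2m}$-moments of $v$ through the computation, which is precisely what $\Psi_p$ and its estimates in Lemma~\ref{lem:psi-properties} are built to handle. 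Concretely, I would first reduce to a column-centered value function by encoding the sign into one extra address bit — replace $(h,\eps)$ by a single uniform $\tilde h\colon U\to[2m]$, $\tilde h(x)=(h(x),b(x))$, and put $\tilde v(x,(j,b))=(1-2b)\,v(x,j)$, so that $\sum_{j'}\tilde v(x,j')=0$ and $\norm{\tilde v}2=\sqrt2\,\norm v2$ — and then either run the moment expansion above or invoke the refinement of Lemma~\ref{lem:fully-random-concentration} that tracks $\norm{\tilde v}{2k}$ rather than only $\norm{\tilde v}\infty$, choosing the split between the ``Gaussian'' and ``Poisson'' branches of $\Psi_p$ so that the Poisson branch never exceeds $\sqrt{p/\log m}\,\norm v2$. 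It is this final optimization, rather than any single named inequality, where the work lies.
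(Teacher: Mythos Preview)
The paper does not prove this lemma: it is quoted from \cite{Houen22} (see the sentence ``We also need the following lemmas from~\cite{Houen22}'' just before the statement). So there is no in-paper proof to compare your attempt against; I can only assess your sketch on its own.

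Your setup is correct. The variables $\xi_x=\eps(x)v(x,h(x))$ are independent and symmetric, the even-moment expansion you wrote is exactly right, and your analysis of the two extreme shapes $r=q$ and $r=1$ is accurate. You are also right that the sign-bit reduction to a centered value function $\tilde v$ lands you in the setting of Lemma~\ref{lem:fully-random-concentration}, but that this only yields $\Psi_p(\norm{v}{\infty},\norm{v}{2}^2/m)$, whose $\norm v\infty$-branch is not controlled by $\sqrt{p/\log m}\,\norm v2$; so invoking Lemma~\ref{lem:fully-random-concentration} as a black box does not close the argument.

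The gap is exactly where you flag it, and it is real rather than cosmetic. Your two ``easy'' regimes overlap only when $p\lesssim\log m$: for $p$ much larger than $\log m$, neither the $r=q$ contribution nor the $r=1$ contribution dominates, and the crude replacement $\sum_x\ep{\xi_x^{2m}}\le\frac1m\norm v2^{2m}$ (which collapses your expansion to a purely combinatorial sum $\sum_r N(q,r)m^{-r}$ with $N(q,r)$ the number of partitions of $[2q]$ into $r$ even blocks) is too lossy---that combinatorial sum equals the $2q$-th moment of a compound Poisson with rate $1/m$ and Rademacher increments, and for $q\asymp(\log m)^2$ a saddle-point estimate shows it already exceeds $(e^2q/\log m)^q$. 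So ``carry the mixed $\ell_{2m}$-moments'' is the right instinct, but you have not said how: the argument must retain the dependence on the shape of $(\norm{v[x]}{2})_x$ rather than collapse it to $\norm v2$, and that is precisely the content of the hypercontractive bound for Bernoulli--Rademacher sums that \cite{Houen22} establishes. Until you supply that step (or an equivalent one---e.g.\ moment-dominate each $\xi_x$ by $\eps_x'B_x\norm{v[x]}{2}$ with $B_x$ Bernoulli$(1/m)$ and then prove the weighted Bernoulli--Rademacher hypercontractivity directly), the proof is incomplete.
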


\begin{lemma}\label{lem:simple-concentration}
    Let $h \colon \Sigma^c \to [m]$ be a simple tabulation hash function, $v \colon \Sigma^{c} \times [m] \to \R$ a value function, and assume that $\sum_{j \in [m]} v(x, j) = 0$ for all keys $x \in U$.
    Then for all $p \ge 2$,
    \begin{align*} 
        \pnorm{\sum_{x \in \Sigma^c} v(x, h(x))}{p}
            \le L_1 \Psi_p\left(K_c \gamma_p^{c - 1} \norm{v}{\infty}, K_c \gamma_p^{c - 1} \frac{\norm{v}{2}^2}{m}  \right)
        \; ,
    \end{align*}
    where $K_c = \left(L_2 c \right)^{c - 1}$, $L_1$ and $L_2$ are universal constants, and 
    \begin{align*}
        \gamma_p = \frac{\max\!\set{\log(m) + \log\!\left(\tfrac{\sum_{x \in \Sigma^c} \norm{v[x]}{2}^2}{\max_{x \in \Sigma^c} \norm{v[x]}{2}^2}\right)/c, p}}
            {\log\!\left( e^2 m \left( \max_{x \in \Sigma^c} \frac{\norm{v[x]}{1}^2}{\norm{v[x]}{2}^2} \right)^{-1} \right)}
    \end{align*}
\end{lemma}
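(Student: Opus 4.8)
The plan is to prove \Cref{lem:simple-concentration} by induction on the number of characters $c$, at each step exposing the last table last. For the base case $c = 1$ the map $h = T_0$ is uniformly random $\Sigma \to [m]$, so $\sum_{x \in \Sigma} v(x, h(x))$ is a sum of independent variables each of mean zero by the hypothesis $\sum_{j} v(x,j) = 0$, and \Cref{lem:fully-random-concentration} gives the claim directly with $K_1 = 1$ and $\gamma_p^{0} = 1$.

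For the inductive step, $c \ge 2$, let $h_{[c-1]}(x) = \bigxor_{i \in [c-1]} T_i[x_i]$ be the simple tabulation hash on the first $c - 1$ characters and condition on it. Grouping keys by their last character $\alpha \in \Sigma$ and writing $\tilde v(\alpha, r) = \sum_{x\,:\,x_{c-1} = \alpha} v\big(x, h_{[c-1]}(x) \oplus r\big)$, one has $\sum_{x \in \Sigma^c} v(x, h(x)) = \sum_{\alpha \in \Sigma} \tilde v(\alpha, T_{c-1}[\alpha])$ with $\sum_{r} \tilde v(\alpha, r) = 0$ for each $\alpha$. Thus, conditionally on $h_{[c-1]}$, \Cref{lem:fully-random-concentration} applies to the \emph{random} value function $\tilde v$ and bounds the conditional $p$-norm by a universal constant times $\Psi_p(M_{\tilde v}, \sigma_{\tilde v}^2)$, with $M_{\tilde v}$, $\sigma_{\tilde v}^2$ the $\ell_\infty$- and $\ell_2^2/m$-type parameters of $\tilde v$. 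Taking the outer $p$-norm over $h_{[c-1]}$ and invoking \Cref{lem:fn-moment} (the function $\Psi_p$ satisfies the required scaling hypothesis with exponents $1$ and $2$ in its two arguments), it remains to bound $\pnorm{M_{\tilde v}}{p}$ and $\pnorm{\sigma_{\tilde v}^2}{p/2}$ and reassemble into a single $\Psi_p$.

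Both of these reduce to sums on $\le c - 1$ characters, handled by the induction hypothesis. Each entry $\tilde v(\alpha, r)$ is a sum $\sum_{x\,:\,x_{c-1}=\alpha} v'_{\alpha,r}(x, h_{[c-1]}(x))$ for the still mean-zero column-shifted value function $v'_{\alpha,r}$, so its $p$-norm is controlled by induction, and $M_{\tilde v} = \max_{\alpha, r}|\tilde v(\alpha,r)|$ costs only a $\binom{\abs{\Sigma}\, m}{1}^{1/p}$-type factor, absorbed into $\Psi_p$ by the net-and-union argument already used in the proof of \Cref{lem:main-technical}. For $\sigma_{\tilde v}^2 = \tfrac{1}{m}\sum_{\alpha}\sum_{r}\tilde v(\alpha,r)^2$, expanding the square splits it into a deterministic diagonal part equal to $\tfrac{1}{m}\norm{v}{2}^2$ (reproducing the ``expected'' term) plus an off-diagonal part that is a sum over ordered pairs of distinct keys $x \ne x'$ with $x_{c-1} = x'_{c-1}$ of the correlation $\sum_{j} v(x,j)\, v\big(x',\, j \oplus h_{[c-1]}(x) \oplus h_{[c-1]}(x')\big)$; since $h_{[c-1]}(x) \oplus h_{[c-1]}(x')$ is again distributed as a simple tabulation hash, with one table per position in which $x$ and $x'$ differ and hence on $\le c - 1$ characters, this off-diagonal part is again a (decoupled, bilinear) tabulation sum that the induction controls. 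Tracking the constants, one peel multiplies the leading factor $K_{c-1}$ by $\Theta(c)$, producing $K_c = (L_2 c)^{c-1}$, and raises the $\gamma_p$-exponent by one; the $\ell_1^2/\ell_2^2$ ratio in the denominator of $\gamma_p$ is precisely what caps how concentrated a column-shifted, squared row can become at a single column, while the $\log\!\big(\sum_x \norm{v[x]}{2}^2 / \max_x \norm{v[x]}{2}^2\big)/c$ correction in the numerator accounts for the growth of the effective key count over the $c$ levels.

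The main obstacle is the off-diagonal term: after one peel the object is no longer a \emph{linear} sum ``value function composed with a tabulation hash'' but a sum over pairs of keys, i.e.\ a decoupled chaos of order two, and naively iterating would double the order at each of the $c$ levels. The resolution, and the technical core behind~\cite{Houen22}, is to run the induction over the whole family of decoupled tabulation chaoses at once, by a joint recursion on the number of characters and the chaos order with a mean-zero condition imposed per free argument, so that peeling-and-squaring is absorbed into a single matching $\Psi_p$-estimate rather than blowing up; arranging the bookkeeping of $K_c$ and $\gamma_p$ — especially the $\ell_1/\ell_2$ and key-count corrections — to close under this double induction is where essentially all of the work lies.
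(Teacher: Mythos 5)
First, a point of comparison: the paper itself does not prove \Cref{lem:simple-concentration} at all --- it is imported verbatim from~\cite{Houen22}, alongside \Cref{lem:fully-random-concentration,lem:fn-moment,lem:simple-squares}. Your overall plan (peel off the last character table, condition on the rest, apply \Cref{lem:fully-random-concentration} to the resulting random value function $\tilde v$, then use \Cref{lem:fn-moment} to reduce to the $p$-norm of the max entry $M_{\tilde v}$ and the $p/2$-norm of $\sigma_{\tilde v}^2$) does mirror the strategy of that reference, so the architecture is sound.

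However, as a proof the proposal has a genuine gap, and you name it yourself without closing it: the control of $\sigma_{\tilde v}^2$, i.e.\ the conditional sum-of-squares term. After one peel this is an order-two tabulation chaos, and your text resolves it only by asserting that one should ``run the induction over the whole family of decoupled tabulation chaoses at once'' and that arranging the bookkeeping ``is where essentially all of the work lies.'' That is a description of the difficulty, not an argument; it is precisely the content of the separate squares estimate (\Cref{lem:simple-squares} in this paper, also taken from~\cite{Houen22}), and it is where the $\ell_1^2/\ell_2^2$ ratio in the denominator of $\gamma_p$ and the $(L_2 c)^{c-1}$ growth of $K_c$ actually originate, so it cannot be waved through if the stated constants are to come out. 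A second, smaller gap: the claim that $M_{\tilde v}$ ``costs only a union-bound factor absorbed into $\Psi_p$'' is only legitimate after raising the moment to $\bar p = \max\set{p, \log(m\abs{\Sigma})}$ (compare how the paper handles the analogous max in \Cref{lem:mixed-property-1} and \Cref{lem:mixed-property-2}); for small $p$ the factor $(m\abs{\Sigma})^{1/p}$ is not bounded, and tracking this switch to $\bar p$ is exactly how the $\max\set{\log(m) + \log(\cdot)/c,\, p}$ numerator of $\gamma_p$ arises. As written, your sketch neither performs the $\bar p$ step nor carries out the chaos-order induction, so the induction on $c$ does not close with the stated $K_c$ and $\gamma_p$.
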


\begin{lemma}\label{lem:simple-squares}
    Let $h \colon \Sigma^c \to [m]$ be a simple tabulation function, $\eps \colon \Sigma^c \to \set{-1, 1}$ be a simple tabulation sign function, and $v_i \colon \Sigma^c \times [m] \to \R$ be value function for $i \in [k]$.
    For every real number $p \ge 2$,
    \begin{align*}\begin{split}
        &\pnorm{\sum_{j \in [m]} \left( \sum_{x \in \Sigma^c} \eps(x) v(x, h(x) \xor j) \right)^2 }{p}
            \le \left( \frac{L c \max\!\set{p, \log(m)}}{\log\!\left( \frac{e^2 m \sum_{x \in \Sigma^c} \sum_{j \in [m]} v(x, j)^2}{\sum_{x \in \Sigma^c} (\sum_{j \in [m]} \abs{v(x, j)} )^2} \right)} \right)^{c} \norm{v}{2}^2 
        \; ,
    \end{split}\end{align*}
    where $L$ is a universal constant.
\end{lemma}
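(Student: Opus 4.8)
Write $Y\in\R^m$ for the vector with coordinates $Y_j=\sum_{x\in\Sigma^c}\eps(x)\,v(x,h(x)\xor j)$, so the quantity to bound is $\pnorm{\norm{Y}{2}^2}{p}=\pnorm{\norm{Y}{2}}{2p}^2$. The plan is an induction on the number of characters $c$, peeling off one position table at a time; the two ingredients are a base case for fully random $h,\eps$ and a ``level reduction'' that turns one inductive step into an instance of that base case. Throughout write $v[x]=(v(x,j))_{j\in[m]}$ and $\rho:=m\norm{v}{2}^2/\sum_x\norm{v[x]}{1}^2\ge 1$ (the inequality is Cauchy--Schwarz), so that the logarithm in the statement equals $\log(e^2\rho)\ge 2$.

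\emph{Base case $c=1$.} Here $h\colon\Sigma\to[m]$ and $\eps\colon\Sigma\to\set{-1,1}$ are fully random. Expanding the square, summing over $j$, and substituting $j\mapsto h(a)\xor j$ gives
\[
    \norm{Y}{2}^2=\norm{v}{2}^2+\sum_{a\neq b}\eps(a)\eps(b)\,C_{a,b}\big(h(a)\xor h(b)\big),
    \qquad C_{a,b}(\Delta):=\sum_j v(a,j)v(b,j\xor\Delta),
\]
so it suffices to bound the mean-zero Rademacher quadratic chaos $\Xi:=\eps^\top M\eps$ with $M_{a,b}=C_{a,b}(h(a)\xor h(b))\,\indicator{a\neq b}$ (symmetric, zero diagonal). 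Conditioning on $h$ and applying a standard moment bound for mean-zero Rademacher quadratic chaos (decoupling followed by hypercontractivity) together with $\norm{M}{\mathrm{op}}\le\norm{M}{F}$ yields $\pnorm{\Xi}{p}\le C\,p\,\pnorm{\norm{M}{F}}{p}$, so the task reduces to the $(p/2)$-norm of $\norm{M}{F}^2=\sum_{a\neq b}C_{a,b}(h(a)\xor h(b))^2$; Parseval over $\mathbb{Z}_2^{\log m}$ gives $\tfrac1m\sum_\Delta C_{a,b}(\Delta)^2\le\norm{v[a]}{2}^2\norm{v[b]}{1}^2/m$, hence $\ep{\norm{M}{F}^2}\le\norm{v}{2}^4/\rho$, and the higher moments of $\norm{M}{F}^2$ are controlled by \Cref{lem:fully-random-concentration} after a further conditioning on $h$ that turns the $h(a)\xor h(b)$ structure into sums over a single fully random function. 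Using $\log(e^2\rho)\le C'\sqrt\rho$ to pass from $1/\sqrt\rho$ to $1/\log(e^2\rho)$ then gives the claimed bound for $c=1$, the $\log m$ in $\max\set{p,\log m}$ arising precisely from the higher-moment control of $\norm{M}{F}$.

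\emph{Level reduction and induction.} For $c\ge2$, condition on the first $c-1$ position tables and sign tables, so $h_{[c-1]}$ (reading only the first $c-1$ characters) and $\eps_{[c-1]}$ are fixed, and group the keys by their last character: for $a\in\Sigma$ set $\phi_a(r):=\sum_{x:\,x_{c-1}=a}\eps_{[c-1]}(x)\,v(x,h_{[c-1]}(x)\xor r)$. With $r_a:=T_{c-1}[a]$ and $\sigma_a:=\Sigma_{c-1}[a]$, which are uniform and independent over $a\in\Sigma$, one checks $Y_j=\sum_{a\in\Sigma}\sigma_a\,\phi_a(j\xor r_a)$, so the conditional quantity is exactly the $c=1$ instance for the random value functions $(\phi_a)_{a\in\Sigma}$. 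Apply the base case conditionally and take the outer $p$-norm over the first $c-1$ characters. The key point is that $\sum_{a}\norm{\phi_a}{2}^2=\sum_a\sum_r\phi_a(r)^2$ is, by disjointness of the groups, exactly the quantity of the lemma for the $(c-1)$-character simple tabulation $h_{[c-1]}$ with $v$ restricted to the corresponding keys, so the induction hypothesis bounds its $p$-norm by $\big(\tfrac{Lc\max\set{p,\log m}}{\log(e^2\rho)}\big)^{c-1}\norm{v}{2}^2$; multiplying by the single extra factor produced by the base case closes the induction.

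\emph{Main obstacle.} The delicate part is not the bookkeeping but controlling, through the recursion, the spreadness parameter $\rho$ inside the logarithm: when the base case is applied to the random $(\phi_a)_{a\in\Sigma}$, the relevant ratio is $m\sum_a\norm{\phi_a}{2}^2/\sum_a\norm{\phi_a}{1}^2$, and one must show (in the appropriate $p$-norm) that this does not degrade below a fixed power of $\rho$, i.e.\ that the configurations in which the $\phi_a$ accumulate with little cancellation---inflating $\sum_a\norm{\phi_a}{1}^2$---are rare enough. A second, more routine difficulty is making the higher-moment bound on $\norm{M}{F}$ in the base case sharp enough that no spurious factors of $c$ or $\log m$ accumulate over the $c$ levels, so that the logarithms telescope into the clean $(\,\cdot\,)^c$ form of the statement.
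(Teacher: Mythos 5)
You should first note that the paper does not actually prove \Cref{lem:simple-squares}; it is imported verbatim from \cite{Houen22}, so there is no in-paper proof to compare against and I judge your argument on its own terms. Your base case is plausible in outline (chaos expansion, conditional Hanson--Wright, Young/Parseval giving $\ep{\norm{M}{F}^2}\le\norm{v}{2}^4/\rho$), but the step ``the higher moments of $\norm{M}{F}^2$ are controlled by \Cref{lem:fully-random-concentration} after a further conditioning'' is only an assertion: $\norm{M}{F}^2=\sum_{a\ne b}C_{a,b}(h(a)\xor h(b))^2$ is a quadratic functional of $h$ with non-negative, non-centered terms, whereas \Cref{lem:fully-random-concentration} requires a linear sum $\sum_x v(x,h(x))$ with $\sum_j v(x,j)=0$; you would need a decoupling for non-centered U-statistics plus a centering step, and this is also exactly where your claimed source of the $\log m$ in $\max\set{p,\log m}$ has to be justified rather than announced.

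The genuine gap is the induction step, and you name it yourself without resolving it. First, $\sum_a\norm{\phi_a}{2}^2=\sum_{a\in\Sigma}\sum_{r\in[m]}\phi_a(r)^2$ is \emph{not} literally the quantity of the lemma for $c-1$ characters: the outer index ranges over $\Sigma\times[m]$ while the hash shift acts only on the $[m]$-coordinate, so it is either a sum of $\abs{\Sigma}$ separate instances (one per group, each with its own spreadness ratio) or a generalized instance in which the hash range is a proper subgroup of the bin set; in the first reading you need a nontrivial combination inequality showing that the per-group factors $1/\log^{c-1}(e^2 m A_a/B_a)$ weighted by $A_a$ are dominated by the single global factor with ratio $m\sum_a A_a/\sum_a B_a$, and in the second you need a stronger induction hypothesis than the lemma you are proving. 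Second, and more seriously, the conditional application of the base case puts the \emph{random} ratio $m\sum_a\norm{\phi_a}{2}^2/\sum_a\norm{\phi_a}{1}^2$ inside the logarithm, and the only deterministic control, $\norm{\phi_a}{1}\le\sum_{x:\,x_{c-1}=a}\norm{v[x]}{1}$, can inflate the denominator by up to the group size $\abs{\Sigma}^{c-1}$ relative to $\sum_x\norm{v[x]}{1}^2$ --- enough to wipe out the logarithm entirely --- so one genuinely needs a probabilistic argument exploiting the cancellation in $\phi_a$, which you do not supply. Since the entire content of the lemma is the precise logarithmic denominator and its stability across the $c$ levels (note that the companion \Cref{lem:simple-concentration} from \cite{Houen22} carries extra tracking terms in $\gamma_p$ precisely because of this bookkeeping), deferring this point means the proposal is a reasonable plan rather than a proof.
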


\subsection{Decoupling}\label{sec:mixed-decoupling}

Before proving the decoupling-decomposition for Mixed Tabulation hashing, we a general decoupling lemma.
We start by stating a standard decoupling result as found in~\cite{hdpBook}.
\begin{lemma}\label{lem:decoupling-1d}
    Let $f_{i j} \colon T \times T \to \R$ be functions for $i, j \in [n]$.
    Let $(X_i)_{i \in [n]}$ be independent random variables with values in $T$ such that $\epcond{f_{ij}(X_i, X_j)}{X_j} = \epcond{f_{ij}(X_i, X_j)}{X_i} = 0$ for all $i \neq j \in [n]$.
    Then, for every convex function $F \colon \R \to \R$, one has
    \begin{align*}
        \ep{F(\sum_{i \neq j \in [n]} f_{ij}(X_i, X_j))}
            \le \ep{F(4 \sum_{i, j \in [n]} f_{ij}(X_i, X'_j))} \; ,
    \end{align*}
    where $(X'_i)_{i \in [n]}$ is an independent copy of $(X_i)_{i \in [n]}$.
\end{lemma}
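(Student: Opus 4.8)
The statement to prove is Lemma~\ref{lem:decoupling-1d}, the standard decoupling inequality for bilinear/quadratic forms in independent random variables, conditioned on having zero conditional means. The cleanest route is the classical \emph{random-selector} argument that goes back to Bourgain and was popularized by de la Peña and Giné; I would reproduce that proof rather than invent something new, since it is exactly what the statement asks for and the constant $4$ falls out naturally.

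First I would introduce i.i.d. selector variables $\delta_1,\dots,\delta_n$, each uniform on $\{0,1\}$, independent of everything else, and let $S = \{i : \delta_i = 1\}$. The key computation is that for any fixed pair $i \neq j$, we have $\prb{i \in S, j \notin S} = \tfrac14$, so that
\begin{align*}
    \E_{\delta}\!\left[ \sum_{i \in S, j \notin S} f_{ij}(X_i, X_j) \right]
        = \frac14 \sum_{i \neq j} f_{ij}(X_i, X_j) .
\end{align*}
Hence $\sum_{i \neq j} f_{ij}(X_i,X_j) = \E_\delta\big[ 4 \sum_{i \in S, j \notin S} f_{ij}(X_i,X_j)\big]$. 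Applying $F$, Jensen's inequality in the selector randomness, and then taking expectations over the $X_i$, we get
\begin{align*}
    \ep{F\Big(\sum_{i\neq j} f_{ij}(X_i,X_j)\Big)}
        \le \E_{X,\delta}\!\left[ F\Big( 4 \sum_{i \in S,\, j \notin S} f_{ij}(X_i, X_j) \Big) \right] .
\end{align*}
So it suffices to bound the right-hand side for \emph{every fixed} value of $S$, i.e. to show $\E_X[F(4\sum_{i\in S, j\notin S} f_{ij}(X_i,X_j))] \le \E_X[F(4\sum_{i,j} f_{ij}(X_i,X_j'))]$.

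For fixed $S$, conditioning on the variables indexed outside $S$, the inner sum $\sum_{i \in S, j \notin S} f_{ij}(X_i, X_j)$ already involves two disjoint blocks of variables, and the zero-conditional-mean hypothesis lets me insert the missing terms for free: adding $\sum_{i \in S, j \in S}$ and $\sum_{i \notin S}(\cdots)$ contributions and conditioning appropriately, each added term has conditional expectation $0$, so by Jensen (conditioning on $(X_i)_{i\in S}$ and on $(X'_j)_{j}$ separately, in two steps) the expectation of $F$ of the full double sum over all $i,j$ with a fresh independent copy $X'$ dominates. This is the step where the conditional-mean assumptions are essential, and it is the only slightly delicate point — one has to be careful about \emph{which} variables are being averaged out and in which order, so that each inserted block genuinely has mean zero given what is being held fixed. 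Everything else is bookkeeping. I expect this insertion/conditioning step to be the main obstacle, though it is standard; the rest (the selector computation and Jensen) is routine. Since the excerpt cites \cite{hdpBook} for this lemma, an acceptable alternative is simply to quote it, but I would include the short proof for completeness.
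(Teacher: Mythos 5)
Your route is the right one, and it is essentially the proof of the result the paper itself does not prove but only cites: the lemma is quoted from \cite{hdpBook}, and your selector argument (independent Bernoulli $\delta_i$, $\prb{i \in S, j \notin S} = 1/4$ for $i \neq j$, Jensen over the selector randomness, fixing a good realization of $S$, swapping the block $(X_j)_{j \notin S}$ for the corresponding coordinates of the independent copy, and then inserting the missing terms via conditional Jensen) is exactly the standard argument behind that citation, with the constant $4$ arising from the selector probability. The selector computation and the reduction to a fixed $S$ are correct as you describe them.

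The one step that does not go through as written is the insertion. The right-hand side sums over all $i, j \in [n]$, so among the terms you add are the diagonal ones $f_{ii}(X_i, X'_i)$, and the hypothesis gives zero conditional means only for $i \neq j$; nothing at all is assumed about $f_{ii}$. Your claim that ``each added term has conditional expectation $0$'' therefore fails for $i = j$, and this is not a cosmetic point: with $n = 1$ and $f_{00}$ a nonzero constant $c$, the left side is $F(0)$ and the right side is $F(4c)$, which violates the stated inequality for a suitable convex $F$, so some diagonal condition must be supplied. In the bilinear textbook setting $f_{ij}(x, y) = a_{ij} x y$ with independent mean-zero coordinates the diagonal terms automatically have zero conditional mean, which is why the issue is invisible in \cite{hdpBook}; in this paper's applications the diagonal functions vanish identically (the indicators force distinct characters), so the lemma is only ever invoked where the missing condition holds. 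To complete your proof you should either additionally require $\epcond{f_{ii}(X_i, X'_i)}{X_i} = \epcond{f_{ii}(X_i, X'_i)}{X'_i} = 0$ (or simply $f_{ii} \equiv 0$), or restrict the decoupled sum to $i \neq j$, and say which. Finally, be careful with your parenthetical about conditioning on $(X_i)_{i \in S}$ and on all of $(X'_j)_j$ ``in two steps'': if you condition on the whole of $X'$ first, the terms with $i, j \in S$ are already fixed and cannot be inserted with zero conditional mean; the clean choice is the single $\sigma$-algebra generated by $(X_i)_{i \in S}$ and $(X'_j)_{j \notin S}$, or, if you insist on two steps, insert the columns $j \in S$ first and only then the rows $i \notin S$.
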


We slightly generalize the decoupling result as follows.

\begin{lemma}\label{lem:decoupling-gen}
    Let $f_{i_0, \ldots, i_{c - 1}, j_0, \ldots, j_{c - 1}} \colon T^c \times T^c \to \R$ be functions for $i_0 \ldots, i_{c - 1}, j_0, \ldots, j_{c - 1} \in [n]$.
    Let $(X^{(k)}_i)_{i \in [n], k \in [c]}$ be independent random variables with values in $T$ such that 
    \begin{align}\label{eq:decoupling-gen-cond}
    \begin{split}            
        \epcond{f_{i_0, \ldots, i_{c - 1}, j_0, \ldots, j_{c - 1}}((X^{(0)}_{i_0}, \ldots, X^{(c - 1)}_{i_{c - 1}}), (X^{(0)}_{j_0}, \ldots, X^{(c - 1)}_{j_{c - 1}}))}{(X^{(k)}_{i_k})_{k \in [c] \setminus \set{l}}, (X^{(k)}_{j_k})_{k \in [c]}} 
        &= 0 \; , \\
        \epcond{f_{i_0, \ldots, i_{c - 1}, j_0, \ldots, j_{c - 1}}((X^{(0)}_{i_0}, \ldots, X^{(c - 1)}_{i_{c - 1}}), (X^{(0)}_{j_0}, \ldots, X^{(c - 1)}_{j_{c - 1}}))}{(X^{(k)}_{i_k})_{k \in [c]}, (X^{(k)}_{j_k})_{k \in [c] \setminus \set{l}}} 
        &= 0 \; ,
    \end{split}
    \end{align}
    for all $i_0 \ldots, i_{c - 1}, j_0, \ldots, j_{c - 1} \in [n]$ and all $l \in [c]$ with $i_k \neq j_k$ for all $k \in [c]$.
    Then, for every convex function $F \colon \R \to \R$, one has
    \begin{align}\label{eq:decoupling-gen}
    \begin{split}           
        &\ep{F\left(\sum_{\substack{i_0 \ldots, i_{c - 1}, j_0, \ldots, j_{c - 1} \in [n]\\ \forall k \in [c] : i_k \neq j_k}} f_{i_0, \ldots, i_{c - 1}, j_0, \ldots, j_{c - 1}}((X^{(0)}_{i_0}, \ldots, X^{(c - 1)}_{i_{c - 1}}), (X^{(0)}_{j_0}, \ldots, X^{(c - 1)}_{j_{c - 1}}))\right)}
            \\&\le \ep{F\left(4^c \sum_{i_0 \ldots, i_{c - 1}, j_0, \ldots, j_{c - 1} \in [n]} f_{i_0, \ldots, i_{c - 1}, j_0, \ldots, j_{c - 1}}((X^{(0)}_{i_0}, \ldots, X^{(c - 1)}_{i_{c - 1}}), (\hat{X}^{(0)}_{j_0}, \ldots, \hat{X}^{(c - 1)}_{j_{c - 1}}))\right)} \; ,
    \end{split}
    \end{align}
    where $(\hat{X}^{(k)}_i)_{i \in [n], k \in [c]}$ is an independent copy of $(X^{(k)}_i)_{i \in [n], k \in [c]}$.
\end{lemma}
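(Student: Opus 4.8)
The plan is to prove the $c$-fold decoupling inequality by induction on $c$, peeling off one ``coordinate pair'' at a time and invoking the standard one-dimensional decoupling result, Lemma~\ref{lem:decoupling-1d}, in each step. The base case $c=1$ is precisely Lemma~\ref{lem:decoupling-1d}: the hypothesis \eqref{eq:decoupling-gen-cond} with $c=1$ says exactly that $f_{i_0,j_0}$ has conditional mean zero in each argument whenever $i_0 \neq j_0$, and the conclusion \eqref{eq:decoupling-gen} with $c=1$ is the bound with constant $4 = 4^1$. For the inductive step, suppose the statement holds for $c-1$. The key idea is to fix all the randomness except the variables indexed by the last coordinate $k = c-1$, i.e.\ condition on $(X^{(k)}_i)_{i \in [n],\, k \in [c-1]}$, and apply the one-dimensional decoupling in the single remaining ``direction.''

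More precisely, first I would rewrite the left-hand side of \eqref{eq:decoupling-gen}: the sum over tuples with $i_k \neq j_k$ for all $k \in [c]$ can be organized as a sum over the ``head'' indices $(i_0,\dots,i_{c-2},j_0,\dots,j_{c-2})$ (with $i_k \neq j_k$ for $k \in [c-1]$) of an inner sum over $i_{c-1} \neq j_{c-1}$. Conditioning on all of $(X^{(k)}_\bullet)_{k \in [c-1]}$, define for each pair of last-coordinate values $(a,b) \in T \times T$ the function
\[
    g_{i_{c-1}, j_{c-1}}(a,b) \;=\; \sum_{\substack{i_0,\dots,i_{c-2},\,j_0,\dots,j_{c-2}\in[n]\\ \forall k\in[c-1]:\ i_k\neq j_k}} f_{i_0,\dots,i_{c-2},i_{c-1},\,j_0,\dots,j_{c-2},j_{c-1}}\bigl((X^{(0)}_{i_0},\dots,X^{(c-2)}_{i_{c-2}},a),(X^{(0)}_{j_0},\dots,X^{(c-2)}_{j_{c-2}},b)\bigr).
\]
The third hypothesis in \eqref{eq:decoupling-gen-cond}, applied with $l = c-1$, guarantees that $\epcond{g_{i_{c-1},j_{c-1}}(X^{(c-1)}_{i_{c-1}}, X^{(c-1)}_{j_{c-1}})}{X^{(c-1)}_{j_{c-1}}, (X^{(k)}_\bullet)_{k<c-1}} = 0$ and $\epcond{\cdots}{X^{(c-1)}_{i_{c-1}}, (X^{(k)}_\bullet)_{k<c-1}} = 0$ whenever $i_{c-1} \neq j_{c-1}$ — this is exactly the mean-zero condition needed to apply Lemma~\ref{lem:decoupling-1d} to the family $(g_{i_{c-1},j_{c-1}})$ with the convex function $F$. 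Doing so (and taking an outer expectation over $(X^{(k)}_\bullet)_{k<c-1}$, using that a mixture of the one-dimensional bounds still holds since $F$ is convex and the bound is pointwise in the conditioning) replaces the $i_{c-1}\neq j_{c-1}$ restriction by a free sum over $i_{c-1},j_{c-1}$ with $X^{(c-1)}_{j_{c-1}}$ swapped for its independent copy $\hat X^{(c-1)}_{j_{c-1}}$, at the cost of a factor $4$ inside $F$.

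Now the remaining expression has the last coordinate fully decoupled but still has the restriction $i_k \neq j_k$ for $k \in [c-1]$ on the first $c-1$ coordinates; here I would apply the induction hypothesis with the functions
\[
    \tilde f_{i_0,\dots,i_{c-2},j_0,\dots,j_{c-2}}\bigl((a_0,\dots,a_{c-2}),(b_0,\dots,b_{c-2})\bigr) \;=\; \sum_{i_{c-1},j_{c-1}\in[n]} 4\, f_{\dots}\bigl((a_0,\dots,a_{c-2},X^{(c-1)}_{i_{c-1}}),(b_0,\dots,b_{c-2},\hat X^{(c-1)}_{j_{c-1}})\bigr),
\]
treating the $(X^{(c-1)}_\bullet, \hat X^{(c-1)}_\bullet)$ as frozen (they are independent of the first $c-1$ coordinate families, so the induction hypothesis applies conditionally on them). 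The first two conditions in \eqref{eq:decoupling-gen-cond} for $l \in [c-1]$ give exactly the conditional mean-zero property that $\tilde f$ inherits. This yields another factor $4^{c-1}$ and introduces the independent copies $\hat X^{(k)}_\bullet$ for $k \in [c-1]$, so altogether the constant is $4 \cdot 4^{c-1} = 4^c$, which is the claimed bound.

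The main obstacle I anticipate is bookkeeping the conditioning structure carefully enough to justify each application of Lemma~\ref{lem:decoupling-1d}: one must check that after freezing the appropriate blocks of variables, the relevant mean-zero hypothesis in \eqref{eq:decoupling-gen-cond} really is the one being used, and that Lemma~\ref{lem:decoupling-1d} can be applied conditionally (i.e.\ integrate the pointwise conditional inequality against the distribution of the frozen variables, using convexity of $F$ and the tower property). A secondary subtlety is that in the inductive step the ``new'' functions $\tilde f$ depend on the frozen independent copies $\hat X^{(c-1)}_\bullet$, so one must make sure the induction hypothesis is being invoked with respect to the first $c-1$ coordinate families only, with everything else regarded as fixed — the independence of the families across coordinates is what makes this legitimate. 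None of these steps involves hard estimates; the entire difficulty is organizational.
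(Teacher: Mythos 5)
Your proposal is correct and follows essentially the same route as the paper's proof: induction on $c$, first decoupling the last coordinate by applying Lemma~\ref{lem:decoupling-1d} to the functions $g_{i_{c-1},j_{c-1}}$ conditioned on $(X^{(k)}_i)_{i\in[n],\,k\in[c-1]}$, and then invoking the induction hypothesis on the first $c-1$ coordinate families conditioned on $(X^{(c-1)}_i,\hat X^{(c-1)}_i)_{i\in[n]}$, giving the factor $4\cdot 4^{c-1}=4^c$. The only differences are cosmetic (e.g.\ absorbing the factor $4$ into your $\tilde f$, and a harmless miscount in referring to a ``third'' condition of \eqref{eq:decoupling-gen-cond}).
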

\begin{proof}
    The will by induction on $c$, and the induction start, $c = 1$, is exactly \Cref{lem:decoupling-1d}.
    Now assume that $c > 1$ and that \cref{eq:decoupling-gen} holds for $c - 1$.
    
    We start by defining the functions, $g_{i_{c - 1}, j_{c - 1}} \colon T \times T \to \R$ for all $i_{c - 1}, j_{c - 1} \in [n]$, by
    \begin{align*}
        g_{i_{c - 1}, j_{c - 1}}(x, y) = \sum_{\substack{i_0 \ldots, i_{c - 2}, j_0, \ldots, j_{c - 2} \in [n]\\ \forall k \in [c - 1] : i_k \neq j_k}} f_{i_0, \ldots, i_{c - 1}, j_0, \ldots, j_{c - 1}}((X^{(0)}_{i_0}, \ldots, X^{(c - 2)}_{i_{c - 2}} x), (X^{(0)}_{j_0}, \ldots, X^{(c - 2)}_{j_{c - 2}}, y)) .
    \end{align*}
    We note that \cref{eq:decoupling-gen-cond} implies that
    \begin{align*}
        \epcond{g_{i_{c - 1}, j_{c - 1}}(X^{(c - 1)}_{i_{c - 1}}, X^{(c - 1)}_{j_{c - 1}})}{X^{(c - 1)}_{i_{c - 1}}, (X^{(k)}_{i})_{i \in [n], k \in [c - 1]}}
            &= 0 \; , \\
        \epcond{g_{i_{c - 1}, j_{c - 1}}(X^{(c - 1)}_{i_{c - 1}}, X^{(c - 1)}_{j_{c - 1}})}{X^{(c - 1)}_{j_{c - 1}}, (X^{(k)}_{i})_{i \in [n], k \in [c - 1]}}
            &= 0 \; ,
    \end{align*}
    for all $i_{c - 1} \neq j_{c - 1} \in [n]$.
    This implies that we can use \Cref{lem:decoupling-1d} while conditioning on $(X^{(k)}_{i})_{i \in [n], k \in [c - 1]}$ to get that
    \begin{align*}
        &\ep{F\left(\sum_{\substack{i_0 \ldots, i_{c - 1}, j_0, \ldots, j_{c - 1} \in [n]\\ \forall k \in [c] : i_k \neq j_k}} f_{i_0, \ldots, i_{c - 1}, j_0, \ldots, j_{c - 1}}((X^{(0)}_{i_0}, \ldots, X^{(c - 1)}_{i_{c - 1}}), (X^{(0)}_{j_0}, \ldots, X^{(c - 1)}_{j_{c - 1}}))\right)}
            \\&= \ep{\epcond{F\left(\sum_{i_{c - 1} \neq j_{c - 1} \in [n]} g_{i_{c - 1}, j_{c - 1}}(X^{(c - 1)}_{i_{c - 1}}, X^{(c - 1)}_{c - 1})\right)}{(X^{(k)}_{i})_{i \in [n], k \in [c - 1]}}}
            \\&\le \ep{\epcond{F\left(4 \sum_{i_{c - 1}, j_{c - 1} \in [n]} g_{i_{c - 1}, j_{c - 1}}(X^{(c - 1)}_{i_{c - 1}}, \hat{X}^{(c - 1)}_{c - 1})\right)}{(X^{(k)}_{i})_{i \in [n], k \in [c - 1]}}}
            \\&= \ep{F\left(4 \sum_{\substack{i_0 \ldots, i_{c - 1}, j_0, \ldots, j_{c - 1} \in [n]\\ \forall k \in [c - 1] : i_k \neq j_k}} f_{i_0, \ldots, i_{c - 1}, j_0, \ldots, j_{c - 1}}((X^{(0)}_{i_0}, \ldots, X^{(c - 1)}_{i_{c - 1}}), (X^{(0)}_{j_0}, \ldots, X^{(c - 2)}_{j_{c - 2}}, \hat{X}^{(c - 1)}_{j_{c - 1}}))\right)} .
    \end{align*}

    We then define the functions, $h_{i_0, \ldots, i_{c - 2}, j_0, \ldots, j_{c - 2}} \colon T^{c -1} \times T^{c -1} \to \R$ for all $i_0 \ldots, i_{c - 2}, j_0, \ldots, j_{c - 2} \in [n]$, by
    \begin{align*}
        &h_{i_0, \ldots, i_{c - 2}, j_0, \ldots, j_{c - 2}}((x_0, \ldots, x_{c - 2}), (y_0, \ldots, x_{c - 2}))
            \\&\qquad\qquad= \sum_{i_{c - 1}, j_{c - 1} \in [n]} f_{i_0, \ldots, i_{c - 1}, j_0, \ldots, j_{c - 1}}((x_{0}, \ldots, x_{c - 2}, X^{(c - 1)}_{i_{c - 1}}), (y_0, \ldots, y_{c - 2}, \hat{X}^{(c -1)}_{j_{c - 1}})) \; .
    \end{align*}
    Now let $i_0 \ldots, i_{c - 2}, j_0, \ldots, j_{c - 2} \in [n]$ and $l \in [c - 1]$ with $i_k \neq j_k$ for all $k \in [c - 1]$ and let $H = h_{i_0, \ldots, i_{c - 2}, j_0, \ldots, j_{c - 2}}$.
    Then again by \cref{eq:decoupling-gen-cond} we get that for 
    \begin{align*}
        \epcond{H((X^{(0)}_{i_0}, \ldots, X^{(c - 2)}_{i_{c - 2}}), (X^{(0)}_{j_0}, \ldots, X^{(c - 2)}_{j_{c - 2}}))}{(X^{(k)}_{i_k})_{k \in [c - 1] \setminus \set{l}}, (X^{(k)}_{j_k})_{k \in [c - 1]}, (X^{(c - 1)}_i, \hat{X}^{(c - 1)}_i)_{i \in [n]}} 
        &= 0 , \\
        \epcond{H((X^{(0)}_{i_0}, \ldots, X^{(c - 2)}_{i_{c - 2}}), (X^{(0)}_{j_0}, \ldots, X^{(c - 2)}_{j_{c - 2}}))}{(X^{(k)}_{i_k})_{k \in [c - 1]}, (X^{(k)}_{j_k})_{k \in [c - 1] \setminus \set{l}}, (X^{(c - 1)}_i, \hat{X}^{(c - 1)}_i)_{i \in [n]}} 
        &= 0  .
    \end{align*}
    We can then use the induction hypothesis conditioned on $(X^{(c - 1)}_i, \hat{X}^{(c - 1)}_i)_{i \in [n]}$ to get that\footnote{In the following calculations, we use $\mathbf{i}$ and $\mathbf{j}$ as shortcuts for $i_0, \ldots, i_{c - 2}$ and $j_0, \ldots, j_{c - 2}$ respectively.}
    \begin{align*}
        &\ep{F\left(4 \sum_{\substack{i_0 \ldots, i_{c - 1}, j_0, \ldots, j_{c - 1} \in [n]\\ \forall k \in [c - 1] : i_k \neq j_k}} f_{i_0, \ldots, i_{c - 1}, j_0, \ldots, j_{c - 1}}((X^{(0)}_{i_0}, \ldots, X^{(c - 1)}_{i_{c - 1}}), (X^{(0)}_{j_0}, \ldots, X^{(c - 2)}_{j_{c - 2}}, \hat{X}^{(c - 1)}_{j_{c - 1}}))\right)} 
            \\&= \ep{\epcond{F\left(4\sum_{\substack{i_0 \ldots, i_{c - 2}, j_0, \ldots, j_{c - 2} \in [n]\\ \forall k \in [c - 1] : i_k \neq j_k}} h_{\mathbf{i}, \mathbf{j}}((X^{(0)}_{i_0}, \ldots, X^{(c - 2)}_{i_{c - 2}}), (X^{(0)}_{j_0}, \ldots, X^{(c - 2)}_{j_{c - 2}}))\right)}{(X^{(c - 1)}_i, \hat{X}^{(c - 1)}_i)_{i \in [n]}}}
            \\&\le \ep{\epcond{F\left(4^{c}\sum_{i_0 \ldots, i_{c - 2}, j_0, \ldots, j_{c - 2} \in [n]} h_{\mathbf{i}, \mathbf{j}}((X^{(0)}_{i_0}, \ldots, X^{(c - 2)}_{i_{c - 2}}), (\hat{X}^{(0)}_{j_0}, \ldots, \hat{X}^{(c - 2)}_{j_{c - 2}}))\right)}{(X^{(c - 1)}_i, \hat{X}^{(c - 1)}_i)_{i \in [n]}}}
            \\&= \ep{F\left(4^{c}\sum_{i_0 \ldots, i_{c - 1}, j_0, \ldots, j_{c - 1} \in [n]} f_{i_0, \ldots, i_{c - 1}, j_0, \ldots, j_{c - 1}}((X^{(0)}_{i_0}, \ldots, X^{(c - 1)}_{i_{c - 1}}), (\hat{X}^{(0)}_{j_0}, \ldots, \hat{X}^{(c - 1)}_{j_{c - 1}}))\right)} .
    \end{align*}
    This finishes the induction step and thus the proof.
\end{proof}

We are now ready to prove a decoupling lemma for Mixed Tabulation.

\begin{lemma}\label{lem:mixed-decoupling}
    \begin{align}\label{eq:mixed-decouple-main}
        &\pnorm{\sum_{i \in [s]} \sum_{x, y \in [u]} \indicator{h_2(x) \neq h_2(y)} \sigma(i, x) \sigma(i, y) \indicator{h(i, x) = h(i, y)} w_x w_y}{p}
            \\&\le \sum_{\pi \in P_{partial}} \sum_{\tau \in P_{prefix}} 4^{c + 1} \pnorm{\sum_{i \in [s]} \sum_{x, y \in U_{\pi, \tau}}
                \sigma_{I_\pi, \abs{\tau}}(i, x) \overline{\sigma_{I_\pi, \abs{\tau}}}(i, y) \indicator{h_{I_\pi, \abs{\tau}}(i, x) = \overline{h_{I_\pi, \abs{\tau}}}(i, y)} w_x w_y
            }{p}
    \end{align}
    where 
    \begin{align*}
        h_{I, r}(i, x)
            &= \left(h_{1, I}(x) \xor h_{3, \set{r}}(h^{\set{r}}_{2, I}(x) \xor i \xor h^{\set{r}}_{2, I^c}(x)) \right) \xor \left(h_{1, I^c}(x) \xor h_{3, \set{r}^c}( h^{\set{r}^c}_2(x) \xor i^{\otimes (d - 1)}) \right) \\
        \sigma_{I, r}(i, x)
                &= \left(\sigma_{1, I}(x) \xor \sigma_{3, \set{r}}(h^{\set{r}}_{2, I}(x) \xor i \xor h^{\set{r}}_{2, I^c}(x)) \right) \xor \left(\sigma_{1, I^c}(x) \xor \sigma_{3, \set{r}^c}( h^{\set{r}^c}_2(x) \xor i^{\otimes (d - 1)}) \right) \\
        \overline{h_{I, r}}(i, y)
            &= \left(\overline{h_{1, I}}(y) \xor \overline{h_{3, \set{r}}}\big(\overline{h^{\set{r}}_{2, I}}(y) \xor i \xor h^{\set{r}}_{2, I^c}(y))\big) \right) \xor \left(h_{1, I^c}(y) \xor h_{3, \set{r}^c}( h^{\set{r}^c}_2(y) \xor i^{\otimes (d - 1)}) \right) \\
        \overline{\sigma_{I, r}}(i, y)
            &= \left(\overline{\sigma_{1, I}}(y) \xor \overline{\sigma_{3, \set{r}}}\big(\overline{h^{\set{r}}_{2, I}}(y) \xor i \xor h^{\set{r}}_{2, I^c}(y))\big) \right) \xor \left(\sigma_{1, I^c}(y) \xor \sigma_{3, \set{r}^c}( h^{\set{r}^c}_2(y) \xor i^{\otimes (d - 1)}) \right)
    \end{align*}

    Futhermore, we have that
    \begin{align}\label{eq:mixed-decouple-multiplicity}
        \sum_{\pi} \sum_{\tau} \sum_{x \in U_{\pi, \tau}} w_x^2 \le d 2^c \norm{w}{2}^2
    \end{align}
    for all $w \in \R^u$.
\end{lemma}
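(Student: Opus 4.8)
The plan is to deduce the decoupling identity \eqref{eq:mixed-decouple-main} from the generalized decoupling inequality \Cref{lem:decoupling-gen}, after decomposing the double sum so as to land in an ``off-diagonal'' regime, and to prove the multiplicity bound \eqref{eq:mixed-decouple-multiplicity} by direct counting.

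The bound \eqref{eq:mixed-decouple-multiplicity} is the easy part. Fix a key $x \in \Sigma^c = [u]$. Then $x \in U_{\pi,\tau}$ precisely when $\pi$ is one of the $2^c - 1$ nonempty sub-keys of $x$ (nonempty subsets of its $c$ position characters) and $\tau$ is one of the $d$ nonempty prefixes of its derived key $h_2(x) \in \Sigma^d$. Hence $\sum_{\pi \in P_{partial}} \sum_{\tau \in P_{prefix}} \indicator{x \in U_{\pi,\tau}} = (2^c - 1)\, d \le d\, 2^c$, and multiplying by $w_x^2$ and summing over $x$ gives \eqref{eq:mixed-decouple-multiplicity}.

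For \eqref{eq:mixed-decouple-main}, the factor $\indicator{h_2(x) \neq h_2(y)}$ is what makes the argument possible: for every contributing pair $(x,y)$ there is a well-defined least coordinate $r = r(x,y) \in [d]$ at which the derived keys $h_2(x)$ and $h_2(y)$ disagree; write $\tau(x,y) \in P_{prefix}$ for their common length-$r$ prefix and $\pi(x,y) \in P_{partial}$ for the common value of $x$ and $y$ on the coordinate set $I(x,y) = \setbuilder{k \in [c]}{x_k = y_k}$ on which they agree. Splitting the sum over pairs according to $(\pi(x,y),\tau(x,y))$ writes it as a sum over $(\pi,\tau) \in P_{partial} \times P_{prefix}$ of the contribution of those pairs in $U_{\pi,\tau} \times U_{\pi,\tau}$ that agree exactly on $I_\pi$ and whose derived keys first differ at coordinate $|\tau|$; the triangle inequality for the $p$-norm bounds the $p$-norm of the whole by the sum of the $p$-norms of these pieces. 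Now fix $(\pi,\tau)$ and put $I = I_\pi$, $r = |\tau|$. On this piece the summand vanishes unless $x$ and $y$ differ in every coordinate outside $I$ and their derived keys differ at $r$ --- exactly the configuration required by \Cref{lem:decoupling-gen}. One applies it with the simple-tabulation tables as the independent ``coordinate'' variables, conditioning on the tables that are to remain shared; the conditional-mean-zero hypothesis \eqref{eq:decoupling-gen-cond} is supplied by the sign tables $\sigma_1$ at the key coordinates where $x$ and $y$ differ and by $\sigma_3$ at derived coordinate $r$, each contributing a fresh uniform $\pm 1$ factor to $\sigma(i,x)$, respectively $\sigma(i,y)$. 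This replaces the relevant tables read by $y$ with independent copies, at multiplicative cost at most $4^{c+1}$, and relaxes the summation to all of $U_{\pi,\tau} \times U_{\pi,\tau}$; unwinding $h(i,x) = h_1(x) \xor h_3(h_2(x) \xor (i,\dots,i))$ and sorting the barred and unbarred tables according to whether their position lies in $I$ or $I^c$ and whether the derived coordinate is $r$ identifies the decoupled summand with $\sigma_{I,r}(i,x)\,\overline{\sigma_{I,r}}(i,y)\,\indicator{h_{I,r}(i,x) = \overline{h_{I,r}}(i,y)}\, w_x w_y$ for exactly the $h_{I,r}, \sigma_{I,r}, \overline{h_{I,r}}, \overline{\sigma_{I,r}}$ in the statement. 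Agreement of $x$ and $y$ on $I$ is precisely what legitimizes barring $y$'s $h_1,\sigma_1$ tables there and the coordinate-$r$ part of its $h_2$ tables there: it injects fresh independent randomness rather than severing an existing coupling, which is what lets the $h_3$-layer decouple cleanly.

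The main obstacle is this last bookkeeping step. Because the derived characters are themselves simple-tabulation functions of the key characters, the $h_3$-layer and the $h_1$/$h_2$-layer cannot be decoupled in a single application of \Cref{lem:decoupling-gen}; they must be peeled off one after another, with the right conditioning at each stage, and one has to check that the conditional-mean-zero hypothesis genuinely holds at each stage and that the composite decoupled summand coincides, with no slack, with the explicit formulas for $h_{I,r}$ and $\sigma_{I,r}$. Once that is set up, everything else is the triangle inequality and \Cref{lem:decoupling-gen}.
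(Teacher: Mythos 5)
Your overall plan is the paper's: the multiplicity bound \eqref{eq:mixed-decouple-multiplicity} is proved by the same direct count, and for \eqref{eq:mixed-decouple-main} the paper likewise splits the pairs according to the first derived coordinate $r=\abs{\tau}$ at which $h_2(x)$ and $h_2(y)$ disagree (together with their common prefix $\tau$) and according to $\pi=x\cap y$, and then peels the two layers off one at a time: one application of \Cref{lem:decoupling-1d} to the entries of $h_{3,\set{r}},\sigma_{3,\set{r}}$ (factor $4$), then one application of \Cref{lem:decoupling-gen} to the per-position character tables (factor $4^{c}$), for a total loss of $4^{c+1}$. That you perform the split by $(\pi,\tau)$ up front while the paper interleaves the two splits with the two decoupling stages is immaterial.

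The genuine gap is in the step you yourself call the main obstacle, and the resolution you sketch for it is unsound. You assert that the decoupling substitutes, for $y$, independent copies of the tables at the positions in $I=I_\pi$ on which $x$ and $y$ \emph{agree}, justified by the remark that agreement ``injects fresh independent randomness rather than severing an existing coupling''. No decoupling statement in the paper does this: \Cref{lem:decoupling-gen} replaces $y$'s variables by independent copies exactly at the coordinates where the two index tuples \emph{differ}, and it is only there that the hypothesis \eqref{eq:decoupling-gen-cond} can hold --- the fresh sign entries of $\sigma_1$ at positions $k$ with $x_k\neq y_k$ (and, in the first stage, $\sigma_{3,\set{r}}$ evaluated at the two distinct derived characters) are what supply the conditional mean zero, as you yourself note. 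At a position $k\in I_\pi$ both keys read the same table cell, the product of the two sign reads is identically $1$, the mean-zero condition fails, and replacing a shared read by an independent copy is not a free inequality (it turns $\indicator{h(i,x)=h(i,y)}$, in which the common $h_{1,I}$-contribution cancels, into a genuinely different event). Accordingly the paper's proof applies \Cref{lem:decoupling-gen} to the variables indexed by $I_\pi^{c}$, so the independent copies enter on the $I_\pi^{c}$-parts of $h_1,h_2,\sigma_1$; your barring pattern appears to be read off the displayed formulas in the lemma statement, whose $I$/$I^{c}$ labelling is transposed relative to what the proof establishes, but as an argument your rationale does not stand in either direction. Since you additionally defer to ``one has to check'' both the verification of the mean-zero hypotheses at each stage and the bookkeeping of what happens to the indicators selecting the pairs with prescribed $(\pi,\tau)$ (which involve the very $h_2$-tables being decoupled in the second stage), the decisive step of the proof is not actually carried out.
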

\begin{proof}
    We will start by proving \cref{eq:mixed-decouple-multiplicity}.
    Let $x \in [u]$, now it is easy to see that there exists $2^c$, $\pi \in P_{partial}$ such that $\pi \subseteq x$ and similarly there exists $d$, $\tau \in P_{prefix}$ such that $\tau \subseteq h_2(x)$.
    We there have that there $d 2^c$ pairs $(\pi, \tau) \in P_{partial} \times P_{prefix}$ such that $x \in U_{\pi, \tau}$ and thus \cref{eq:mixed-decouple-multiplicity} follows.

    We will now focus on proving \cref{eq:mixed-decouple-main}.
    We start by defining the events $A_{x, y, \tau}$ for $x, y \in [u]$ and $\tau \in P_{prefix}$ by $\indicator{A_{x, y, \tau}} = \indicator{h_{2, [\abs{\tau}]}(x) = h_{2, [\abs{\tau}]}(y) = \tau}\!\indicator{h_{2, \set{\abs{\tau}}}(x) \neq h_{2, \set{\abs{\tau}}}(y)}$.
    We can then write
    \begin{align*}
        &\pnorm{\sum_{i \in [s]} \sum_{x, y \in [u]} \indicator{h_2(x) \neq h_2(y)} \sigma(i, x) \sigma(i, y) \indicator{h(i, x) = h(i, y)} w_x w_y}{p}
            \\&= \pnorm{\sum_{\tau \in P_{prefix}} \sum_{i \in [s]} \sum_{x, y \in [u]} \indicator{A_{x, y, \tau}} \sigma(i, x) \sigma(i, y) \indicator{h(i, x) = h(i, y)} w_x w_y}{p}
            \\&\le \sum_{\tau \in P_{prefix}} \pnorm{\sum_{i \in [s]} \sum_{x, y \in [u]} \indicator{A_{x, y, \tau}} \sigma(i, x) \sigma(i, y) \indicator{h(i, x) = h(i, y)} w_x w_y}{p}
    \end{align*}
    We then define the functions $h_{r}, \overline{h_{r}} \colon [s] \times [u] \to [m/s]$ and $\sigma_{r}, \overline{\sigma_{r}} \colon [s] \times [u]$ by
    \begin{align*}
        h_{r}(i, x)
            &= h_1(x) \xor h_{3, \set{r}}(h_2^{\set{r}}(x) \xor i) \xor h_{3, \set{r}^c}(h^{\set{r}^c}_2(x) \xor i^{\otimes (d - 1)}) \\
        \sigma_{r}(i, x)
                &= \sigma_{1}(x) \sigma_{3, \set{r}}(h_2^{\set{r}}(x) \xor i) \cdot \sigma_{3, \set{r}^c}( h^{\set{r}^c}_2(x) \xor i^{\otimes (d - 1)}) \\
        \overline{h_{r}}(i, y)
            &= h_1(y) \xor \overline{h_{3, \set{r}}}(h_2^{\set{r}}(y) \xor i) \xor h_{3, \set{r}^c}(h^{\set{r}^c}_2(y) \xor i^{\otimes (d - 1)}) \\
        \overline{\sigma_{r}}(i, y)
            &= \sigma_{1}(x) \overline{\sigma_{3, \set{r}}}(h_2^{\set{r}}(x) \xor i) \cdot \sigma_{3, \set{r}^c}( h^{\set{r}^c}_2(x) \xor i^{\otimes (d - 1)})
    \end{align*}
    For a fixed $\tau \in P_{prefix}$, we see that the expression $$\sum_{i \in [s]} \sum_{x, y \in [u]} \indicator{A_{x, y, \tau}} \sigma(i, x) \sigma(i, y) \indicator{h(i, x) = h(i, y)} w_x w_y$$ satisfies the requirement of \Cref{lem:decoupling-1d} for the random variables $(h_{3, \set{\abs{\tau}}}(\alpha), \sigma_{3, \set{\abs{\tau}}}(\alpha))_{\alpha \in \Sigma}$.
    So applying \Cref{lem:decoupling-1d} we get that
    \begin{align*}
        &\sum_{\tau \in P_{prefix}} \pnorm{\sum_{i \in [s]} \sum_{x, y \in [u]} \indicator{A_{x, y, \tau}} \sigma(i, x) \sigma(i, y) \indicator{h(i, x) = h(i, y)} w_x w_y}{p}
            \\&\le \sum_{\tau \in P_{prefix}} 4 \pnorm{\sum_{i \in [s]} \sum_{x, y \in [u]} \indicator{x \neq y} \sigma_{\abs{\tau}}(i, x) \overline{\sigma_{\abs{\tau}}}(i, y) \indicator{h_{\abs{\tau}}(i, x) = \overline{h_{\abs{\tau}}}(i, y)} w_x w_y}{p}
    \end{align*}
    We can now rewrite this expression as follows,
    \begin{align*}
        &\sum_{\tau \in P_{prefix}} 4 \pnorm{\sum_{i \in [s]} \sum_{x, y \in [u]} \indicator{x \neq y} \sigma_{\abs{\tau}}(i, x) \overline{\sigma_{\abs{\tau}}}(i, y) \indicator{h_{\abs{\tau}}(i, x) = \overline{h_{\abs{\tau}}}(i, y)} w_x w_y}{p}
            \\&= \sum_{\tau \in P_{prefix}} 4 \pnorm{\sum_{\pi \in P_{partial}} \sum_{i \in [s]} \sum_{x, y \in [u]} \indicator{x \cap y = \pi} \sigma_{\abs{\tau}}(i, x) \overline{\sigma_{\abs{\tau}}}(i, y) \indicator{h_{\abs{\tau}}(i, x) = \overline{h_{\abs{\tau}}}(i, y)} w_x w_y}{p}
            \\&= \sum_{\pi \in P_{partial}} \sum_{\tau \in P_{prefix}} 4 \pnorm{\sum_{i \in [s]} \sum_{x, y \in [u]} \indicator{x \cap y = \pi} \sigma_{\abs{\tau}}(i, x) \overline{\sigma_{\abs{\tau}}}(i, y) \indicator{h_{\abs{\tau}}(i, x) = \overline{h_{\abs{\tau}}}(i, y)} w_x w_y}{p}
    \end{align*}
    For fixed $\pi \in P_{partial}$ and $\tau \in P_{prefix}$, we see that the expression $$\sum_{i \in [s]} \sum_{x, y \in [u]} \indicator{x \cap y = \pi} \sigma_{\abs{\tau}}(i, x) \overline{\sigma_{\abs{\tau}}}(i, y) \indicator{h_{\abs{\tau}}(i, x) = \overline{h_{\abs{\tau}}}(i, y)} w_x w_y$$ satisfies the requirement of \Cref{lem:decoupling-gen} for the random variables $(h_{1, I_\pi^c}(x), h_{2, I_{\pi}^c}(x), \sigma_{1, I_{\pi}^c}(x), \sigma_{2, I_{\pi}^c}(x) )_{x \in [u]}$.
    So applying \Cref{lem:decoupling-1d} we get that
    \begin{align*}
        &\sum_{\pi \in P_{partial}} \sum_{\tau \in P_{prefix}} 4 \pnorm{\sum_{i \in [s]} \sum_{x, y \in [u]} \indicator{x \cap y = \pi} \sigma_{\abs{\tau}}(i, x) \overline{\sigma_{\abs{\tau}}}(i, y) \indicator{h_{\abs{\tau}}(i, x) = \overline{h_{\abs{\tau}}}(i, y)} w_x w_y}{p}
            \\&\le \sum_{\pi \in P_{partial}} \sum_{\tau \in P_{prefix}} 4^{c + 1} \pnorm{\sum_{i \in [s]} \sum_{x, y \in [u]} \indicator{x \cap y = \pi} \sigma_{I_{\pi}, \abs{\tau}}(i, x) \overline{\sigma_{I_{\pi}, \abs{\tau}}}(i, y) \indicator{h_{I_{\pi}, \abs{\tau}}(i, x) = \overline{h_{I_{\pi}, \abs{\tau}}}(i, y)} w_x w_y}{p}
    \end{align*}
    This finishes the proof of \cref{eq:mixed-decouple-main} and thus the lemma.
\end{proof}

Finally, we can prove that Mixed Tabulation has an $(p, 4^{c + 2}, 4 \eps^{-2} 3^c \frac{s}{m} \abs{\Sigma}^{-d})$-decoupling-decomposition.
\begin{lemma}
    \begin{align*}
        &\prb{\abs{Z} \ge \eps}
        \\&\le \left( \eps^{-1} \sum_{\pi \in P_{partial}} \sum_{\tau \in P_{prefix}} \frac{4^{c + 4}}{s} \pnorm{\sum_{i \in [s]} \sum_{x, y \in U_{\pi, \tau}}
        \sigma_{I_\pi, \abs{\tau}}(i, x) \overline{\sigma_{I_\pi, \abs{\tau}}}(i, y) \indicator{h_{I_\pi, \abs{\tau}}(i, x) = \overline{h_{I_\pi, \abs{\tau}}}(i, y)} w_x w_y
    }{p} \right)^p 
    \\&\qquad\qquad+ 4 \eps^{-2} 3^c \frac{s}{m} \abs{\Sigma}^{-d} ,
    \end{align*}
    where 
    \begin{align*}
        h_{I, r}(i, x)
            &= \left(h_{1, I}(x) \xor h_{3, \set{r}}(h^{\set{r}}_{2, I}(x) \xor i \xor h^{\set{r}}_{2, I^c}(x)) \right) \xor \left(h_{1, I^c}(x) \xor h_{3, \set{r}^c}( h^{\set{r}^c}_2(x) \xor i^{\otimes (d - 1)}) \right) \\
        \sigma_{I, r}(i, x)
                &= \left(\sigma_{1, I}(x) \xor \sigma_{3, \set{r}}(h^{\set{r}}_{2, I}(x) \xor i \xor h^{\set{r}}_{2, I^c}(x)) \right) \xor \left(\sigma_{1, I^c}(x) \xor \sigma_{3, \set{r}^c}( h^{\set{r}^c}_2(x) \xor i^{\otimes (d - 1)}) \right) \\
        \overline{h_{I, r}}(i, y)
            &= \left(\overline{h_{1, I}}(y) \xor \overline{h_{3, \set{r}}}\big(\overline{h^{\set{r}}_{2, I}}(y) \xor i \xor h^{\set{r}}_{2, I^c}(y))\big) \right) \xor \left(h_{1, I^c}(y) \xor h_{3, \set{r}^c}( h^{\set{r}^c}_2(y) \xor i^{\otimes (d - 1)}) \right) \\
        \overline{\sigma_{I, r}}(i, y)
            &= \left(\overline{\sigma_{1, I}}(y) \xor \overline{\sigma_{3, \set{r}}}\big(\overline{h^{\set{r}}_{2, I}}(y) \xor i \xor h^{\set{r}}_{2, I^c}(y))\big) \right) \xor \left(\sigma_{1, I^c}(y) \xor \sigma_{3, \set{r}^c}( h^{\set{r}^c}_2(y) \xor i^{\otimes (d - 1)}) \right)
    \end{align*}

    Futhermore, we have that
    \begin{align*}
        \sum_{\pi} \sum_{\tau} \sum_{x \in U_{\pi, \tau}} w_x^2 \le d 2^c \norm{w}{2}^2
    \end{align*}
    for all $w \in \R^u$.
\end{lemma}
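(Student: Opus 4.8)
The plan is to split the quadratic form defining $Z$ according to whether the two colliding keys agree on the derived characters produced by $h_2$. Write $Z = \tfrac1s\paren{Z_{\neq} + Z_{=}}$, where $Z_{\neq}$ is the contribution of the ordered pairs $x\neq y$ with $h_2(x)\neq h_2(y)$ and $Z_{=}$ is the contribution of the ordered pairs $x\neq y$ with $h_2(x)=h_2(y)$. By the triangle inequality and a union bound $\prb{\abs{Z}\ge\eps}\le\prb{\abs{Z_{\neq}}\ge s\eps/2}+\prb{\abs{Z_{=}}\ge s\eps/2}$; the first term will produce the $(\cdot)^p$ expression in the statement and the second will produce the additive error $4\eps^{-2}3^c\tfrac{s}{m}\abs{\Sigma}^{-d}$. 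For $Z_{\neq}$ I would apply Markov's inequality to the $p$-th moment and then invoke \Cref{lem:mixed-decoupling}: by \cref{eq:mixed-decouple-main},
\[
\prb{\abs{Z_{\neq}}\ge s\eps/2}\;\le\;\paren{\tfrac{2}{s\eps}}^{p}\pnorm{Z_{\neq}}{p}^{p}\;\le\;\paren{\eps^{-1}\sum_{\pi\in P_{partial}}\sum_{\tau\in P_{prefix}}\tfrac{2\cdot 4^{c+1}}{s}\,\pnorm{S_{\pi,\tau}}{p}}^{p},
\]
where $S_{\pi,\tau}$ denotes the decoupled double sum over $U_{\pi,\tau}$ on the right-hand side of \cref{eq:mixed-decouple-main}; since $2\cdot 4^{c+1}\le 4^{c+4}$ this is dominated by the claimed first term. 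The explicit expressions for $h_{I,r},\sigma_{I,r},\overline{h_{I,r}},\overline{\sigma_{I,r}}$ in the statement are copied verbatim from \Cref{lem:mixed-decoupling}.

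The core of the argument is $Z_{=}$. The key point is that $h'_2(i,x)=h_2(x)\xor(i,\ldots,i)$, so $h_2(x)=h_2(y)$ forces $h'_2(i,x)=h'_2(i,y)$ for every $i\in[s]$; hence $h(i,x)=h(i,y)$ iff $h_1(x)=h_1(y)$, and $\sigma(i,x)\sigma(i,y)=\sigma_1(x)\sigma_1(y)$ because the two $\sigma_3$-factors are equal and square to $1$. Thus every summand of $Z_{=}$ is independent of $i$, so $Z_{=}=s\,Z_2$ where
\[
Z_2\;=\;\sum_{x\neq y}\indicator{g(x)=g(y)}\,\sigma_1(x)\sigma_1(y)\,w_xw_y,\qquad g(x):=(h_1(x),h_2(x))\in[m/s]\times\Sigma^{d},
\]
and by Chebyshev's inequality $\prb{\abs{Z_{=}}\ge s\eps/2}=\prb{\abs{Z_2}\ge\eps/2}\le 4\eps^{-2}\,\ep{Z_2^2}$. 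Now $g$ is itself a simple tabulation hash function into a range of size $M':=\tfrac{m}{s}\abs{\Sigma}^{d}$ (pair the $i$-th tables of $h_1$ and $h_2$ coordinatewise), and $\sigma_1$ is an independent simple tabulation sign function, so $Z_2$ is precisely the error of a CountSketch instantiated with simple tabulation; in particular $\ep{Z_2}=0$. The whole task then reduces to the second-moment estimate $\ep{Z_2^2}\le 3^{c}\tfrac{s}{m}\abs{\Sigma}^{-d}\norm{w}{2}^{4}$, i.e.\ $\ep{Z_2^2}\le 3^{c}\tfrac{s}{m}\abs{\Sigma}^{-d}$ for a unit $w$, which yields $\prb{\abs{Z_{=}}\ge s\eps/2}\le 4\eps^{-2}3^{c}\tfrac{s}{m}\abs{\Sigma}^{-d}$ exactly as required.

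I expect this second-moment estimate to be the main obstacle. Expanding $\ep{Z_2^2}$ over ordered $4$-tuples $(x,y,x',y')$ with $x\neq y$ and $x'\neq y'$, and using the mutual independence of $\sigma_1,h_1,h_2$, only the \emph{sign-balanced} tuples (those in which every position character occurs an even number of times) survive, each weighted by $\ep{\indicator{h_1(x)=h_1(y)}\indicator{h_1(x')=h_1(y')}}\cdot\ep{\indicator{h_2(x)=h_2(y)}\indicator{h_2(x')=h_2(y')}}$. The diagonal tuples with $\set{x,y}=\set{x',y'}$ contribute at most $2\tfrac{s}{m}\abs{\Sigma}^{-d}\norm{w}{2}^{4}$, and for the remaining tuples one classifies them by how their four characters pair up at each of the $c$ positions (at most $3^c$ types) and argues that either the two collision events are linearly independent over $\mathbb{F}_2$ — so their joint probability is $M'^{-2}$, which absorbs the larger combinatorial count — or they are dependent, in which case the tuple is rigid enough that its total weight does not exceed $\norm{w}{2}^{4}$. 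This is the standard second-moment analysis of simple tabulation hashing (cf.\ \cite{Thorup12,Patrascu12,Houen22}); the splitting, the collapse $Z_{=}=sZ_2$, and the reduction to $\ep{Z_2^2}$ are routine, whereas making the $\mathbb{F}_2$-rank bookkeeping precise and extracting the clean constant $3^c$ is the one genuinely delicate step. Finally, the multiplicity bound $\sum_{\pi}\sum_{\tau}\sum_{x\in U_{\pi,\tau}}w_x^{2}\le d2^{c}\norm{w}{2}^{2}$ is exactly \cref{eq:mixed-decouple-multiplicity} of \Cref{lem:mixed-decoupling}, so nothing further is needed for it.
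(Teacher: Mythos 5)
Your proposal follows essentially the same route as the paper: the paper also splits off the pairs with $(h_1(x),h_2(x))=(h_1(y),h_2(y))$ (exactly your collapsed term $Z_{=}=sZ_2$), bounds that part by Chebyshev/Markov using the asserted second moment $3^c\tfrac{s}{m}\abs{\Sigma}^{-d}\norm{w}{2}^4$, and handles the $h_2(x)\neq h_2(y)$ part by Markov's inequality for the $p$-th moment followed by \Cref{lem:mixed-decoupling}, with the multiplicity bound taken directly from that lemma. The only difference is cosmetic: you spell out the collapse to a CountSketch-type error and sketch the second-moment computation, which the paper simply asserts.
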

\begin{proof}
    We use a union bound to obtain that,
    \begin{align*}
        \prb{\abs{Z} \ge \eps}
            &\le \prb{\abs{\sum_{i \in [s]} \sum_{x\neq y \in [u]} \sigma_1(x)\sigma_1(y) \indicator{(h_1(x), h_2(x)) = (h_1(y), h_2(y))} w_x w_y  } \ge \eps/2 }
            \\&+ \prb{\abs{\sum_{i \in [s]} \sum_{x, y \in [u]} \indicator{h_2(x) \neq h_2(y)} \sigma(i, x) \sigma(i, y) \indicator{h(i, x) = h(i, y)} w_x w_y} \ge \eps/2}
    \end{align*}
    For the first term, we use that $\ep{(\sum_{i \in [s]} \sum_{x\neq y \in [u]} \sigma_1(x)\sigma_1(y) \indicator{(h_1(x), h_2(x)) = (h_1(y), h_2(y)) w_x w_y}  )^2} \le 3^c \norm{w}{2}^4 \frac{s}{m} \abs{\Sigma}^{-d}$.
    So applying Markov's inequality, we get that,
    \begin{align*}
        \prb{\abs{\sum_{i \in [s]} \sum_{x\neq y \in [u]} \sigma_1(x)\sigma_1(y) \indicator{(h_1(x), h_2(x)) = (h_1(y), h_2(y))} w_x w_y  } \ge \eps/2 }
            \le 4 \eps^{-2} 3^c \frac{s}{m} \abs{\Sigma}^{-d} .
    \end{align*}
    For the second term we apply Markov's inequality for $p$ and then the result follows by \Cref{lem:mixed-decoupling}.
\end{proof}

\subsection{Concentration}\label{sec:mixed-concentration}
The goal of this section is to show that Mixed Tabulation is $(p, \gamma_p^c)$-strongly-concentrated where $\gamma_p = K c \max\!\set{1, \frac{p}{\log \abs{\Sigma}}}$ for a universal constant $K$.
This is done in next 3 lemmas that prove the individual parts of the strong concentration.
They all follow the same blueprint as the results in~\cite{Houen22}.

\begin{lemma}\label{lem:mixed-property-1}
    For any value function, $v \colon [s] \times [m/s]$, and any vector $w \in \R^U$ the following concentration results for Mixed Tabulation hashing holds
    \begin{align}\label{eq:mixed-concentration-poisson}
        \pnorm{\sum_{i \in [s]} \sum_{x \in U} \sigma(i, x) v(i, h(i, x)) w_x}{p}
            \le \Psi_p\left(\gamma_p^c \norm{v}{\infty} \norm{w}{\infty}, \gamma_p^c \frac{s}{m} norm{v}{2}^2 \norm{w}{2}^2 \right)
        ,
    \end{align}
    \begin{align}\label{eq:mixed-concentration-gaussian}
        \pnorm{\sum_{i \in [s]} \sum_{x \in U} \sigma(i, x) v(i, h(i, x)) w_x}{p}
            \le \sqrt{\gamma_p^c \frac{p}{\log m/s}} \norm{v}{2} \norm{w}{2}
        .
    \end{align}
    Here $\gamma_p = K c \max\!\set{1, \frac{p}{\log \abs{\Sigma}}}$ for a universal constant $K$.
\end{lemma}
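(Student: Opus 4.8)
The plan is to follow the blueprint of the tabulation concentration bounds in~\cite{Houen22}: reduce the Mixed Tabulation sum to a Simple Tabulation sum by conditioning on the derived characters, and then invoke \Cref{lem:simple-concentration} together with the Gaussian/hypercontractive Simple Tabulation bound (the analogue of \Cref{lem:fully-random-gauss}, proved along the same inductive lines). Throughout I may assume $s \le \abs{\Sigma}$, which is implicit in the definition of $h'_2$ in \Cref{sec:construction}.

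First I would condition on $h_2$. For each $(i,x) \in [s]\times U$, form the \emph{extended key} $\bar{x} \in \Sigma^{c+d}$ whose first $c$ characters are those of $x$ and whose last $d$ characters are $h_2(x) \xor (i,\ldots,i)$. The map $(i,x) \mapsto \bar{x}$ is injective: the first $c$ characters recover $x$, hence $h_2(x)$, and then the last $d$ characters recover $(i,\ldots,i)$, hence $i$. Under this reindexing $h(i,x)$ and $\sigma(i,x)$ become $\hat{h}(\bar{x})$ and $\hat{\sigma}(\bar{x})$, where $\hat{h},\hat{\sigma}$ are Simple Tabulation functions on $c+d$ characters (the first $c$ character tables inherited from $h_1,\sigma_1$, the last $d$ from $h_3,\sigma_3$). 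Writing $\hat{w}_{\bar{x}} = w_x$ and $\hat{v}(\bar{x}, j) = v(i, j)$ — both fixed once $h_2$ is fixed — the target sum becomes
\[
\sum_{i \in [s]} \sum_{x \in U} \sigma(i,x)\, v(i, h(i,x))\, w_x
= \sum_{\bar{x} \in \hat{U}} \hat{\sigma}(\bar{x})\, \hat{v}(\bar{x}, \hat{h}(\bar{x}))\, \hat{w}_{\bar{x}},
\]
a Simple Tabulation sum over the set $\hat{U} \subseteq \Sigma^{c+d}$ of $s\abs{U} \le \abs{\Sigma}^{c+1}$ distinct extended keys.

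Next I would absorb the sign and the weight into the value function: regard $(\hat{h},\hat{\sigma})$ as a single Simple Tabulation hash into $[m/s]\times\set{-1,1}$ and set $\bar{v}(\bar{x}, (j,b)) = b\,\hat{v}(\bar{x},j)\,\hat{w}_{\bar{x}}$, so that $\sum_{(j,b)} \bar{v}(\bar{x},(j,b)) = 0$ for every $\bar{x}$ and the sum above equals $\sum_{\bar{x}} \bar{v}(\bar{x}, (\hat{h}(\bar{x}),\hat{\sigma}(\bar{x})))$. A direct computation then gives $\norm{\bar{v}}{\infty} = \norm{v}{\infty}\norm{w}{\infty}$, $\norm{\bar{v}}{2}^2 = 2\,\norm{v}{2}^2\,\norm{w}{2}^2$ (the sum over $(i,x)$ factors), and, per key, $\norm{\bar{v}[\bar{x}]}{1}^2/\norm{\bar{v}[\bar{x}]}{2}^2 \le 2m/s$. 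Feeding these into \Cref{lem:simple-concentration} on $c+d$ characters gives \eqref{eq:mixed-concentration-poisson}, and feeding them into the Gaussian Simple Tabulation bound gives \eqref{eq:mixed-concentration-gaussian}; in both cases I would use $\log(m/s) \le \log u = c\log\abs{\Sigma}$ and $\abs{\hat{U}} \le \abs{\Sigma}^{c+1}$ to bound the $\gamma_p$ of \Cref{lem:simple-concentration} by $O\!\paren{\max\!\set{1, p/\log\abs{\Sigma}}}$, absorbing $K_{c+d} = (L_2(c+d))^{c+d-1}$ and the constant number $d$ of derived characters into the universal constant $K$ and the factor $c$ in $\gamma_p$. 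Since the resulting bounds are deterministic, the conditioning on $h_2$ is removed for free.

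I expect the last step — the parameter bookkeeping — to be the main obstacle. \Cref{lem:simple-concentration} is stated for arbitrary value functions with $\gamma_p$ depending delicately on the ratios $\norm{v[x]}{1}^2/\norm{v[x]}{2}^2$ and on $\abs{\hat{U}}$, so one really has to re-run the inductive argument of~\cite{Houen22} on the extended keys rather than black-box \Cref{lem:simple-concentration}, checking that the $d$ derived characters cost only a constant factor in the base of $\gamma_p^c$ without inflating the exponent. One also has to supply the Simple Tabulation Gaussian bound used above, which is not stated in the excerpt but follows from the same hypercontractive estimate underlying \Cref{lem:fully-random-gauss}.
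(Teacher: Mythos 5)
Your reduction itself (condition on $h_2$, re-index $(i,x)\mapsto\bar x\in\Sigma^{c+d}$, absorb the sign into the value function) is sound, but the step where you feed the resulting value function into \Cref{lem:simple-concentration} does not give the claimed parameters, and this is not mere bookkeeping. In the Sparse JL application the value function $v(i,\cdot)$ is dense over $[m/s]$, so by your own computation $\norm{\bar v[\bar x]}{1}^2/\norm{\bar v[\bar x]}{2}^2$ can be as large as $2m/s$, i.e.\ of order the range size. Then the denominator in the $\gamma_p$ of \Cref{lem:simple-concentration}, $\log\bigl( e^2 \cdot (2m/s) \cdot (\max_{\bar x}\norm{\bar v[\bar x]}{1}^2/\norm{\bar v[\bar x]}{2}^2)^{-1}\bigr)$, collapses to $\log(e^2)=2$, so the lemma only yields $\gamma_p=\Theta(\max\set{p,\log(m/s)})$ raised to the power $c+d-1$ — not the claimed $Kc\max\set{1,p/\log\abs{\Sigma}}$ to the power $c$. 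When $p\le\log\abs{\Sigma}$ the target is a constant while your bound carries $\operatorname{poly}(\log m)$ factors, which would wreck the final requirements on $m$ and $s$. Your closing remark that one should "re-run the induction on the extended keys" points at the symptom but not the cure: once you have conditioned on $h_2$ and treat $\hat U$ as an arbitrary key set for a worst-case simple tabulation analysis, you have discarded exactly the structure (fresh, fully random lookups $h_3,\sigma_3$ on the derived characters, together with $s\le\sqrt{\abs{\Sigma}}$) that makes the stronger bound possible; for worst-case key sets and dense value functions the degraded $\gamma_p$ of \Cref{lem:simple-concentration} is essentially what the known simple tabulation analysis gives.

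The paper's proof avoids this by not collapsing everything into one simple tabulation function. It keeps the derived-character tables $(h_3,\sigma_3)$ as the \emph{outer} level and applies the fully random bounds \Cref{lem:fully-random-concentration} and \Cref{lem:fully-random-gauss} over that randomness (these carry no spread penalty), reducing the problem to controlling two random quantities of the inner simple tabulation randomness $(h_1,h_2,\sigma_1)$: a sum of squares, handled by \Cref{lem:simple-squares}, and a max-load term, handled by \Cref{lem:simple-concentration} at moment $\bar p=\max\set{p,\log(m/s\cdot\abs{\Sigma})}$, glued together with \Cref{lem:fn-moment} and a final case analysis using $s\le\sqrt{\abs{\Sigma}}$. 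Crucially, in those inner applications the range includes the derived-character coordinate, so the relevant spread-denominator is $\Omega(\log(\abs{\Sigma}/s))=\Omega(\log\abs{\Sigma})$ rather than $O(1)$, which is where the factor $\max\set{1,p/\log\abs{\Sigma}}$ comes from. Also note that your second inequality needs a simple-tabulation analogue of \Cref{lem:fully-random-gauss} with denominator $\log(m/s)$; this does not follow from the fully random statement (your $\hat\sigma$ is itself simple tabulation), and the paper instead obtains it by composing \Cref{lem:fully-random-gauss} over $(h_3,\sigma_3)$ with \Cref{lem:simple-squares}. To repair your argument you would essentially have to reintroduce this two-level structure, at which point you are reproducing the paper's proof.
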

\begin{proof}
    We start by rewriting the expression
    \begin{align*}
        \pnorm{\sum_{i \in [s]} \sum_{x \in U} \sigma(i, x) v(i, h(i, x)) w_x}{p}
            = \pnorm{\sum_{\alpha \in \Sigma} \sigma_2(\alpha) \sum_{i \in [s]} \sum_{x \in U} \sigma_1(x) v(i, h_1(x) \xor h_3(x)) \indicator{h_2(x) = \alpha \xor i} w_x }{p}
    \end{align*}
    We define the value function $v' \colon U \times (\Sigma \times [m/s]) \to \R$ by $v'(x, (i, j)) = v(i, j)\indicator{i \in [s]} w_x$.
    We can then write our expression as
    \begin{align*}
        &\pnorm{\sum_{\alpha \in \Sigma} \sigma_2(\alpha) \sum_{i \in [s]} \sum_{x \in U} \sigma_1(x) v(i, h_1(x) \xor h_3(x)) \indicator{h_2(x) = \alpha \xor i} w_x }{p}
            \\&= \pnorm{\sum_{\alpha \in \Sigma} \sigma_2(\alpha) \sum_{x \in U} v'(x, (\alpha \xor h_2(x), h_3(\alpha) \xor h_1(x))) }{p}
    \end{align*}

    Now we start by proving \cref{eq:mixed-concentration-gaussian}.
    We use \Cref{lem:fully-random-gauss} to get that
    \begin{align*}
        &\pnorm{\sum_{\alpha \in \Sigma} \sigma_2(\alpha) \sum_{x \in U} v'(x, (\alpha \xor h_2(x), h_3(\alpha) \xor h_1(x))) }{p}
            \\&\le e \sqrt{\frac{p}{\log m/s}}\pnorm{\sum_{\alpha \in \Sigma} \sum_{j \in [m/s]} \left(\sum_{x \in U}  v'(x, (\alpha \xor h_2(x), j \xor h_1(x))) \right)^2 }{p/2}^{1/2}
    \end{align*}
    Now we use \Cref{lem:simple-squares} to obtain that
    \begin{align}\label{eq:sum-squares}
        \begin{split}            
        \pnorm{\sum_{\alpha \in \Sigma} \sum_{j \in [m/s]} \left(\sum_{x \in U}  v'(x, (\alpha \xor h_2(x), j \xor h_1(x))) \right)^2 }{p/2}
            &\le \gamma_p^c \sum_{x \in U} \sum_{i \in \Sigma} \sum_{j \in [m/s]} v'(x, (i, j))^2
            \\&= \gamma_p^c \sum_{x \in U} \sum_{i \in \Sigma} \sum_{j \in [m/s]} v(i, j)^2\indicator{i \in [s]} w_x^2
            \\&= \gamma_p^c \norm{v}{2}^2 \norm{w}{2}^2
        \end{split}
    \end{align}
    This then give us that
    \begin{align*}
        &\pnorm{\sum_{\alpha \in \Sigma} \sigma_2(\alpha) \sum_{x \in U} v'(x, (\alpha \xor h_2(x), h_3(\alpha) \xor h_1(x))) }{p}
            \\&\le e \sqrt{\gamma_p^c \frac{p}{\log m/s}} \norm{v}{2} \norm{w}{2}
    \end{align*}
    This finishes the proof of \cref{eq:mixed-concentration-gaussian}.

    Now we focus on \cref{eq:mixed-concentration-poisson}.
    We use \Cref{lem:fully-random-concentration} to get that
    \begin{align*}
        &\pnorm{\sum_{\alpha \in \Sigma} \sigma_2(\alpha) \sum_{x \in U} v'(x, (\alpha \xor h_2(x), h_3(\alpha) \xor h_1(x))) }{p}
            \\&\le L \pnorm{\Psi_p\left( A, B \right)}{p}
    \end{align*}
    where
    \begin{align*}
        A &= \max_{\alpha \in \Sigma} \max_{j \in [m/s]} \abs{\sum_{x \in U} v'(x, (\alpha \xor h_2(x), h_3(\alpha) \xor h_1(x)))} \\
        B &= \sum_{\alpha \in \Sigma} \sum_{j \in [m/s]} \left(\sum_{x \in U}  v'(x, (\alpha \xor h_2(x), j \xor h_1(x))) \right)^2
    \end{align*}
    We apply \Cref{lem:fn-moment} to get that we just need to bound the two expressions
    \begin{align*}
        \pnorm{\max_{\alpha \in \Sigma} \max_{j \in [m/s]} \abs{\sum_{x \in U} v'(x, (\alpha \xor h_2(x), h_3(\alpha) \xor h_1(x)))}}{p} \\
        \pnorm{\sum_{\alpha \in \Sigma} \sum_{j \in [m/s]} \left(\sum_{x \in U}  v'(x, (\alpha \xor h_2(x), j \xor h_1(x))) \right)^2}{p/2}
    \end{align*}
    From \cref{eq:sum-squares} we have that
    \begin{align*}
        \pnorm{\sum_{\alpha \in \Sigma} \sum_{j \in [m/s]} \left(\sum_{x \in U}  v'(x, (\alpha \xor h_2(x), j \xor h_1(x))) \right)^2}{p/2}
            \le \gamma_p^c \norm{v}{2}^2 \norm{w}{2}^2
    \end{align*}
    Let $\bar{p} = \max\!\set{p, \log(m/s \cdot \abs{\Sigma}}$ and use \Cref{lem:simple-concentration} to get that
    \begin{align*}
        &\pnorm{\max_{\alpha \in \Sigma} \max_{j \in [m/s]} \abs{\sum_{x \in U} v'(x, (\alpha \xor h_2(x), j \xor h_1(x)))}}{p}
            \\&\le \pnorm{\max_{\alpha \in \Sigma} \max_{j \in [m/s]} \abs{\sum_{x \in U} v'(x, (\alpha \xor h_2(x), j \xor h_1(x)))}}{\bar{p}}
            \\&\le \left( \sum_{\alpha \in \Sigma} \sum_{j \in [m/s]} \pnorm{\sum_{x \in U} v'(x, (\alpha \xor h_2(x), j \xor h_1(x)))}{\bar{p}}^{1/\bar{p}} \right)^{1/\bar{p}}
            \\&\le e \Psi_{\bar{p}}\left(\gamma_p^c \max_{x \in U} \max_{i \in [s]} \max_{j \in [m/s]} \abs{v(i, j) w_x}, \gamma_p^c \frac{s}{m \abs{\Sigma}} \sum_{x \in U} \sum_{i \in [s]} \sum_{j \in [m/s]} v(i, j)^2 w_x^2  \right)
            \\&\le e \Psi_{\bar{p}}\left(\gamma_p^c \norm{v}{\infty} \norm{w}{\infty}, \gamma_p^c \frac{1}{\abs{\Sigma}} \norm{v}{2}^2 \norm{w}{2}^2\right)
    \end{align*}
    We then get that
    \begin{align*}
        &\pnorm{\sum_{\alpha \in \Sigma} \sigma_2(\alpha) \sum_{x \in U} v'(x, (\alpha \xor h_2(x), h_3(\alpha) \xor h_1(x))) }{p}
            \\&\le e \Psi_p\left(e \Psi_{\bar{p}}\left(\gamma_p^c \norm{v}{\infty} \norm{w}{\infty}, \gamma_p^c \frac{1}{\abs{\Sigma}} \norm{v}{2}^2 \norm{w}{2}^2\right) , \gamma_p^c \frac{s}{m} \norm{v}{2}^2 \norm{w}{2}^2 \right)
    \end{align*}
    Since $s \le \sqrt{\abs{\Sigma}}$, the same case analysis as in the proof of \cite[Theorem 7]{Houen22} give us that since 
    \begin{align*}
        &\pnorm{\sum_{\alpha \in \Sigma} \sigma_2(\alpha) \sum_{x \in U} v'(x, (\alpha \xor h_2(x), h_3(\alpha) \xor h_1(x))) }{p}
            \le \Psi_p\left( \gamma_p^c \norm{v}{\infty} \norm{w}{\infty}, \gamma_p^c \frac{s}{m} \norm{v}{2}^2 \norm{w}{2}^2 \right)
    \end{align*}
    This finishes the proof.
\end{proof}

\begin{lemma}\label{lem:mixed-property-2}
    For any vector $w \in \R^U$, the following concentration result holds for Mixed Tabulation hashing,
    \begin{align*}
        \pnorm{\max_{i \in [s], j \in [m/s]}
            \abs{\sum_{x \in U} \sigma(i, x) \indicator{h(i, x) = j} w_x}
        }{p}
            \le \sqrt{\gamma_p^c  \frac{\log m}{\log m/s}} \norm{w}{2}
    \end{align*}
    Here $\gamma_p = K c \max\!\set{1, \frac{p}{\log \abs{\Sigma}}}$ for a universal constant $K$.
\end{lemma}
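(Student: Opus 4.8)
The plan is to deduce \Cref{lem:mixed-property-2} directly from the Gaussian-type moment bound \cref{eq:mixed-concentration-gaussian} already established in \Cref{lem:mixed-property-1}, by the standard reduction from a maximum to a sum of $p$-th moments; all the real work (decoupling the derived characters, reducing to simple-tabulation sum-of-squares estimates) has already been done inside the proof of \Cref{lem:mixed-property-1}, so essentially nothing new is needed here. Write $S_{i,j} = \sum_{x \in U}\sigma(i,x)\indicator{h(i,x)=j}w_x$, so the quantity to bound is $\pnorm{\max_{i \in [s],\, j \in [m/s]}\abs{S_{i,j}}}{p}$. I would restrict attention to $p \le \log m$ (the regime in which \cref{eq:sjl-max} is invoked) and set $\bar p = \max\set{p,\log m}$; it is convenient to prove the slightly stronger bound $e\sqrt{\gamma_{\bar p}^{c}\,\bar p/\log(m/s)}\,\norm w 2$, which for $p\le\log m$ (so $\bar p=\log m$) reduces to the claimed inequality after absorbing the extra factor into the universal constant.

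First I would pass from the maximum to a single bin. Since there are exactly $m=s\cdot(m/s)$ pairs $(i,j)$ and $m^{1/\bar p}\le m^{1/\log m}=e$, monotonicity of $p$-norms together with $\max_k\abs{X_k}\le\big(\sum_k\abs{X_k}^{\bar p}\big)^{1/\bar p}$ gives
\begin{align*}
    \pnorm{\max_{i,j}\abs{S_{i,j}}}{p}
        \le \pnorm{\max_{i,j}\abs{S_{i,j}}}{\bar p}
        \le m^{1/\bar p}\max_{i,j}\pnorm{S_{i,j}}{\bar p}
        \le e\,\max_{i,j}\pnorm{S_{i,j}}{\bar p}.
\end{align*}
Next, for a fixed pair $(i,j)$ I would apply \cref{eq:mixed-concentration-gaussian} of \Cref{lem:mixed-property-1} at exponent $\bar p$ to the value function $v\colon[s]\times[m/s]\to\R$ defined by $v(i',j')=\indicator{(i',j')=(i,j)}$; then $\norm v 2=1$ and $\sum_{i'\in[s]}\sum_{x\in U}\sigma(i',x)v(i',h(i',x))w_x=S_{i,j}$, so the bound reads $\pnorm{S_{i,j}}{\bar p}\le\sqrt{\gamma_{\bar p}^{c}\,\bar p/\log(m/s)}\,\norm w 2$. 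Combining the two displays and specialising $\bar p=\log m$ then yields $\pnorm{\max_{i,j}\abs{S_{i,j}}}{p}\le e\sqrt{\gamma_{\log m}^{c}\,\log m/\log(m/s)}\,\norm w 2$.

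The only point I expect to need care — and the only place the explicit parameter assumptions enter — is the bookkeeping of the $\gamma$-factor: \cref{eq:mixed-concentration-gaussian} produces $\gamma_{\log m}=K' c\max\set{1,\log m/\log\abs\Sigma}$ for a universal $K'$, and to land inside the $\gamma_p^{c}$ of the statement one must check that under the standing size constraints of the Mixed Tabulation construction (in particular $\log m=O(\log\abs\Sigma)$, along with $s\le\sqrt{\abs\Sigma}$) one has $\gamma_{\log m}=O(\gamma_p)$ whenever $p\le\log m$, so that both this change of constant and the leftover factor $e$ from the max-to-bin step are swallowed by choosing the universal constant $K$ in the statement large enough; this is the same verification as in the corresponding step of~\cite{Houen22}. (The restriction $p\le\log m$ is genuine: for $p>\log m$ the exponent $\bar p$ cannot be lowered to $\log m$ in the bound, which is precisely why \cref{eq:sjl-max}, and hence this lemma, is only needed in that regime.)
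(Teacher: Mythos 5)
Your first two steps are fine and in fact coincide with the paper's opening moves (passing to exponent $\bar p=\log m$ and pulling the maximum over the $m$ bins inside at a cost $m^{1/\log m}=e$). The gap is in the step you yourself flag as "bookkeeping": it is not a routine constant-chase. Applying \cref{eq:mixed-concentration-gaussian} at exponent $\log m$ produces $\gamma_{\log m}^{c}$ with $\gamma_{\log m}=K'c\max\set{1,\log m/\log\abs{\Sigma}}$, and the paper nowhere assumes $\log m=O(\log\abs{\Sigma})$; the only standing relations are $m\le u=\abs{\Sigma}^{c}$ and $s\le\sqrt{\abs{\Sigma}}$ (which constrains $s$, not $m$). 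So in exactly the regime where this lemma is needed ($p\le\log m$, e.g.\ moderate $\delta$ and large $m$) you can only guarantee $\gamma_{\log m}\le O(c)\,\gamma_p$, and after raising to the $c$-th power your bound is weaker than the stated one by a factor as large as $c^{c}$. That loss cannot be absorbed into the universal constant $K$ in $\gamma_p=Kc\max\set{1,p/\log\abs{\Sigma}}$, and it would propagate into the $\gamma_p^{3c}$ requirements on $m$ and $s$ in \Cref{thm:mixed-main}. As written, your argument either proves a weaker lemma or needs the unstated extra hypothesis that $\log m=O(\log\abs{\Sigma})$ with an absolute constant.

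The paper's proof avoids this by being careful about \emph{which} randomness pays the inflated exponent. After expanding over the derived character, it conditions on $(h_1,\sigma_1,h_2)$ and applies the \emph{fully random} bound of \Cref{lem:fully-random-gauss} (absolute constant $\le e$, no $\gamma$ factor) at exponent $\log m$ using only the $(h_3,\sigma_3)$ randomness; it then observes that the residual conditional sum of squares is identical for every bin $(i,j)$, so the maximum disappears for free, and only at that point invokes the tabulation-sensitive \Cref{lem:simple-squares} at exponent $p/2$, whose $\max\set{p,\log(\cdot)}$-over-$\log(\cdot)$ structure keeps the per-character factor at $\gamma_p$ rather than $\gamma_{\log m}$. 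This separation — exponent inflation only on the fully random layer, tabulation concentration only at exponent $p$ — is the idea missing from your proposal, and it is what lets the paper reach $\gamma_p^{c}$ without any assumption tying $m$ to $\abs{\Sigma}$.
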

\begin{proof}
    We start by rewriting the expression,
    \begin{align*}
        &\pnorm{\max_{i \in [s], j \in [m/s]}
            \abs{\sum_{x \in U} \sigma(i, x) \indicator{h(i, x) = j} w_x}
        }{p}
            \\&= \pnorm{\max_{i \in [s], j \in [m/s]}
                \abs{\sum_{\alpha \in \Sigma} \sigma_3(\alpha) \sum_{x \in U} \sigma_1(x) \indicator{h_1(x) \xor h_3(\alpha) = j} \indicator{h_2(x) = \alpha \xor i} w_x}
            }{p} .
    \end{align*}
    Now we fix the randomness of $h_1, \sigma_1, h_2$ and use \Cref{lem:fully-random-gauss} to get that,
    \begin{align*}
        &\pnorm{\max_{i \in [s], j \in [m/s]}
                \abs{\sum_{\alpha \in \Sigma} \sigma_3(\alpha) \sum_{x \in U} \sigma_1(x) \indicator{h_1(x) \xor h_3(\alpha) = j} \indicator{h_2(x) = \alpha \xor i} w_x}
            }{p} 
            \\&\le \pnorm{\max_{i \in [s], j \in [m/s]}
                \abs{\sum_{\alpha \in \Sigma} \sigma_3(\alpha) \sum_{x \in U} \sigma_1(x) \indicator{h_1(x) \xor h_3(\alpha) = j} \indicator{h_2(x) = \alpha \xor i} w_x}
            }{\log m} .
            \\&\le e \max_{i \in [s], j \in [m/s]} \pnorm{
                \abs{\sum_{\alpha \in \Sigma} \sigma_3(\alpha) \sum_{x \in U} \sigma_1(x) \indicator{h_1(x) \xor h_3(\alpha) = j} \indicator{h_2(x) = \alpha \xor i} w_x}
            }{\log m}
            \\&\le e^2 \sqrt{\frac{\log m}{\log m/s}} \max_{i \in [s], j \in [m/s]} \sqrt{\sum_{\alpha \in \Sigma} \sum_{k \in [m/s]} \left( \sum_{x \in U} \sigma_1(x) \indicator{h_1(x) = j \xor k} \indicator{h_2(x) = \alpha \xor i} w_x \right)^2}
    \end{align*}
    We now note that the expression $\sqrt{\sum_{\alpha \in \Sigma} \sum_{k \in [m/s]} \left( \sum_{x \in U} \sigma_1(x) \indicator{h_1(x) = j \xor k} \indicator{h_2(x) = \alpha \xor i} w_x \right)^2}$ does not depend on $i$ and $j$, so might as well look at $i = j = 0$.
    We thus get that
    \begin{align*}
        &\pnorm{\max_{i \in [s], j \in [m/s]}
            \abs{\sum_{x \in U} \sigma(i, x) \indicator{h(i, x) = j} w_x}
        }{p}
            \\&\le e^2 \sqrt{\frac{\log m}{\log m/s}} \pnorm{\sum_{\alpha \in \Sigma} \sum_{k \in [m/s]} \left( \sum_{x \in U} \sigma_1(x) \indicator{h_1(x) = k} \indicator{h_2(x) = \alpha} w_x \right)^2}{p/2}^{1/2}
    \end{align*}
    We can then apply \Cref{lem:simple-squares} and obtain
    \begin{align*}
        &\pnorm{\max_{i \in [s], j \in [m/s]}
            \abs{\sum_{x \in U} \sigma(i, x) \indicator{h(i, x) = j} w_x}
        }{p}
            \\&\le \sqrt{\gamma_p^c  \frac{\log m}{\log m/s}} \norm{w}{2}
    \end{align*}
    This finishes the proof.
\end{proof}

\begin{lemma}\label{lem:mixed-property-3}
    For any vector $w \in \R^U$, the following concentration result holds for Mixed Tabulation hashing,
    \begin{align*}
        \pnorm{\sum_{i \in [s]} \sum_{j \in [m/s]} \left(\sum_{x \in U} \sigma(i, x) \indicator{h(i, x) = j} w_x \right)^2}{p}
            \le \gamma_p^c \max\!\set{s \norm{w}{2}^2, \frac{p}{\log m/s} \norm{w}{2}^2}
    \end{align*}
    Here $\gamma_p = K c \max\!\set{1, \frac{p}{\log \abs{\Sigma}}}$ for a universal constant $K$.
\end{lemma}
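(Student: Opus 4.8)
The plan is to follow the blueprint of \cite{Houen22} that underlies the proofs of \Cref{lem:mixed-property-1,lem:mixed-property-2}: peel off the outer simple-tabulation pair $(h_3,\sigma_3)$ by conditioning on $(h_1,\sigma_1,h_2)$, use that the derived keys $h_2(x)$ behave like fully random points so that $h_3$ acts essentially as a fully random hash, and reduce the estimate to a first-level sum of squared bucket loads that is controlled by \Cref{lem:simple-squares}. Write $v_h(i,j)=\sum_{x\in U}\sigma(i,x)\indicator{h(i,x)=j}w_x$, so the quantity to bound is $\pnorm{\norm{v_h}{2}^2}{p}$, where $v_h$ is viewed as a vector over $[s]\times[m/s]$.

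Using $h(i,x)=h_1(x)\xor h_3(h_2(x)\xor i^{\otimes d})$ and $\sigma(i,x)=\sigma_1(x)\,\sigma_3(h_2(x)\xor i^{\otimes d})$, group the contribution of $x$ to bucket $(i,j)$ by the derived value $\alpha=h_2(x)\xor i^{\otimes d}$ to get $v_h(i,j)=\sum_{\alpha\in\Sigma^d}\sigma_3(\alpha)\,\hat v_i(\alpha,\,j\xor h_3(\alpha))$, where $\hat v_i(\alpha,\ell)=\sum_{x:\,h_2(x)\xor i^{\otimes d}=\alpha}\sigma_1(x)w_x\indicator{h_1(x)=\ell}$ is determined by $(h_1,\sigma_1,h_2)$. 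The key observation is that the substitution $\alpha\mapsto\beta=\alpha\xor i^{\otimes d}$ turns $\hat v_i$ into the $i$-independent function $V(\beta,\ell)=\sum_{x:\,h_2(x)=\beta}\sigma_1(x)w_x\indicator{h_1(x)=\ell}$, whose square mass is $\norm{V}{2}^2=S:=\sum_{(\ell,\beta)}\bigl(\sum_{x:\,(h_1(x),h_2(x))=(\ell,\beta)}\sigma_1(x)w_x\bigr)^2$, i.e.\ the sum of squared first-level bucket loads. Equivalently, merging the index $(i,j)\in[s]\times[m/s]$ with the range, $\norm{v_h}{2}^2$ is the sum of squared bucket loads of the map $(i,x)\mapsto(i,h(i,x))$ on the extended key set $[s]\times U$, which is again a mixed tabulation hash (the extra ``$i$-character'' only carries deterministic tables). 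Conditioning on $(h_1,\sigma_1,h_2)$ and applying the fully-random sum-of-squares/concentration estimates (\Cref{lem:fully-random-concentration,lem:fully-random-gauss,lem:simple-squares}) to the $(h_3,\sigma_3)$-layer, one bounds $\pnorm{\norm{v_h}{2}^2}{p}$ by $s\norm{V}{2}^2$ up to an $i$-independent fluctuation; taking the outer $p$-norm over $(h_1,\sigma_1,h_2)$ then replaces $\norm{V}{2}^2=S$ by $\pnorm{S}{p}\le\gamma_p^c\norm{w}{2}^2$, which is \Cref{lem:simple-squares} applied to the simple tabulation $(h_1,h_2)\colon\Sigma^c\to[m/s]\times\Sigma^d$ with value function $v(x,(\ell,\beta))=w_x\indicator{(\ell,\beta)=0}$. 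This produces the $\gamma_p^c\,s\norm{w}{2}^2$ term.

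The $\tfrac{p}{\log(m/s)}\norm{w}{2}^2$ term comes from the mean-zero ``off-diagonal'' part of the $(h_3,\sigma_3)$-expansion: expanding each $v_h(i,j)^2$ in the $\sigma_3$-variables and summing over $j$ yields $\norm{v_h}{2}^2=sS+\sum_{i\in[s]}\sum_{\alpha\neq\alpha'}\sigma_3(\alpha)\sigma_3(\alpha')\sum_{j}\hat v_i(\alpha,\,j\xor h_3(\alpha))\,\hat v_i(\alpha',\,j\xor h_3(\alpha'))$, and bounding the $p$-norm of the second sum via the decoupling and chaos estimates for simple tabulation from \cite{Houen22}, together with the largest-bucket bound of \Cref{lem:mixed-property-2}, gives the ``Gaussian-regime'' contribution $\sqrt{\gamma_p^c\,\tfrac{p}{\log(m/s)}}\norm{w}{2}$ squared. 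I expect this to be the main obstacle: one must handle the $s$ copies $i$ simultaneously so that the fluctuation scales like $\tfrac{p}{\log(m/s)}$ rather than the $\tfrac{sp}{\log(m/s)}$ a naive triangle inequality over $i$ would give, and one must track the $\gamma_p$-exponents so that the inner ($h_3$) and outer ($(h_1,h_2)$) applications combine to exactly $\gamma_p^c$; both go through in the sparse Johnson--Lindenstrauss regime ($s\le\sqrt{\abs{\Sigma}}$, and $p=\log(1/\delta)$), following the same case analysis as in the corresponding theorem of \cite{Houen22}. Taking the maximum of the two regimes finishes the proof.
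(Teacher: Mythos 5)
Your overall decomposition matches the paper's: expanding the square, the pairs with $h_2(x)=h_2(y)$ contribute exactly $s$ times the sum of squared $(h_1,h_2)$-bucket loads (the $\sigma_3$- and $h_3$-contributions cancel on such pairs), and \Cref{lem:simple-squares} bounds this by $\gamma_p^c\, s\,\norm{w}{2}^2$, just as in the paper. The remaining work is the off-diagonal part over pairs with $h_2(x)\neq h_2(y)$, and that is precisely where your argument stops being a proof. You expand it as a second-order chaos in the $\sigma_3$-variables and then appeal to ``decoupling and chaos estimates for simple tabulation from \cite{Houen22}, together with the largest-bucket bound of \Cref{lem:mixed-property-2}'', explicitly flagging --- but not resolving --- the two real difficulties: avoiding a factor-$s$ loss when handling all repetitions $i\in[s]$ simultaneously, and keeping the total $\gamma_p$-exponent at $c$ after composing an inner ($h_3$) and an outer ($h_1,h_2$) estimate. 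Asserting that the same case analysis as in \cite{Houen22} goes through does not close this: that reference treats a single tabulation sum, not the coupled quadratic form $\sum_{i\in[s]}\sum_{x,y}\sigma(i,x)\sigma(i,y)\indicator{h(i,x)=h(i,y)}\indicator{h_2(x)\neq h_2(y)}w_xw_y$, and it is also unclear how \Cref{lem:mixed-property-2} would enter without paying exactly the factors you are worried about.

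The paper closes this step with a tool you did not invoke: the decoupling lemma for Mixed Tabulation, \Cref{lem:mixed-decoupling}. Applied to the $h_2(x)\neq h_2(y)$ part, it replaces the coupled quadratic form by decoupled forms over the sets $U_{\pi,\tau}$, each of which is again a (conditioned) Mixed Tabulation structure with at most $2c$ characters, so the already-established Gaussian-type bound \cref{eq:mixed-concentration-gaussian} of \Cref{lem:mixed-property-1} applies directly and yields $\gamma_p^c\sum_{x\in U_{\pi,\tau}}w_x^2$ per piece; the multiplicity bound \cref{eq:mixed-decouple-multiplicity} then sums the pieces to at most $4^{c+1}d\,2^c\,\gamma_p^c\norm{w}{2}^2$, a constant-factor term that is absorbed into the stated maximum. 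In particular, no fresh chaos analysis, no simultaneous-in-$i$ argument, and no use of \Cref{lem:mixed-property-2} are needed. The missing ingredient in your write-up is exactly this reuse of \Cref{lem:mixed-decoupling} (or a worked-out substitute for it); without it, the off-diagonal bound --- the crux of the lemma --- remains unproven.
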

\begin{proof}
    We start by rewriting the expression
    \begin{align*}
        &\pnorm{\sum_{i \in [s]} \sum_{j \in [m/s]} \left(\sum_{x \in U} \sigma(i, x) \indicator{h(i, x) = j} w_x \right)^2}{p}
            \\&\qquad\qquad= \pnorm{\sum_{i \in [s]} \sum_{x, y \in U} \sigma(i, x) \sigma(i, y) \indicator{h(i, x) = h(i, y)} w_x w_y}{p}
            \\&\qquad\qquad\le \pnorm{\sum_{i \in [s]} \sum_{x, y \in U} \sigma(i, x) \sigma(i, y) \indicator{(h_1(x), h_2(x)) = (h_1(y), h_2(y))} w_x w_y}{p}
                \\&\qquad\qquad\qquad+ \pnorm{\sum_{i \in [s]} \sum_{x, y \in U} \sigma(i, x) \sigma(i, y) \indicator{h(i, x) = h(i, y)} \indicator{h_2(x) \neq h_2(y)} w_x w_y}{p}
            \\&\qquad\qquad= s \pnorm{\sum_{z \in \Sigma^d} \sum_{j \in [m/s]} \left( \sum_{x \in U} \sigma_1(x) \indicator{(h_1(x), h_2(x)) = (j, z)} w_x \right)^2}{p}
                \\&\qquad\qquad\qquad+ \pnorm{\sum_{i \in [s]} \sum_{x, y \in U} \sigma(i, x) \sigma(i, y) \indicator{h(i, x) = h(i, y)} \indicator{h_2(x) \neq h_2(y)} w_x w_y}{p}
    \end{align*}
    We will bound each of the two expressions separately.
    We use \Cref{lem:simple-squares} to get that
    \begin{align*}
        s \pnorm{\sum_{z \in \Sigma^d} \sum_{j \in [m/s]} \left( \sum_{x \in U} \sigma_1(x) \indicator{(h_1(x), h_2(x)) = (j, z)} w_x \right)^2}{p}
            \le \gamma_p^c s \norm{w}{2}^2
    \end{align*}
    To bound the other expression, we use \Cref{lem:mixed-decoupling} to get that,
    \begin{align*}
        &\pnorm{\sum_{i \in [s]} \sum_{x, y \in U} \sigma(i, x) \sigma(i, y) \indicator{h(i, x) = h(i, y)} \indicator{h_2(x) \neq h_2(y)} w_x w_y}{p}
            \\&\le \sum_{\pi \in P_{partial}} \sum_{\tau \in P_{prefix}} 4^{c + 1} \pnorm{\sum_{i \in [s]} \sum_{x, y \in U_{\pi, \tau}} \sigma_{I_\pi, \abs{\tau}}(i, x) \overline{\sigma_{I_\pi, \abs{\tau}}}(i, y) \indicator{h_{I_\pi, \abs{\tau}}(i, x) = \overline{h_{I_\pi, \abs{\tau}}}(i, y)} w_x w_y}{p}
    \end{align*}
    For each $\pi \in P_{partial}$ and each $\tau \in P_{prefix}$ this corresponds to a Mixed Tabulation hash function with $c' \le 2c$ and $d' \le 2$.
    We can then use \cref{eq:mixed-concentration-gaussian} to get that
    \begin{align*}
        &\sum_{\pi \in P_{partial}} \sum_{\tau \in P_{prefix}} 4^{c + 1} \pnorm{\sum_{i \in [s]} \sum_{x, y \in U_{\pi, \tau}} \sigma_{I_\pi, \abs{\tau}}(i, x) \overline{\sigma_{I_\pi, \abs{\tau}}}(i, y) \indicator{h_{I_\pi, \abs{\tau}}(i, x) = \overline{h_{I_\pi, \abs{\tau}}}(i, y)} w_x w_y}{p}
            \\&\le \sum_{\pi \in P_{partial}} \sum_{\tau \in P_{prefix}} 4^{c + 1} \gamma_p^c \sum_{x \in U_{\pi, \tau}} w_x^2
    \end{align*}
    We then use \cref{eq:mixed-decouple-multiplicity} to get that,
    \begin{align*}
        \sum_{\pi \in P_{partial}} \sum_{\tau \in P_{prefix}} 4^{c + 1} \gamma_p^c \sum_{x \in U_{\pi, \tau}} w_x^2
            \le \gamma_p^c d 8^c \norm{w}{2}^2
    \end{align*}
    Combing the bounds we get that
    \begin{align*}
        \pnorm{\sum_{i \in [s]} \sum_{j \in [m/s]} \left(\sum_{x \in U} \sigma(i, x) \indicator{h(i, x) = j} w_x \right)^2}{p}
            &\le \gamma_p^c + \gamma_p^c d 8^c \norm{w}{2}^2
            \\&\le \gamma_p^c \max\!\set{s \norm{w}{2}^2, \frac{p}{\log m/s} \norm{w}{2}^2}
    \end{align*}
    as wanted, which finishes the proof.
\end{proof}

\bibliographystyle{plain} 
\bibliography{paper}



\end{document}